\documentclass[11pt]{article}

\usepackage[margin=1in]{geometry}
\usepackage[T1]{fontenc}
\usepackage[utf8]{inputenc}
\usepackage{lmodern}
\usepackage{amsmath,amssymb,amsfonts,amsthm,mathtools,bm}
\usepackage{booktabs}
\usepackage{enumitem}
\usepackage{natbib}
\usepackage[colorlinks=true,linkcolor=blue,citecolor=blue,urlcolor=blue]{hyperref}
\usepackage{threeparttable}
\usepackage{siunitx}
\usepackage{pdflscape}
\usepackage{longtable}
\usepackage{adjustbox}
\usepackage[table]{xcolor}
\usepackage{array}
\usepackage{placeins}
\usepackage{longtable}

\newcommand{\gain}[1]{\textsuperscript{\scriptsize\textcolor{teal!70!black}{\,SA$\downarrow$#1\%}}}
\newcommand{\dash}{\textemdash}
\newcommand{\SArow}{\rowcolor{blue!7}}
\newcommand{\msegain}[2]{\makebox[3.0em][r]{#1}\gain{#2}}
\newcommand{\mseplain}[1]{\makebox[3.0em][r]{#1}}
\newcommand{\msebest}[1]{\makebox[3.0em][r]{\textbf{#1}}}

\newcolumntype{L}[1]{>{\raggedright\arraybackslash}p{#1}}
\newcolumntype{C}[1]{>{\centering\arraybackslash}p{#1}}
\newcolumntype{R}[1]{>{\raggedleft\arraybackslash}p{#1}}

\newtheorem{definition}{Definition}[section]
\newtheorem{assumption}{Assumption}[section]
\newtheorem{lemma}{Lemma}[section]
\newtheorem{proposition}{Proposition}[section]
\newtheorem{theorem}{Theorem}[section]
\newtheorem{corollary}{Corollary}[section]

\newcommand{\ind}{\mathbf{1}}

\newcommand{\V}{\mathbb{V}}
\newcommand{\Cov}{\operatorname{Cov}}
\newcommand{\Rbb}{\mathbb{R}}
\newcommand{\cS}{\mathcal S}
\newcommand{\cT}{\mathcal T}
\newcommand{\cX}{\mathcal X}
\newcommand{\cY}{\mathcal Y}
\newcommand{\Pn}{\mathbb{P}_n}
\newcommand{\Pnr}[1]{\mathbb{P}_{n,#1}}
\newcommand{\norm}[1]{\left\lVert #1 \right\rVert}
\newcommand{\abs}[1]{\left\lvert #1 \right\rvert}

\newcommand{\dto}{\rightsquigarrow}

\newcommand{\argmin}{\operatorname*{arg\,min}}

\title{Efficient Transported Distributional and Quantile Treatment Effects\\
with Surrogate-Assisted Missing Primary Outcomes}
\author{Pengyun Wang\\
Data Science Institute, The University of Chicago}

\date{\small Working manuscript. Comments welcome.\\This version: \today}

\begin{document}
\maketitle

\begin{abstract}
We study target-population distributional and quantile treatment effects when a source study observes treatment and post-treatment surrogates for all source units but observes a long-run primary outcome only for a validation subset, while the target population contributes only baseline covariates. The target estimands are transported counterfactual distribution functions $\psi_a(y)=P(Y^a\le y\mid R=0)$, their quantiles $q_a(\tau)$, and the quantile treatment effect $\Delta(\tau)=q_1(\tau)-q_0(\tau)$. The surrogate is not treated as a replacement endpoint and no Prentice-type surrogacy condition is imposed. Instead, the surrogate is used only to improve efficiency under missing-at-random primary-outcome sampling. We derive the nonparametric efficient influence function, which has three orthogonal components corresponding to target covariate sampling, the source surrogate process, and missing primary outcomes. This yields a closed-form cross-fitted one-step estimator after nuisance estimation. We establish identification, the canonical gradient, exact drift identities, ratio-level robustness, pointwise and uniform asymptotic linearity for transported CDFs, Bahadur representations for quantiles under explicit local inverse-map conditions, high-level multiplier-bootstrap simultaneous bands under explicit estimated-process and density conditions, and quantile-specific efficiency gains from observing surrogates. We also give lower-level nuisance-rate verification for a deliberately restricted class of analyzable bounded finite-dimensional or finite-rank implementations based on sieve ridge regression, ridge logistic regression, calibrated density-ratio estimation, finite-rank kernel ridge regression, and isotonic projection under explicit grid, eigenvalue, source, and entropy conditions.
\end{abstract}

\noindent\textbf{Keywords:} causal inference; transportability; surrogate outcomes; missing data; distributional treatment effects; quantile treatment effects; efficient influence functions; doubly robust estimation; kernel ridge regression; entropy balancing.

\section{Introduction}

Long-run primary outcomes are often expensive, delayed, or observable only after linkage to administrative records, whereas short-run surrogate outcomes can be measured broadly. In a job-training study, for instance, treatment assignment, baseline covariates, program participation, and short-run earnings may be available for all source-study participants, while long-run earnings are observed only for a validation subset. Separately, a policymaker may have a large target-population covariate sample and wish to know how deploying the intervention would affect the long-run outcome distribution in that target population. Average treatment effects may be insufficient: an intervention can improve the lower tail while leaving the median or upper tail nearly unchanged.

We consider independent observations
\[
O=(R,X,R(A,S,M,MY)),
\]
where $R=1$ indexes a source study and $R=0$ indexes the target covariate sample. In the source, $A\in\{0,1\}$ is treatment, $S$ is a post-treatment surrogate observed for all source units, $Y$ is the long-run primary outcome, and $M$ indicates whether $Y$ is observed. In the target, only $X$ is observed. The target quantities are
\[
\psi_a(y)=P(Y^a\le y\mid R=0),\qquad
q_a(\tau)=\inf\{y:\psi_a(y)\ge \tau\},\qquad
\Delta(\tau)=q_1(\tau)-q_0(\tau).
\]

This paper has four theoretical contributions. First, we identify the transported treatment-specific CDF by combining conditional transportability of the treatment-specific surrogate-primary-outcome process with missing-at-random validation sampling of the primary outcome. Second, we derive the canonical gradient of this observed-data functional. The gradient has three orthogonal residual components: a target-covariate component, a source-surrogate-process component, and a primary-outcome-validation component. Third, the resulting one-step estimator satisfies an exact drift identity, yielding ratio-level robustness and product-rate conditions for nuisance estimation. Fourth, we show how the distributional theory induces pointwise and uniform quantile inference and a quantile-specific efficiency-gain formula for observing surrogates. The surrogate is used as auxiliary information; it is not treated as a replacement endpoint and no Prentice-type surrogacy condition is imposed.

The main practical object is the closed-form one-step estimator in Definition \ref{def:estimator}. Given nuisance estimates, it is only a sample average. A finite-grid monotone projection and quantile inversion complete the procedure. The proposed implementation is deliberately analyzable: the theory verifies a specific class of bounded finite-dimensional or finite-rank nuisance learners, including finite-rank kernel ridge or ridge-linear regressions, ridge logistic regressions, calibrated density-ratio estimators such as entropy balancing under an exponential-tilt model, and isotonic projection. These subproblems are normal-equation based or convex, but the lower-level verification is not a claim about arbitrary flexible convex learners.

The paper is organized as follows. Section \ref{sec:related} reviews the related literature. Section \ref{sec:setup} fixes notation, the observed-data structure, and the estimands. Section \ref{sec:theory} gives identification, efficient influence functions, one-step estimation, uniform inference, low-level nuisance verification, and efficiency-gain results. Section \ref{sec:experiments} outlines simulation designs. Section \ref{sec:realdata}  provide an empirical illustration using the ACTG 175 HIV clinical trial, where 20-week CD4/CD8 measurements serve as early surrogate biomarkers and 96-week CD4 count is the incompletely observed primary outcome \citep{hammer1996trial,actg175data}. Proofs are collected in the Appendix.

\section{Related Literature}\label{sec:related}

\paragraph{Surrogates, validation samples, and long-run outcomes.}
Classical surrogate-endpoint work asks when short-run variables can replace the primary outcome, often through conditions such as statistical surrogacy, principal surrogacy, or candidate-principal-surrogate criteria \citep{Prentice1989,BuyseMolenberghs1998,FrangakisRubin2002,GilbertHudgens2008,VanderWeele2013}. Modern long-term causal inference also uses surrogate or proxy information to connect short-run experiments with long-run outcomes \citep{AtheyChettyImbensKang2019,ChenRitzwoller2023,ImbensKallusMaoWang2022}. In contrast, our estimand remains the long-run primary outcome distribution. The surrogate is not an endpoint replacement and is not required to mediate the treatment effect. Our closest precursor is \citet{KallusMao2024Surrogates}, who study semiparametric efficiency for average treatment effects with limited primary outcomes and abundant surrogates. We extend this line from first moments to transported counterfactual CDFs and quantiles, and we characterize quantile-specific efficiency gains.

\paragraph{Quantile, distributional, and missing-data causal inference.}
Classical quantile regression and distributional treatment-effect analysis provide the statistical background for QTE inference \citep{KoenkerBassett1978,AbadieAngristImbens2002,ChernozhukovHansen2005,ChernozhukovFernandezValGalichon2010,ChernozhukovFernandezValMelly2013}. Efficient quantile treatment effect estimation under unconfoundedness was developed by \citet{Firpo2007}; extensions include unconditional QTE under endogeneity \citep{FrolichMelly2013}, efficient quantile estimation in missing-data models \citep{Diaz2017}, and localized debiased machine learning for QTEs and related estimands \citep{KallusMaoUehara2024LDML}. Distributional causal inference beyond means is also closely related to semiparametric counterfactual density estimation \citep{KennedyBalakrishnanWasserman2021}. Our paper uses a full-CDF and growing-grid route rather than the LDML localization route: the contribution is not localization, but the canonical-gradient and inference theory for a two-population surrogate-assisted validation-outcome model.

\paragraph{Semi-supervised inference and data fusion.}
Semi-supervised inference has developed efficiency gains from unlabeled covariate samples in regression and causal-effect problems \citep{ChakraborttyCai2018,ChakraborttyDaiTchetgen2024,ChakraborttyDaiCarroll2022}. Data-fusion methods for QTEs combine validation and auxiliary data under different missingness or confounding structures \citep{ZhangZhu2023FQTE}. These works are important neighboring problems, but our target sample is a target population covariate sample rather than merely a larger unlabeled sample from the same population, and our source sample contains post-treatment surrogates and validation-sampled primary outcomes. The target is the transported distribution $P(Y^a\le y\mid R=0)$, not a source-population QTE.

\paragraph{Transportability and target-population inference.}
Generalizability and transportability methods adjust for differences between the study population and a target population, mostly for average or other first-moment causal measures \citep{ColeStuart2010,DahabrehRobertson2019,TaoFu2019,LeeYangWang2022,BoughdiriBerenfeldJosseScornet2025}. The weighting ideas used in this literature are closely connected to propensity-score and semiparametric efficiency results for treatment-effect estimation \citep{RosenbaumRubin1983,Hahn1998,HiranoImbensRidder2003}. Our work is different in two respects. First, the target is a counterfactual distribution and its quantile function, so inference must pass through an inverse-CDF map. Second, source outcomes are not fully observed and post-treatment surrogates enter the efficiency calculation. The density-ratio and target-covariate component in our canonical gradient are therefore coupled with surrogate and validation-outcome residuals.

\paragraph{Semiparametric efficiency and orthogonal estimation.}
The technical tools build on semiparametric efficiency theory for missing and coarsened data \citep{RobinsRotnitzkyZhao1994,RobinsRotnitzky1995,Newey1994,BangRobins2005,vdLaanRobins2003,Tsiatis2006}, double/debiased machine learning and cross-fitting \citep{BelloniChernozhukovFernandezValHansen2017,ChernozhukovDML2018}, and empirical-process theory for both Donsker and high-dimensional suprema \citep{vdVW1996,ChernozhukovChetverikovKato2014}. Our exact drift identity is the analogue of an orthogonality calculation for the transported CDF functional; it gives product-rate sufficient conditions for nuisance learning and explains the ratio-level robustness of the estimator.

\section{Notation, Observed Data, and Estimands}\label{sec:setup}

All random variables are defined on a common probability space. The observed-data law is $P$ and expectation under $P$ is $E$. For $r\in\{0,1\}$, let
\[
\pi_r=P(R=r),\qquad E_r\{h(X)\}=E\{h(X)\mid R=r\},
\]
let $P_r$ denote the conditional law given $R=r$, and let $P_{X,r}$ denote the law of $X$ conditional on $R=r$. For any integrable function $f$, write $P_rf=E(f\mid R=r)$ and $Pf=E f$. For source conditional expectations, all conditioning events are implicitly intersected with $R=1$ unless $R$ is explicitly displayed. For a sample $O_1,\ldots,O_n$, write $n_r=\sum_{i=1}^n\ind(R_i=r)$ and define
\[
\Pnr{r} f=\frac{1}{n_r}\sum_{i:R_i=r}f(O_i),\qquad r=0,1.
\]
The unconditional empirical mean is $\Pn f=n^{-1}\sum_{i=1}^n f(O_i)$. Norms $\norm{f}_{Q,2}$ are $L_2(Q)$ norms. We write $\norm{h}_{1a,2}$ for the $L_2$ norm under the law of $(X,S)$ conditional on $(R=1,A=a)$.

\begin{definition}[Observed data]\label{def:obsdata}
The observed sigma-field is generated by
\[
O=(R,X,R(A,S,M,MY)),
\]
where $R\in\{0,1\}$, $X\in\cX$, $A\in\{0,1\}$, $S\in\cS$, $M\in\{0,1\}$, and $Y\in\Rbb$. Equivalently, when $R=0$ only $X$ is observed; when $R=1$, $A,S,M$ are observed and $Y$ is observed if and only if $M=1$. Symbols $A,S,M,Y$ may be assigned arbitrary placeholder values when they are unobserved; all formulas multiply them by the relevant observation indicators and hence depend only on the observed sigma-field.
\end{definition}

\begin{definition}[Potential outcomes and target estimands]\label{def:estimands}
Following the potential-outcomes convention for defining causal effects as comparisons across interventions on the same units \citep{Rubin1978,Rubin2005}, for $a\in\{0,1\}$, let $(S^a,Y^a)$ be the potential surrogate and primary outcome under treatment $a$. The target counterfactual distribution function, target counterfactual quantile, and target QTE are
\[
\psi_a(y)=P(Y^a\le y\mid R=0),\qquad
q_a(\tau)=\inf\{y:\psi_a(y)\ge\tau\},\qquad
\Delta(\tau)=q_1(\tau)-q_0(\tau).
\]
The distributional treatment effect is $\psi_1(y)-\psi_0(y)$.
\end{definition}

\begin{definition}[Nuisance functions]\label{def:nuisance}
For $a\in\{0,1\}$ and $y\in\Rbb$, define
\begin{align*}
m_a(y,x,s)&=P(Y\le y\mid R=1,A=a,X=x,S=s,M=1),\\
g_a(y,x)&=E\{m_a(y,X,S)\mid R=1,A=a,X=x\},\\
e_a(x)&=P(A=a\mid R=1,X=x),\\
\rho_a(x,s)&=P(M=1\mid R=1,A=a,X=x,S=s).
\end{align*}
If $P_{X,0}\ll P_{X,1}$, define
\[
\omega(x)=\frac{dP_{X,0}}{dP_{X,1}}(x).
\]
Let $Z_y=\ind(Y\le y)$. The nuisance collection is $\eta=(m,g,e,\rho,\omega)$.
\end{definition}

\section{Theory: Identification, Efficiency, and Inference}\label{sec:theory}

The theory proceeds in four steps.  First, the causal and sampling assumptions reduce the target counterfactual CDF to a regression functional of the observed-data law.  Second, the canonical gradient of this functional is derived under the observed-data model.  Third, the same gradient motivates a cross-fitted one-step estimator and gives the asymptotic expansion needed for CDF and quantile inference.  Fourth, the high-level nuisance conditions are connected to specific bounded convex or finite-rank implementations, and the role of observing the surrogate is quantified through an efficiency comparison.

\subsection{Identification Assumptions}\label{sec:identification}

The first issue is identification of $\psi_a(y)$ from the two-sample observed-data law.  The assumptions below separate the required restrictions by their roles.  Assumption \ref{ass:causal} links the source factual data to potential outcomes.  Assumption \ref{ass:mar} justifies using the validation subset to learn the conditional law of $Y$ given $(X,S,A)$.  Assumption \ref{ass:transport} moves the treatment-specific law of $(S^a,Y^a)$ from the source population to the target population conditional on $X$.  Assumption \ref{ass:positivity} ensures that all conditional regressions and reweighting operations appearing later are well defined.

\begin{assumption}[Consistency and source exchangeability]\label{ass:causal}
For source units, $S=S^A$ and $Y=Y^A$. For each $a\in\{0,1\}$,
\[
(S^a,Y^a)\perp A\mid X,R=1.
\]
This condition is automatic in a randomized source trial, possibly after conditioning on design strata included in $X$.
\end{assumption}

Assumption \ref{ass:causal} is only a source-study condition.  It does not require the source and target covariate distributions to be the same, and it does not impose any surrogate-endpoint condition.  Its role is to let the treatment-specific source arm with $A=a$ stand in for the corresponding potential-outcome law among source units with the same $X$, using the same conditional-exchangeability logic underlying propensity-score adjustment \citep{RosenbaumRubin1983}.

\begin{assumption}[Primary-outcome missing at random]\label{ass:mar}
For each $a\in\{0,1\}$, in the source factual arm,
\[
M\perp Y^a\mid R=1,A=a,X,S^a.
\]
Equivalently, by consistency, the conditional law of the factual primary outcome $Y$ among source units with $A=a$ does not depend on $M$ after conditioning on $(X,S)$:
\[
Y\perp M\mid R=1,A=a,X,S.
\]
Thus primary-outcome labeling may depend on treatment, baseline covariates, and the post-treatment surrogate, but not on the unobserved primary outcome after conditioning on these variables.
\end{assumption}

Given Assumption \ref{ass:causal}, the remaining source-study complication is that $Y$ is only observed when $M=1$.  Assumption \ref{ass:mar} states that, after conditioning on $(X,S)$ within the factual arm $A=a$, the validation subset has the same conditional primary-outcome law as the full source arm.  This is why the regression $m_a(y,x,s)$ in Definition \ref{def:nuisance} is defined using $M=1$ but can still represent the full source-arm conditional distribution.

\begin{assumption}[Conditional transportability]\label{ass:transport}
For each $a\in\{0,1\}$,
\[
(S^a,Y^a)\mid X,R=0\;\overset{d}{=}\;(S^a,Y^a)\mid X,R=1.
\]
\end{assumption}

Assumption \ref{ass:transport} is the bridge from the source study to the target population.  It allows $P_{X,0}$ and $P_{X,1}$ to differ; that difference is handled by the density ratio $\omega$.  The restriction is instead on the conditional treatment-specific law given $X$, which is the object transported across populations.

\begin{assumption}[Positivity and boundedness]\label{ass:positivity}
There exists $\epsilon>0$ such that $\pi_0,\pi_1\ge\epsilon$, $P_{X,0}\ll P_{X,1}$, $\omega(X)\le \epsilon^{-1}$ $P_{X,1}$-almost surely,
\[
e_a(X)\ge\epsilon\quad P(\cdot\mid R=1)\text{-almost surely},
\]
and
\[
\rho_a(X,S)\ge\epsilon\quad P(\cdot\mid R=1,A=a)\text{-almost surely}.
\]
The conditional distribution functions $m_a$ and $g_a$ take values in $[0,1]$.
\end{assumption}

With these conditions in place, the target CDF can be expressed without unobserved potential outcomes.  The proof follows a single conditioning chain: start from the target law of $X$, invoke conditional transportability, use source exchangeability to condition on $A=a$, use missing-at-random sampling to replace the unobserved primary-outcome law by the validated regression $m_a$, and finally integrate out $S$ to obtain $g_a$.

\begin{proposition}[Identification]\label{prop:identification}
Under Assumptions \ref{ass:causal}--\ref{ass:positivity}, for each $a\in\{0,1\}$ and $y\in\Rbb$,
\[
\psi_a(y)=E_0\{g_a(y,X)\}.
\]
Equivalently,
\[
\psi_a(y)=E\left\{\frac{\ind(R=0)}{\pi_0}g_a(y,X)\right\}=E_1\{\omega(X)g_a(y,X)\}.
\]
\end{proposition}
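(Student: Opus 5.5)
The plan is to follow the conditioning chain described just before the statement, working pointwise in $x$ and then integrating against $P_{X,0}$. Fix $a$ and $y$. By the tower property,
\[
\psi_a(y)=E_0\{P(Y^a\le y\mid X,R=0)\},
\]
so it suffices to show that
\[
P(Y^a\le y\mid X=x,R=0)=g_a(y,x)\quad\text{for }P_{X,0}\text{-almost every }x.
\]
Assumption \ref{ass:positivity} gives $P_{X,0}\ll P_{X,1}$, so every source-conditional version defined $P_{X,1}$-almost surely is also defined $P_{X,0}$-almost surely. This is the measure-theoretic point I will state explicitly, since the regressions $e_a$, $m_a$, $g_a$ are only defined up to source-null sets.

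The chain proceeds in four steps.

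\emph{Transport.} Assumption \ref{ass:transport} gives
\[
P(Y^a\le y\mid X,R=0)=P(Y^a\le y\mid X,R=1).
\]
It also gives the stronger statement that the joint law of $(S^a,Y^a)$ is transported, which is not needed beyond the marginal of $Y^a$.

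\emph{Exchangeability.} Assumption \ref{ass:causal} replaces $P(Y^a\le y\mid X,R=1)$ by the corresponding probability conditional on $(X,R=1,A=a)$, with the conditioning event having positive probability because $e_a\ge\epsilon$. Consistency then turns this into the factual probability $P(Y\le y\mid X,R=1,A=a)$. Joint exchangeability of $(S^a,Y^a)$ also shows that the law of $S$ given $(X,R=1,A=a)$ equals that of $S^a$ given $(X,R=1)$, although only the factual version is needed.

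\emph{Iterated expectation over $S$.} Condition on $S$ within the arm:
\[
P(Y\le y\mid X,R=1,A=a)=E\{P(Y\le y\mid X,S,R=1,A=a)\mid X,R=1,A=a\}.
\]

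\emph{Missing at random.} Assumption \ref{ass:mar}, in its factual form $Y\perp M\mid R=1,A=a,X,S$, gives
\[
P(Y\le y\mid X,S,R=1,A=a)=P(Y\le y\mid X,S,R=1,A=a,M=1)=m_a(y,X,S).
\]
The conditioning on $M=1$ is well defined because $\rho_a\ge\epsilon$. Substituting back, the inner expectation is exactly $g_a(y,X)$ by Definition \ref{def:nuisance}.

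The two equivalent forms then follow directly. For the first, $E\{\ind(R=0)h(X)\}/\pi_0=E_0 h(X)$ because $\pi_0\ge\epsilon>0$. For the second, the change of measure $E_0 h(X)=E_1\{\omega(X)h(X)\}$ holds because $\omega=dP_{X,0}/dP_{X,1}$ and $h=g_a(y,\cdot)\in[0,1]$ is bounded, so integrability is automatic.

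The main obstacle I expect is the almost-sure bookkeeping. The equalities in the exchangeability and missing-at-random steps hold almost surely under source-arm laws, namely $P(\cdot\mid R=1,A=a)$ and its restriction to $M=1$. These must be transferred to statements holding $P_{X,0}$-almost surely in $x$. Transferring them requires positivity of $e_a$ and $\rho_a$, so that the sub-arm laws of $(X,S)$ are equivalent to the corresponding full source laws, together with $P_{X,0}\ll P_{X,1}$. I would handle this by writing each conditional expectation as a Radon--Nikodym derivative and checking that the null sets propagate correctly.
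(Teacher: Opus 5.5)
Your argument is correct. It follows the same overall chain as the paper: tower over the target law of $X$, then transport, exchangeability plus consistency, missing at random, and finally integrating out $S$. However, you order the middle steps differently.

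The paper conditions on the \emph{potential} surrogate first, writing $P(Y^a\le y\mid X,R=1)=E\{P(Y^a\le y\mid X,S^a,R=1)\mid X,R=1\}$. It then uses the joint exchangeability $(S^a,Y^a)\perp A\mid X,R=1$ together with consistency for two replacements. The law of $S^a$ given $(X,R=1)$ becomes that of $S$ given $(X,A=a,R=1)$. The potential-outcome regression $P(Y^a\le y\mid X,S^a=s,R=1)$ becomes the factual $P(Y\le y\mid X,S=s,A=a,R=1)$.

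You instead move to the factual arm immediately, using only $Y^a\perp A\mid X,R=1$. You then condition on the \emph{factual} $S$, so the surrogate enters solely as a conditioning variable for the factual missing-at-random step.

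Your ordering yields a sharper statement. Identification needs only the marginal versions of Assumptions~\ref{ass:causal} and~\ref{ass:transport} for $Y^a$; the joint $(S^a,Y^a)$ versions are stronger than necessary. This fits the paper's emphasis that no surrogacy structure is imposed. You also make explicit the null-set transfer, which the paper leaves implicit. Positivity of $e_a$ and $\rho_a$ makes the arm and validation-subsample laws equivalent to the full source laws, and $P_{X,0}\ll P_{X,1}$ carries $P_{X,1}$-a.s.\ identities to $P_{X,0}$-a.s.\ ones.

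The paper's ordering instead mirrors its ``transported surrogate--primary-outcome process'' narrative and the layered structure of the efficient influence function. The price is invoking the full joint assumptions where only their marginals are used.
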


Proposition \ref{prop:identification} is the point at which the causal problem becomes an observed-data functional.  The representation $E_0\{g_a(y,X)\}$ emphasizes that the final averaging is over the target covariate distribution, while the equivalent representation $E_1\{\omega(X)g_a(y,X)\}$ shows how the same quantity can be written as a source-population weighted expectation.  The next subsection studies the efficient influence function for this identified functional.

\subsection{Efficient Influence Functions and Drift Identities}\label{sec:eif}

The efficient influence function is written in a form that mirrors the three pieces of information used for identification: target covariates, source surrogates, and validated primary outcomes.  This decomposition is also useful for estimation because each residual term will later become a correction term in the one-step estimator.

\begin{definition}[Efficient influence function candidate]\label{def:eif}
For fixed $a\in\{0,1\}$ and $y\in\Rbb$, define
\begin{align}
\phi_a^y(O)
&=\frac{\ind(R=0)}{\pi_0}\{g_a(y,X)-\psi_a(y)\} \notag\\
&\quad +\frac{\ind(R=1)\omega(X)\ind(A=a)}{\pi_1 e_a(X)}\{m_a(y,X,S)-g_a(y,X)\} \notag\\
&\quad +\frac{\ind(R=1)\omega(X)\ind(A=a)M}{\pi_1 e_a(X)\rho_a(X,S)}\{Z_y-m_a(y,X,S)\}.
\label{eq:eif}
\end{align}
\end{definition}

The first line of $\phi_a^y$ is the fluctuation from sampling target covariates.  The second line is the source surrogate-process residual $m_a-g_a$ after reweighting the source arm to the target covariate law.  The third line is the validation-outcome residual $Z_y-m_a$ among units with observed primary outcomes.  Each line is centered with respect to the conditioning information of the corresponding data-generating component, which is why the decomposition is orthogonal in the nonparametric observed-data model.

\begin{theorem}[Canonical gradient for the transported observed-data CDF]\label{thm:eif}
Let
\[
\Psi_a^y(P)=E_{P}\{g_{a,P}(y,X)\mid R=0\},
\]
where $g_{a,P}$ is the conditional surrogate-integrated regression induced by the observed-data law $P$ in Definition \ref{def:nuisance}. Under the nonparametric observed-data model induced by the factorization
\begin{align*}
p(o)&=p_R(0)p_{X\mid R}(x\mid0),\qquad &&r=0,\\
p(o)&=p_R(1)p_{X\mid R}(x\mid1)p_{A\mid X,R}(a\mid x,1)p_{S\mid A,X,R}(s\mid a,x,1)\\
&\quad\times p_{M\mid A,X,S,R}(m\mid a,x,s,1)\{p_{Y\mid M,A,X,S,R}(y\mid1,a,x,s,1)\}^{m},\qquad &&r=1,
\end{align*}
and the positivity part of Assumption \ref{ass:positivity}, the function $\phi_a^y$ in \eqref{eq:eif} is the canonical gradient of the observed-data functional $\Psi_a^y(P)$. Under Assumptions \ref{ass:causal}--\ref{ass:positivity}, Proposition \ref{prop:identification} gives $\Psi_a^y(P)=\psi_a(y)$, so the same function is the efficient influence function for the causal transported CDF. Consequently, the semiparametric efficiency bound for estimating $\psi_a(y)$ is
\[
V_a(y)=E\{[\phi_a^y(O)]^2\}.
\]
The same canonical gradient and bound obtain in submodels in which $e_a$ or $\rho_a$ is known by design.
\end{theorem}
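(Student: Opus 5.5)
The plan is to compute the pathwise derivative of $\Psi_a^y$ along an arbitrary regular one-dimensional parametric submodel $P_t$ through $P$. Then show it equals $E\{\phi_a^y(O)\,\ell(O)\}$, where $\ell$ is the submodel score. Finally, verify that $\phi_a^y$ is mean zero, square integrable and lies in the tangent space.

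Because the model is nonparametric apart from the factorization, the tangent space decomposes into the orthogonal sum of score spaces for the six factors $p_R$, $p_{X\mid R}$, $p_{A\mid X,R=1}$, $p_{S\mid A,X,R=1}$, $p_{M\mid A,X,S,R=1}$ and $p_{Y\mid M=1,A,X,S,R=1}$. Each of these is the set of square-integrable functions of the relevant variables that are conditionally mean zero given the conditioning variables of that factor (with the indicators $\ind(R=r)$, $M$ attached). This space is all of $L_2^0(P)$, so any mean-zero influence function is automatically the canonical gradient. It therefore suffices to exhibit $\phi_a^y$ as a gradient. Boundedness of each term, hence $\phi_a^y\in L_2(P)$, follows from Assumption \ref{ass:positivity}, since the weights are at most $\epsilon^{-3}\pi_1^{-1}$ and the residuals lie in $[-1,1]$.

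For the derivative, I will write
\[
\Psi_a^y(P_t)=\int\!\!\int\!\!\int \ind(y'\le y)\,p_t(y'\mid 1,a,x,s,1)\,p_t(s\mid a,x,1)\,p_t(x\mid 0)\,dy'\,ds\,dx
\]
and differentiate at $t=0$. This produces three terms.
\begin{enumerate}
\item The term from $p_{X\mid R=0}$ is $E_0[\{g_a(y,X)-\psi_a(y)\}\ell_{X\mid R}(X,0)]$. Since $\ind(R=0)$ times a score of $p_{X\mid R}$ is that score restricted to $R=0$, this equals $E[\frac{\ind(R=0)}{\pi_0}\{g_a-\psi_a\}\ell]$. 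Centering by $\psi_a$ makes it orthogonal to the $p_R$ score.
\item The term from $p_{S\mid A,X,R=1}$ is $E_0[E\{(m_a-g_a)\ell_S\mid R=1,A=a,X\}]$. I convert it to a source expectation by the change of measure $\omega$ (giving $E_1[\omega\,\cdot\,]$), then insert $\ind(A=a)/e_a(X)$ to pass from conditioning on $A=a$ to the unconditional source law. Dividing by $\pi_1$ converts $E_1$ to $E$, which yields the second line of \eqref{eq:eif} paired with $\ell$.
\item The term from $p_{Y\mid M=1,\dots}$ is handled the same way, with an additional inverse weight $M/\rho_a(X,S)$ that moves from $M=1$ to the full arm.
\end{enumerate}

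The main obstacle is checking that each residual term is orthogonal to all the other score components. Only then does the inner product with the full score $\ell$ reproduce exactly the three derivative terms, with no cross contamination. The key facts are the following. The second term has conditional mean zero given $(R=1,A,X)$ because $E\{m_a\mid R=1,A=a,X\}=g_a$, so it is orthogonal to the scores of $p_R$, $p_{X\mid R}$ and $p_{A\mid X}$. It is a function of $(R,A,X,S)$ only, hence orthogonal to the $M$ and $Y$ scores. The third term has conditional mean zero given $(R=1,A,X,S,M)$, so it is orthogonal to every earlier factor, including the $M$ score. I will carry out these tower-property checks carefully, making sure the placeholder conventions of Definition \ref{def:obsdata} do not break measurability.

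The same decomposition handles the remaining claims. If $e_a$ or $\rho_a$ is known by design, the tangent space loses only the $A$ or $M$ score subspaces. Since $\phi_a^y$ is already orthogonal to those subspaces, it still lies in the reduced tangent space and remains the canonical gradient, so the bound is unchanged. The causal statement then follows from Proposition \ref{prop:identification}, and the bound $V_a(y)=E\{[\phi_a^y(O)]^2\}$ is the variance of the canonical gradient.
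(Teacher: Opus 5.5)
Your proposal is correct and follows essentially the same route as the paper's proof. Both arguments take a pathwise derivative along regular submodels with the score split into the $R$, $X\mid R$, $A$, $S$, $M$, and $M\ell_Y$ components, and obtain the same three nonzero representers: target covariates, the surrogate residual reweighted by $\omega\ind(A=a)/(\pi_1 e_a)$, and the validation residual with the extra $M/\rho_a$ factor. They then use the same tower-property orthogonality checks and the same invariance argument when the $A$ or $M$ score subspaces are removed; your remark that the factorized tangent space is all of $L_2^0(P)$, so any square-integrable mean-zero gradient is automatically canonical, simply makes explicit what the paper asserts more briefly.
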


Theorem \ref{thm:eif} confirms that the candidate in Definition \ref{def:eif} is not merely an estimating equation but the canonical gradient.  The calculation follows the standard pathwise-differentiability and tangent-space approach to semiparametric efficiency \citep{Newey1994,RobinsRotnitzky1995,Tsiatis2006}.  Thus any regular estimator of $\psi_a(y)$ must have asymptotic variance at least $V_a(y)$, and a one-step estimator with this leading influence function is semiparametrically efficient at fixed $y$.  The statement that the gradient is unchanged when $e_a$ or $\rho_a$ is known reflects the same conditional centering: the displayed gradient is already orthogonal to the corresponding treatment and validation-sampling score directions.

\begin{corollary}[Mean-outcome analogue and relation to surrogate ATE]\label{cor:mean_special_case}
Define
\[
\mu_a=E_0\{\mu_a^S(X)\},\qquad
\mu_a^S(x)=E\{\widetilde\mu_a(X,S)\mid R=1,A=a,X=x\},
\]
where $\widetilde\mu_a(x,s)=E(Y\mid R=1,A=a,X=x,S=s,M=1)$. Assume in addition that the displayed influence function below is square-integrable; equivalently, the target, surrogate-process, and validation-outcome residual terms below have finite second moments under the corresponding weighted laws. Under Assumptions \ref{ass:causal}--\ref{ass:positivity} and this square-integrability condition, the efficient influence function for the transported mean $\mu_a$ in the present two-sample transport model is
\begin{align*}
\phi_{\mu_a}(O)
&=\frac{\ind(R=0)}{\pi_0}\{\mu_a^S(X)-\mu_a\}
+\frac{\ind(R=1)\omega(X)\ind(A=a)}{\pi_1e_a(X)}\{\widetilde\mu_a(X,S)-\mu_a^S(X)\}\\
&\quad+\frac{\ind(R=1)\omega(X)\ind(A=a)M}{\pi_1e_a(X)\rho_a(X,S)}\{Y-\widetilde\mu_a(X,S)\}.
\end{align*}
Thus the efficient influence function for the transported ATE $\mu_1-\mu_0$ is $\phi_{\mu_1}-\phi_{\mu_0}$. If $P_{X,0}=P_{X,1}$, then $\omega=1$ and the residual decomposition has the same baseline, surrogate-process, and validation-outcome structure as the surrogate-assisted ATE canonical gradient of \citet{KallusMao2024Surrogates}. It is not literally the same gradient unless the two-sample transport design is collapsed to the corresponding single-population observation model; under that additional collapse, with $M$ identified with the primary-outcome labeling indicator, the displayed armwise gradient reduces exactly to the reference armwise surrogate-assisted ATE gradient.
\end{corollary}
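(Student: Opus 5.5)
The plan is to deduce the corollary from Theorem \ref{thm:eif} by linearity, replacing the indicator $Z_y$ with $Y$. First I would identify $\mu_a$. Using the same conditioning chain as in Proposition \ref{prop:identification}, the causal target mean is $E(Y^a\mid R=0)=E_0\{\mu_a^S(X)\}$. Conditional transportability (Assumption \ref{ass:transport}) moves the law to the source. Source exchangeability (Assumption \ref{ass:causal}) conditions on $A=a$. Missing at random (Assumption \ref{ass:mar}) replaces $E(Y\mid R=1,A=a,X,S)$ by $\widetilde\mu_a(X,S)$. Finally, integrating out $S$ gives $\mu_a^S$. Integrability of $Y^a$ follows from the assumed square-integrability.

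Next I would compute the pathwise derivative directly rather than integrating the CDF gradients over $y$. Integrating the CDF gradients would need tail control to interchange limits. Take a regular parametric submodel $P_t$ with score $s(O)=s_R+s_{X\mid R}+R\{s_{A}+s_{S}+s_M+Ms_{Y}\}$, decomposed according to the factorization displayed in Theorem \ref{thm:eif}. Each component is conditionally mean zero. Write
\[
\mu_a(P_t)=\int\!\!\int\!\!\int y\,p_t(y\mid 1,a,x,s,1)\,p_t(s\mid a,x,1)\,p_t(x\mid 0)\,dy\,ds\,dx .
\]
Differentiating at $t=0$ gives three terms.
\begin{itemize}
\item[(i)] The term $E_0[\{\mu_a^S(X)-\mu_a\}s_{X\mid R}]$. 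Here $s_R$ enters only through the $1/\pi_0$ normalization.
\item[(ii)] The term $E[\mu_a^S$-residual$\times s_S\mid\cdot]$ averaged over $P_{X,0}$. It is converted to a source expectation by the change of measure $\omega$ and by the inverse weight $\ind(A=a)/e_a$.
\item[(iii)] The $Y$-score term. It is reweighted additionally by $M/\rho_a$, which is where positivity enters.
\end{itemize}
Adding the three terms shows that the derivative equals $E\{\phi_{\mu_a}(O)s(O)\}$. This calculation is identical to the one behind Theorem \ref{thm:eif} with $Z_y$ replaced by $Y$, and I would cite that proof's bookkeeping rather than repeat it.

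Then I would verify that $\phi_{\mu_a}$ lies in the tangent space. Since the model is nonparametric, the tangent space is all of $L_2^0(P)$, so it suffices that $\phi_{\mu_a}$ has mean zero and finite variance. Mean zero holds because each of the three lines is conditionally centered: the first given $R=0$, the second given $(R=1,A=a,X)$, and the third given $(R=1,A=a,X,S,M=1)$. Finite variance is exactly the stated square-integrability assumption. Hence $\phi_{\mu_a}$ is the canonical gradient. Linearity of pathwise derivatives then gives $\phi_{\mu_1}-\phi_{\mu_0}$ for the ATE.

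The main obstacle is the unbounded $Y$: the interchange of differentiation and integration in (iii) needs justification. I would handle it with the standard argument for bounded scores. Restrict to submodels with bounded score components, which are dense in $L_2^0$. Use $E(Y^2)$-type finiteness (implied by square-integrability with bounded $\omega/(e_a\rho_a)$) and dominated convergence to differentiate. Then extend by continuity of the inner product.

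The remark about \citet{KallusMao2024Surrogates} is by inspection. Set $\omega\equiv1$. Collapsing to one population removes the $R=0$ sample, so the first term becomes the centered baseline term over the single population, and the $\pi_1$ factors disappear. The result matches their armwise gradient term by term.
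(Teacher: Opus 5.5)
Your proposal is correct and takes essentially the same route as the paper's proof. You redo the three-component pathwise derivative of Theorem~\ref{thm:eif} with $Z_y$ replaced by $Y$, note that the sum is mean zero and square-integrable and hence is the canonical gradient in the nonparametric observed-data model, take differences for the ATE, and read off the Kallus--Mao relation by setting $\omega\equiv1$ and collapsing the design. Your explicit identification of the causal mean and your bounded-score argument for unbounded $Y$ are details the paper leaves implicit.

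One small slip in that technical step: an upper bound on $\omega/(e_a\rho_a)$ does not let you deduce unweighted moments such as $E(Y^2)$ from the weighted square-integrability, since the implication runs the other way. You do not need them, though. Along bounded-score submodels, only the integrability of $Y^a$ that you already derived and the assumed square-integrability of $\phi_{\mu_a}$ are used.
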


Corollary \ref{cor:mean_special_case} is a consistency check against the mean-outcome literature.  Replacing $Z_y$ by $Y$ preserves the same target, surrogate, and validation residual structure.  The CDF theory, however, is more than a pointwise mean result because quantile inference requires passing the whole distributional estimator through an inverse-CDF map.

\begin{assumption}[Quantile regularity]\label{ass:quantile}
For each $a\in\{0,1\}$, $\psi_a$ is continuously differentiable on an open neighborhood of $\{q_a(\tau):\tau\in\cT\}$, where $\cT\subset(0,1)$ is compact and the relevant quantiles lie away from boundary points of that neighborhood. Let $f_a(y)=\partial_y\psi_a(y)$. There exists $c_f>0$ such that
\[
\inf_{a\in\{0,1\}}\inf_{\tau\in\cT} f_a(q_a(\tau))\ge c_f.
\]
\end{assumption}

Assumption \ref{ass:quantile} supplies the local invertibility needed to convert CDF perturbations into quantile perturbations.  The lower bound on $f_a(q_a(\tau))$ rules out flat portions of the target CDF on the quantile region, so first-order errors in $\psi_a$ translate stably into first-order errors in $q_a$.

\begin{corollary}[Efficient influence functions for quantiles and QTEs]\label{cor:eif_quantile}
Under Assumptions \ref{ass:causal}--\ref{ass:quantile}, the efficient influence function for $q_a(\tau)$ is
\[
\phi_{q_a,\tau}(O)=-\frac{\phi_a^{q_a(\tau)}(O)}{f_a(q_a(\tau))}.
\]
The efficient influence function for $\Delta(\tau)$ is
\[
\phi_{\Delta,\tau}(O)=
-\frac{\phi_1^{q_1(\tau)}(O)}{f_1(q_1(\tau))}
+\frac{\phi_0^{q_0(\tau)}(O)}{f_0(q_0(\tau))}.
\]
The corresponding efficiency bound is $E\{\phi_{\Delta,\tau}^2(O)\}$.
\end{corollary}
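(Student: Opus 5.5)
The plan is to obtain the quantile gradient by differentiating the implicit equation that defines $q_a(\tau)$ along a smooth parametric submodel, and then to read off the coefficient of the score. Fix $a$ and $\tau\in\cT$. Take a regular one-dimensional submodel $\{P_t\}$ through $P$ with score $\dot\ell$ in the nonparametric observed-data model of Theorem \ref{thm:eif}, and write $\psi_{a,t}(y)=\Psi_a^y(P_t)$ and $q_{a,t}(\tau)$ for the induced quantile. Theorem \ref{thm:eif} gives, for each fixed $y$,
\[
\partial_t\psi_{a,t}(y)\big|_{t=0}=E\{\phi_a^y(O)\dot\ell(O)\}.
\]
The first step is to upgrade this to local uniformity in $y$ over a neighborhood $N$ of $q_a(\tau)$. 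I would restrict to bounded scores, so that $P_t$ has density $p(1+t\dot\ell)$. Each conditional factor in the factorization is then a ratio of expectations that is linear in $t$. Combined with positivity and $m_a,g_a\in[0,1]$, this shows that $t^{-1}\{\psi_{a,t}(y)-\psi_a(y)\}$ converges to $E\{\phi_a^y\dot\ell\}$ uniformly in $y\in N$. It also shows that $\psi_{a,t}$ converges to $\psi_a$ uniformly on $N$, and that $y\mapsto E\{\phi_a^y\dot\ell\}$ is continuous at $q_a(\tau)$, by dominated convergence applied to $m_a(y,\cdot)$ and $Z_y$ at continuity points of $\psi_a$.

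The second step is the inverse-map argument. Assumption \ref{ass:quantile} gives continuous differentiability of $\psi_a$ near $q_a(\tau)$ with $f_a(q_a(\tau))\ge c_f$. Hence $\psi_a$ is strictly increasing on $N$, and $\psi_a(q_a(\tau))=\tau$. By the uniform convergence from the first step, $q_{a,t}(\tau)\in N$ for small $t$, and $q_{a,t}(\tau)\to q_a(\tau)$. Write $q_t=q_{a,t}(\tau)$ and $q=q_a(\tau)$. I would decompose
\[
0\approx\psi_{a,t}(q_t)-\psi_a(q)=\{\psi_{a,t}(q_t)-\psi_a(q_t)\}+\{\psi_a(q_t)-\psi_a(q)\}.
\]
The first bracket equals $t\,E\{\phi_a^{q}\dot\ell\}+o(t)$ by the uniformity and continuity from the first step. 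The second bracket equals $f_a(q)(q_t-q)+o(|q_t-q|)$.

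The main obstacle is the approximate equality $\psi_{a,t}(q_t)=\tau$. This can fail when $\psi_{a,t}$ has atoms, so I would use the generalized-inverse bracketing $\psi_{a,t}(q_t-)\le\tau\le\psi_{a,t}(q_t)$. I would then bound the jump $\psi_{a,t}(q_t)-\psi_{a,t}(q_t-)$ by $O(|t|)\times$ the jump of $\psi_a$ at $q_t$. That jump is zero because $\psi_a$ is continuous on $N$, so the jump term is $o(t)$. Solving the decomposition then gives
\[
\partial_t q_{a,t}(\tau)\big|_{t=0}=E\Bigl[-\frac{\phi_a^{q_a(\tau)}(O)}{f_a(q_a(\tau))}\,\dot\ell(O)\Bigr].
\]
Two further points remain. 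First, $\phi_{q_a,\tau}$ is bounded and mean zero. Second, the tangent space of the nonparametric model is all of $L_2^0(P)$, and bounded scores are dense in it. Together these show that $\phi_{q_a,\tau}$ is a gradient, and hence the canonical one.

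For $\Delta(\tau)$, pathwise derivatives are linear, so the gradient is the difference of the two quantile gradients. This gives the displayed $\phi_{\Delta,\tau}$. The efficiency bound is its second moment, by the same convolution-theorem reasoning invoked after Theorem \ref{thm:eif}. Finally, Proposition \ref{prop:identification} identifies the observed-data quantile functional with the causal $q_a(\tau)$ and $\Delta(\tau)$.
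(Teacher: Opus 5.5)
Your proof is correct. It reaches the same formula as the paper, $\dot q=-\dot\psi(q)/f(q)$, but does the calculation by hand where the paper cites a lemma.

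\textbf{The paper's route.} The paper's proof is a short functional-delta-method argument. Under Assumption \ref{ass:quantile}, the generalized-inverse map is Hadamard differentiable at $\psi_a$ tangentially to continuous functions, with derivative $h\mapsto -h(q_a(\tau))/f_a(q_a(\tau))$. Composing this with the CDF gradient of Theorem \ref{thm:eif} gives $\phi_{q_a,\tau}$, and linearity gives $\phi_{\Delta,\tau}$.

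\textbf{Your route.} You carry out that chain rule directly along bounded-score submodels $p(1+t\dot\ell)$:
\begin{itemize}
\item you show $t^{-1}(\psi_{a,t}-\psi_a)\to E\{\phi_a^{\cdot}\dot\ell\}$ uniformly on a neighborhood of $q_a(\tau)$;
\item you show the limit is continuous at $q_a(\tau)$, using that continuity of $\psi_a$ forces $\omega(X)\ind(A=a)M\ind(Y=q_a(\tau))=0$ almost surely, which is what makes dominated convergence work;
\item you solve the implicit quantile equation by bracketing;
\item you finish with density of bounded scores in $L_2^0(P)$.
\end{itemize}

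\textbf{What each buys.} Your first two steps are exactly the hypotheses the paper's chain-rule step needs but leaves implicit. Theorem \ref{thm:eif} only gives a derivative pointwise in $y$. Hadamard differentiability tangentially to continuous functions needs convergence in $\ell^\infty$ on a neighborhood, toward a direction continuous at the quantile. So your version is more self-contained. The paper's version is shorter and feeds directly into the process-level delta method reused in Corollary \ref{cor:uniform_quantile}.

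\textbf{A small correction.} The jump of $\psi_{a,t}$ at a point is bounded by $\{1+O(|t|)\}$ times the jump of $\psi_a$ there, not by $O(|t|)$ times it. This is because the conditional density ratios induced by $1+t\dot\ell$ are bounded above and below at each layer ($Y$, $S$, and $X\mid R=0$). Since $\psi_a$ has no jumps on the neighborhood, $\psi_{a,t}$ is continuous there too. Hence $\psi_{a,t}(q_{a,t}(\tau))=\tau$ holds exactly for small $t$, and your conclusion is unaffected.
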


The efficient influence function for the QTE is obtained by evaluating the arm-specific CDF influence functions at the arm-specific target quantiles and scaling by the corresponding target densities.  This is why later QTE inference requires both a good CDF expansion and control of the local inverse map.  The next step is to understand what happens when the nuisance functions in the CDF signal are replaced by estimates.

\begin{definition}[Candidate signal and drift notation]\label{def:signal}
Let $\bar\eta=(\bar m,\bar g,\bar e,\bar\rho,\bar\omega)$ be a candidate nuisance collection satisfying the same positivity bounds as Assumption \ref{ass:positivity}. Define
\begin{align*}
\Psi_a^y(\bar\eta)
&=E_0\{\bar g_a(y,X)\}\\
&\quad+E_1\left[\frac{\bar\omega(X)\ind(A=a)}{\bar e_a(X)}\{\bar m_a(y,X,S)-\bar g_a(y,X)\}\right]\\
&\quad+E_1\left[\frac{\bar\omega(X)\ind(A=a)M}{\bar e_a(X)\bar\rho_a(X,S)}\{Z_y-\bar m_a(y,X,S)\}\right].
\end{align*}
Let
\[
\bar w_a(x)=\bar\omega(x)\frac{e_a(x)}{\bar e_a(x)},\qquad
\bar r_a(x,s)=\frac{\rho_a(x,s)}{\bar\rho_a(x,s)},
\]
and let $H_a h(x)=E\{h(X,S)\mid R=1,A=a,X=x\}$. The symbol $\bar w_a$ is a ratio-level weight and is deliberately not denoted by $q_a$ to avoid conflict with quantiles.
\end{definition}

Definition \ref{def:signal} records the population version of the one-step signal evaluated at an arbitrary candidate nuisance collection.  The quantities $\bar w_a$ and $\bar r_a$ collect the ratio-level errors in the source-to-target and validation weights.  Writing the drift in terms of these ratios keeps the robustness statement tied to the actual weighted signal used by the estimator.

\begin{proposition}[Exact drift identity and ratio-level robustness]\label{prop:drift}
Under Assumptions \ref{ass:causal}--\ref{ass:positivity}, for any candidate $\bar\eta$ in Definition \ref{def:signal},
\begin{align}
\Psi_a^y(\bar\eta)-\psi_a(y)
&=E_1\left[\{\omega(X)-\bar w_a(X)\}\{\bar g_a(y,X)-g_a(y,X)\}\right] \notag\\
&\quad+E_1\left[\bar w_a(X)H_a\left\{[1-\bar r_a(X,S)]
[\bar m_a(y,X,S)-m_a(y,X,S)]\right\}(X)\right].
\label{eq:drift}
\end{align}
Consequently, $\Psi_a^y(\bar\eta)=\psi_a(y)$ if either $(\bar m_a,\bar g_a)=(m_a,g_a)$ or $(\bar w_a,\bar r_a)=(\omega,1)$ almost surely. If $\bar w_a$ is uniformly bounded, then for a constant $C$ depending only on positivity bounds,
\begin{align*}
\abs{\Psi_a^y(\bar\eta)-\psi_a(y)}
&\le C\norm{\bar w_a-\omega}_{P_{X,1},2}\norm{\bar g_a(y,\cdot)-g_a(y,\cdot)}_{P_{X,1},2}\\
&\quad+C\norm{\bar r_a-1}_{1a,2}\norm{\bar m_a(y,\cdot,\cdot)-m_a(y,\cdot,\cdot)}_{1a,2}.
\end{align*}
\end{proposition}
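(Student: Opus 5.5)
The plan is to evaluate each of the three expectations in $\Psi_a^y(\bar\eta)$ by iterated conditioning in the source, from the innermost residual outward, and then subtract $\psi_a(y)=E_0\{g_a(y,X)\}=E_1\{\omega(X)g_a(y,X)\}$ from Proposition \ref{prop:identification}. The identity \eqref{eq:drift} should then follow by regrouping terms around the true nuisances.

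\textbf{Step 1 (validation term).} Condition on $(X,S)$ within $R=1,A=a$. By Assumption \ref{ass:mar} and the definitions of $\rho_a$ and $m_a$, $E\{MZ_y\mid R=1,A=a,X,S\}=\rho_a(X,S)m_a(y,X,S)$. Hence
\[
E_1\!\left[\frac{\bar\omega\ind(A=a)M}{\bar e_a\bar\rho_a}\{Z_y-\bar m_a\}\right]
=E_1\!\left[\frac{\bar\omega\ind(A=a)}{\bar e_a}\,\bar r_a\{m_a-\bar m_a\}\right].
\]
Next condition on $X$ within $R=1$. Since $E\{\ind(A=a)h(X,S)\mid R=1,X\}=e_a(X)H_ah(X)$, this equals $E_1[\bar w_a(X)H_a\{\bar r_a(m_a-\bar m_a)\}(X)]$.

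\textbf{Step 2 (surrogate term).} The same conditioning on $X$ gives $E_1[\bar w_a(X)\{H_a\bar m_a(y,\cdot)(X)-\bar g_a(y,X)\}]$.

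\textbf{Step 3 (target term and regrouping).} Write the target term as $E_0\bar g_a=E_1\{\omega\bar g_a\}$; the change of measure is valid by $P_{X,0}\ll P_{X,1}$. Subtract $E_1\{\omega g_a\}$. Using $H_am_a=g_a$, the sum becomes
\[
E_1\{\omega(\bar g_a-g_a)\}-E_1\{\bar w_a\bar g_a\}+E_1[\bar w_aH_a\bar m_a]+E_1[\bar w_aH_a\{\bar r_a(m_a-\bar m_a)\}].
\]
Now add and subtract $E_1\{\bar w_a g_a\}=E_1\{\bar w_aH_am_a\}$. The first two terms combined with $-E_1\{\bar w_a g_a\}$ give $E_1\{(\omega-\bar w_a)(\bar g_a-g_a)\}$. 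The remaining terms give $E_1[\bar w_aH_a\{(\bar m_a-m_a)-\bar r_a(\bar m_a-m_a)\}]$, which is exactly the second term of \eqref{eq:drift}.

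The main obstacle is this bookkeeping: one must track which terms pair with $\omega$ versus $\bar w_a$, and confirm that $\bar w_a$ (which involves the true $e_a$) arises naturally from the $A$-conditioning. Everything is bounded by positivity, so integrability is automatic.

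\textbf{Step 4 (consequences).} Double robustness is immediate from \eqref{eq:drift}. If $(\bar m_a,\bar g_a)=(m_a,g_a)$, both differences vanish. If $(\bar w_a,\bar r_a)=(\omega,1)$, both weight factors vanish.

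For the bound, apply Cauchy--Schwarz under $P_{X,1}$ to the first term. For the second term, note that $|\bar w_a|\le C$. Write $E_1[\bar w_aH_ah]=E[\bar w_a(X)e_a(X)^{-1}\ind(A=a)h(X,S)\mid R=1]$ and use $e_a\ge\epsilon$ to pass to an expectation under the law of $(X,S)$ given $(R=1,A=a)$; the conversion factor $P(A=a\mid R=1)^{-1}$ is at most $\epsilon^{-1}$. Then Cauchy--Schwarz in $\norm{\cdot}_{1a,2}$ gives $C\norm{\bar r_a-1}_{1a,2}\norm{\bar m_a-m_a}_{1a,2}$.
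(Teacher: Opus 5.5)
Your argument is correct and is essentially the paper's proof: reduce the signal by iterated expectation to $E_0\bar g_a+E_1[\bar w_aH_a(\bar m_a-\bar g_a)]+E_1[\bar w_aH_a\{\bar r_a(m_a-\bar m_a)\}]$, subtract $\psi_a(y)=E_1\{\omega H_am_a\}$ and regroup (the paper writes $\Delta_g=\bar g_a-g_a$, $\Delta_m=\bar m_a-m_a$ where you add and subtract), then bound by Cauchy--Schwarz. There are two cosmetic slips. The piece joined to the first two terms is $+E_1\{\bar w_ag_a\}$, with $-E_1\{\bar w_ag_a\}$ going to the remaining terms. Passing from $E_1[\ind(A=a)F]$ to $E[F\mid R=1,A=a]$ multiplies by $P(A=a\mid R=1)\le1$ rather than by its inverse. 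Your explicit change of measure onto the $\norm{\cdot}_{1a,2}$ law is in fact more careful than the paper's bare appeal to Jensen for $H_a$.
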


Proposition \ref{prop:drift} is an exact identity rather than an asymptotic expansion.  It shows that first-order bias disappears when the outcome regressions are correct or when the ratio-level weights are correct, and otherwise is controlled by products of nuisance errors.  This identity is the source of the product-rate conditions imposed below and is the transported-CDF analogue of orthogonal estimating equations used in augmented inverse-probability, doubly robust, and debiased estimation \citep{RobinsRotnitzkyZhao1994,BangRobins2005,ChernozhukovDML2018}.

\subsection{Closed-Form One-Step Estimation}\label{sec:estimation}

The estimator is the empirical analogue of the identified functional plus the two residual corrections appearing in the efficient influence function.  Cross-fitting is used only to separate nuisance training from evaluation of the estimating signal; the final estimator is still a closed-form sample average once the nuisance functions have been fitted.

\begin{definition}[Cross-fitted one-step CDF and quantile estimators]\label{def:estimator}
Let $\mathcal I_1,\ldots,\mathcal I_K$ be a random partition of $\{1,\ldots,n\}$, with fixed $K$. For each fold $k$, estimate $\eta$ using observations outside $\mathcal I_k$; denote the estimate by $\widehat\eta^{(-k)}$. For each observation $i\in\mathcal I_k$, write the fold-specific nuisance evaluation as $\widehat\eta_i=\widehat\eta^{(-k)}$. Define the cross-fitted CDF estimator
\begin{align}
\widehat\psi_a(y)
&=\Pnr{0}\widehat g_{a,i}(y,X_i) \notag\\
&\quad+\Pnr{1}\left[\frac{\widehat\omega_i(X_i)\ind(A_i=a)}{\widehat e_{a,i}(X_i)}
\{\widehat m_{a,i}(y,X_i,S_i)-\widehat g_{a,i}(y,X_i)\}\right] \notag\\
&\quad+\Pnr{1}\left[\frac{\widehat\omega_i(X_i)\ind(A_i=a)M_i}{\widehat e_{a,i}(X_i)\widehat\rho_{a,i}(X_i,S_i)}
\{\ind(Y_i\le y)-\widehat m_{a,i}(y,X_i,S_i)\}\right].
\label{eq:estimator}
\end{align}
For theory, let $\widehat F_a$ be a nondecreasing version of $\widehat\psi_a$ on a compact outcome interval $\cY$ satisfying the first-order conditions stated below; the raw estimator may be used if it is already nondecreasing. Define
\[
\widehat q_a(\tau)=\inf\{y\in\cY:\widehat F_a(y)\ge\tau\},\qquad
\widehat\Delta(\tau)=\widehat q_1(\tau)-\widehat q_0(\tau).
\]
For computation, $\widehat F_a$ may be obtained from the growing-grid and isotonic procedures in Definitions \ref{def:convex} and \ref{def:growing_grid}. A fixed grid is only a numerical summary unless its discretization error is negligible for the inferential scale under consideration.
\end{definition}

The three lines in \eqref{eq:estimator} correspond respectively to target covariate averaging, the source surrogate residual correction, and the validation-outcome residual correction.  The monotone version $\widehat F_a$ is introduced because quantile inversion requires a distribution function, while the first-order theory only requires that the monotonicity correction be asymptotically negligible at the relevant scale.

\begin{definition}[Convex implementation]\label{def:convex}
A fully analyzable implementation uses the following nuisance learners.
\begin{enumerate}[label=(\roman*)]
\item $m_a(y,x,s)$ is estimated among $R=1,A=a,M=1$ units by ridge logistic regression for $\ind(Y\le y)$, or by finite-rank kernel ridge regression (KRR) followed by clipping to $[0,1]$.
\item $g_a(y,x)$ is estimated among $R=1,A=a$ units by regressing a cross-fitted version of $\widehat m_a(y,X,S)$ on $X$ using ridge-linear regression or finite-rank KRR.
\item $e_a(x)$ is known in a randomized trial; otherwise it is estimated among $R=1$ units by ridge logistic regression.
\item $\rho_a(x,s)$ is estimated among $R=1,A=a$ units by ridge logistic regression for $M$.
\item $\omega(x)=dP_{X,0}/dP_{X,1}$ is estimated either by a calibrated ridge logistic classifier or by entropy balancing. If a classifier estimates $p_0(x)=P(R=0\mid X=x)$ on the training sample for fold $k$, the fold-specific density-ratio estimate is
\[
\widehat\omega^{(-k)}(x)=\frac{\widehat p_0^{(-k)}(x)}{1-\widehat p_0^{(-k)}(x)}\frac{\widehat\pi_1^{(-k)}}{\widehat\pi_0^{(-k)}},
\]
truncated to the positivity interval and normalized on the training source sample so that $\mathbb P_{n,1}^{(-k)}\widehat\omega^{(-k)}=1$. This fold-specific normalization preserves cross-fitting independence. Entropy balancing estimates source weights by
\[
\min_{w_i\ge0}\sum_{i:R_i=1}w_i\log w_i
\quad\text{subject to}\quad
\sum_{i:R_i=1}w_i=1,\quad
\sum_{i:R_i=1}w_i b_X(X_i)=\frac1{n_0}\sum_{i:R_i=0}b_X(X_i),
\]
and sets $\widehat\omega^{(-k)}(X_i)=n_1^{(-k)} w_i$ on the training source sample, with an exponential-tilt interpolation for new $x$ when needed; all balancing constraints are imposed using only the training observations for the relevant fold.
\item On a growing grid $\cY_n$, monotonicity is enforced by isotonic projection
\[
(\widehat F_a(y_{1,n}),\ldots,\widehat F_a(y_{J_n,n}))
=\argmin_{0\le f_1\le\cdots\le f_{J_n}\le1}
\sum_{j=1}^{J_n}\{\widehat\psi_a(y_{j,n})-f_j\}^2.
\]
\end{enumerate}
\end{definition}

Definition \ref{def:convex} gives one deliberately restricted implementation path.  The high-level results below do not require these particular learners, but this list makes clear which nuisance-fitting procedures are later verified by low-level rate calculations.  The calibrated density-ratio and entropy-balancing option is the transport-weight analogue of survey calibration, empirical likelihood, and covariate-balancing propensity-score ideas \citep{DevilleSarndal1992,QinLawless1994,ImaiRatkovic2014}.

\begin{proposition}[Closed-form and convex structure]\label{prop:closed_convex}
Given nuisance estimates, \eqref{eq:estimator} is a closed-form sample average. Under Definition \ref{def:convex}, every nuisance subproblem is either a ridge/KRR normal-equation problem or a convex optimization problem; the CDF monotonicity correction is a convex projection computable by the pool-adjacent-violators algorithm; and quantile inversion over a grid is a deterministic search. No nonconvex optimization is required.
\end{proposition}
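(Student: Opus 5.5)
The proof is a direct inspection of each component in Definition \ref{def:convex} together with the form of \eqref{eq:estimator}. It has three parts, taken in order.

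\textbf{The estimator is a sample average.} Once the fold-specific nuisance estimates $\widehat\eta^{(-k)}$ are fixed, each summand in \eqref{eq:estimator} is an explicit function of $O_i$ and $\widehat\eta_i$. So $\widehat\psi_a(y)$ is a finite weighted sum of terms of the form $\Pnr{0}f_0$ and $\Pnr{1}f_1$, with $f_0,f_1$ given in closed form. Positivity of $\widehat e_{a,i}$ and $\widehat\rho_{a,i}$ is guaranteed by truncation to the positivity interval, so no division by zero occurs and nothing needs to be solved iteratively.

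\textbf{Each nuisance subproblem is normal-equation or convex.} I would check items (i)--(v) one at a time.
\begin{itemize}
\item Ridge-linear regression and finite-rank KRR minimize a strictly convex quadratic, namely squared loss plus $\lambda$ times the squared RKHS or Euclidean norm over a finite-dimensional span. The minimizer solves a linear system $(\Phi^\top\Phi+\lambda I)\beta=\Phi^\top z$, which is uniquely solvable for $\lambda>0$. Clipping to $[0,1]$ is a deterministic post-map.
\item For $\widehat g_a$, the response is the cross-fitted $\widehat m_a$, which is a fixed function of the data, so the same argument applies.
\item Ridge logistic regression for $e_a$, $\rho_a$, $m_a$ and the classifier $p_0$ minimizes the logistic log-loss plus $\lambda\|\beta\|^2$. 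The log-loss is convex because its Hessian $\Phi^\top D\Phi$ is positive semidefinite, and the ridge term makes the objective strictly convex.
\item The calibrated density ratio $\widehat\omega^{(-k)}$ is a closed-form transformation of $\widehat p_0$: an odds ratio, then truncation, then division by a sample mean.
\item Entropy balancing minimizes $\sum w_i\log w_i$, which is strictly convex on $w\ge0$, subject to affine equality constraints. This is a convex program whose Lagrangian dual is an unconstrained smooth convex problem in the tilt parameter, with $w_i\propto\exp\{\lambda^\top b_X(X_i)\}$. That same exponential-tilt form supplies the interpolation to new $x$.
\end{itemize}

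\textbf{Isotonic projection and quantile inversion.} The set $\{0\le f_1\le\cdots\le f_{J_n}\le1\}$ is a closed convex polyhedron and the objective is strictly convex quadratic, so the projection exists and is unique. I would then invoke the standard fact that PAVA computes the unconstrained isotonic regression exactly. Clipping the PAVA output to $[0,1]$ gives the bounded projection, because coordinatewise clipping preserves monotonicity and the clipped vector satisfies the KKT conditions of the bounded problem. This clipping claim is the one step that needs a short argument rather than mere citation, so I would spell it out; it could also be stated directly as a box-constrained isotonic fact. Quantile inversion on a finite grid, $\min\{j:\widehat F_a(y_{j,n})\ge\tau\}$, is a finite deterministic search, or a binary search since $\widehat F_a$ is monotone.

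Putting the three parts together, no step involves a nonconvex optimization. The main obstacle is only the care needed to state which objects are fixed inputs, for example treating the cross-fitted $\widehat m_a$ as a fixed response in the $g_a$ step, so that each subproblem is convex in its own decision variable.
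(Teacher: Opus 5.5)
Your proposal is correct and follows the same route as the paper's proof. Both inspect each component in turn: given the fitted nuisances, \eqref{eq:estimator} is an explicit sample average; ridge and finite-rank KRR reduce to normal equations; ridge logistic regression and entropy balancing are convex programs; the isotonic step is a Euclidean projection onto a closed convex set; and grid inversion is an order operation. Your added argument that clipping the unconstrained PAVA output to $[0,1]$ solves the box-constrained isotonic problem is correct (for instance via the KKT conditions, or by noting that the feasible set is the simplex spanned by $0$ and the step vectors $\ind\{j\ge u\}$), and it supplies a detail the paper only asserts when it says PAVA computes the bounded projection.
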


Proposition \ref{prop:closed_convex} separates the computational claim from the statistical claim.  Statistical validity still depends on whether the fitted nuisance functions make the drift and empirical-process perturbation small enough.  The remaining results in this subsection state those requirements first at a fixed threshold and then uniformly over $y$.

\begin{definition}[Uniform signal classes]\label{def:uniform_signal}
For a nuisance collection $\eta$ and fixed $a$, define the uncentered one-step signal indexed by $y\in\cY$ as
\begin{align*}
\Gamma_a^y(O;\eta)
&=\frac{\ind(R=0)}{\pi_0}g_a(y,X)\\
&\quad+\frac{\ind(R=1)\omega(X)\ind(A=a)}{\pi_1 e_a(X)}\{m_a(y,X,S)-g_a(y,X)\}\\
&\quad+\frac{\ind(R=1)\omega(X)\ind(A=a)M}{\pi_1 e_a(X)\rho_a(X,S)}\{\ind(Y\le y)-m_a(y,X,S)\}.
\end{align*}
Then $E\{\Gamma_a^y(O;\eta)\}=\psi_a(y)$ and $\phi_a^y(O)=\Gamma_a^y(O;\eta)-\psi_a(y)$.  Let $\widehat\Gamma_{a,k}^y$ denote the same signal with fold-specific nuisance estimates trained outside fold $k$ and with $\pi_r$ replaced by the population value; random-denominator corrections are handled separately by Lemma \ref{lem:ratio}. For later bootstrap notation, define the cross-fitted plug-in centered CDF influence function by
\[
\widehat\phi_a^y(O_i)=\widehat\Gamma_{a,k(i)}^y(O_i)-\widehat\psi_a(y),
\]
where $k(i)$ is the fold containing observation $i$.
\end{definition}

The notation $\Gamma_a^y$ packages the uncentered one-step signal so that $\phi_a^y$ is simply its centered oracle version.  This is convenient for asymptotic analysis because the estimator can be decomposed into an oracle empirical process, a population drift, and a nuisance-estimation empirical-process remainder.

\begin{assumption}[High-level nuisance rates for fixed $y$]\label{ass:rates_fixed}
For fixed $(a,y)$ and each fold, the nuisance estimates satisfy the positivity bounds in Assumption \ref{ass:positivity} with probability tending to one and
\begin{align}
&\norm{\widehat\omega\,e_a/\widehat e_a-\omega}_{P_{X,1},2}
\norm{\widehat g_a(y,\cdot)-g_a(y,\cdot)}_{P_{X,1},2}\notag\\
&\quad+\norm{\rho_a/\widehat\rho_a-1}_{1a,2}
\norm{\widehat m_a(y,\cdot,\cdot)-m_a(y,\cdot,\cdot)}_{1a,2}=o_p(n^{-1/2}),\label{eq:productrate}
\end{align}
and the fold-specific one-step signal satisfies
\[
\norm{\widehat\Gamma_a^y-\Gamma_a^y}_{P,2}=o_p(1).
\]
\end{assumption}

Assumption \ref{ass:rates_fixed} is the fixed-threshold version of the requirements suggested by Proposition \ref{prop:drift}.  The product condition controls the population drift, while $L_2(P)$ convergence of the fitted signal controls the cross-fitted empirical-process perturbation.

\begin{theorem}[Asymptotic linearity and efficiency for fixed $y$]\label{thm:fixed_al}
Under Assumptions \ref{ass:causal}--\ref{ass:positivity} and \ref{ass:rates_fixed}, for fixed $(a,y)$,
\[
\sqrt n\{\widehat\psi_a(y)-\psi_a(y)\}
=\frac1{\sqrt n}\sum_{i=1}^n\phi_a^y(O_i)+o_p(1).
\]
Consequently,
\[
\sqrt n\{\widehat\psi_a(y)-\psi_a(y)\}\dto N\{0,V_a(y)\},
\]
and $\widehat\psi_a(y)$ attains the semiparametric efficiency bound in Theorem \ref{thm:eif}.
\end{theorem}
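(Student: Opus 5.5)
The plan is the standard cross-fitting decomposition. The estimator \eqref{eq:estimator} uses the random normalizers $n_0,n_1$, whereas $\Gamma_a^y$ uses $\pi_0,\pi_1$. So the first step is to rewrite $\widehat\psi_a(y)$ as $K^{-1}$-weighted fold averages of $\widehat\Gamma_{a,k}^y$, plus a random-denominator correction. That correction is handled by Lemma \ref{lem:ratio}. Since $n_r/n=\pi_r+O_p(n^{-1/2})$ and each fold-level bracketed average is $O_p(1)$, replacing $n_r$ by $n\pi_r$ changes things only by terms of the form $(\pi_r n/n_r-1)\times(\text{centered average})$. Centering the second and third lines is automatic, since their population means given the training fold are drift-sized. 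The first line contributes an $\ind(R=0)-\pi_0$ fluctuation. I expect that fluctuation to be absorbed exactly because the target term $\ind(R=0)\{g_a-\psi_a\}/\pi_0$ is centered in the influence function; the delta method on $n_r/n$ should confirm that no extra influence-function term appears beyond $\phi_a^y$. After this step it suffices to study
\[
\frac{1}{n}\sum_{k}\sum_{i\in\mathcal I_k}\widehat\Gamma_{a,k}^y(O_i)-\psi_a(y)+o_p(n^{-1/2}).
\]

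Next, I decompose this quantity fold by fold into three pieces:
\[
(\Pn^{(k)}-P)\Gamma_a^y
+(\Pn^{(k)}-P)\{\widehat\Gamma_{a,k}^y-\Gamma_a^y\}
+\{P\widehat\Gamma_{a,k}^y-\psi_a(y)\}.
\]
The first piece gives the oracle sum $n^{-1}\sum_i\phi_a^y(O_i)$, using $E\Gamma_a^y=\psi_a(y)$ from Definition \ref{def:uniform_signal}.

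The second piece is the empirical-process remainder. Conditional on the training data, the summands on fold $k$ are i.i.d. with mean zero after centering. Chebyshev's inequality therefore gives a conditional second moment of at most $|\mathcal I_k|^{-1}\norm{\widehat\Gamma_{a,k}^y-\Gamma_a^y}_{P,2}^2$. By Assumption \ref{ass:rates_fixed} this is $o_p(n^{-1})$, and a conditional-to-unconditional argument makes the piece $o_p(n^{-1/2})$. No Donsker condition is needed because $y$ is fixed.

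The third piece is the drift. Here $P\widehat\Gamma_{a,k}^y=\Psi_a^y(\widehat\eta^{(-k)})$ in the notation of Definition \ref{def:signal}, because $P(\ind(R=0)\,\cdot)/\pi_0=E_0$ and $P(\ind(R=1)\,\cdot)/\pi_1=E_1$. Proposition \ref{prop:drift} then bounds the drift by the product terms in \eqref{eq:productrate}, with $\bar w_a=\widehat\omega e_a/\widehat e_a$ and $\bar r_a=\rho_a/\widehat\rho_a$. The positivity bounds hold with probability tending to one, so $\bar w_a$ is bounded and the constant $C$ is valid on that event. The drift is therefore $o_p(n^{-1/2})$. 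Summing over the fixed number $K$ of folds gives the asymptotic linear representation.

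Asymptotic normality then follows from the Lindeberg--L\'evy CLT applied to the i.i.d. mean-zero terms $\phi_a^y(O_i)$. These have finite variance $V_a(y)$ because all weights are bounded by positivity and $Z_y,m_a,g_a\in[0,1]$. An estimator that is asymptotically linear with the canonical gradient of Theorem \ref{thm:eif} as its influence function is regular and efficient by standard semiparametric theory, and that gives the efficiency claim.

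The main obstacle is the bookkeeping in the first step. The raw estimator normalizes by $n_0,n_1$ rather than $n\pi_r$, and the estimated $\widehat\omega$ is itself normalized on training data. I would rely on Lemma \ref{lem:ratio}, checking that the ratio error multiplies a term whose mean is already $o_p(1)$, so that the product is $o_p(n^{-1/2})$.
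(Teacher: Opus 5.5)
Your ingredients are the same as the paper's: Lemma \ref{lem:ratio} for the random denominators, $P\widehat\Gamma_{a,k}^y=\Psi_a^y(\widehat\eta^{(-k)})$ plus Proposition \ref{prop:drift} for the drift, conditional Chebyshev for the cross-fitted remainder, then the CLT and Theorem \ref{thm:eif}. Your treatment of the remainder, the drift, the CLT and efficiency is fine.

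The problem is the random-denominator step you flagged as the main obstacle. As written it contains an error that a second error cancels.

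\emph{First error.} With $\hat\pi_0=n_0/n$, we have $\Pnr{0}\widehat g_{a,i}=(\pi_0/\hat\pi_0)\Pn\{\ind(R=0)\widehat g_{a,i}/\pi_0\}$, and the average on the right tends to $\psi_a(y)$, not to zero. So swapping $n_0$ for $n\pi_0$ costs
\[
\psi_a(y)\Bigl(1-\frac{\hat\pi_0}{\pi_0}\Bigr)+o_p(n^{-1/2}),
\]
which is $O_p(n^{-1/2})$. Hence your reduction to $n^{-1}\sum_i\widehat\Gamma_{a,k(i)}^y(O_i)-\psi_a(y)+o_p(n^{-1/2})$ is false.

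\emph{Second error.} Your first piece is $(\Pn-P)\Gamma_a^y=\Pn\{\Gamma_a^y-\psi_a(y)\}$, and
\[
\Gamma_a^y-\psi_a(y)=\phi_a^y+\psi_a(y)\Bigl\{\frac{\ind(R=0)}{\pi_0}-1\Bigr\}.
\]
So this piece equals $\Pn\phi_a^y+\psi_a(y)(\hat\pi_0/\pi_0-1)$, not $\Pn\phi_a^y$. The identity $\phi_a^y=\Gamma_a^y-\psi_a(y)$ stated in Definition \ref{def:uniform_signal} holds only in expectation.

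The two $O_p(n^{-1/2})$ terms cancel, which is why your final display is right. Read literally, though, your argument assigns $\widehat\psi_a(y)$ the influence function $\Gamma_a^y-\psi_a(y)$. That is the influence function of a different estimator, one normalized by $n\pi_r$, and its variance is $V_a(y)+\psi_a(y)^2\pi_1/\pi_0$. This step is exactly where the efficiency claim is decided, so it cannot be waved through.

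Your verbal remark, that the $\ind(R=0)-\pi_0$ fluctuation is absorbed by the centred target term, is the right fix. Make it exact by writing $\Pnr{0}\widehat g_{a,i}-\psi_a(y)=\Pnr{0}\{\widehat g_{a,i}-\psi_a(y)\}$ before swapping denominators. The two swapped averages are then $o_p(1)$, though not drift-sized as you claimed. Given the training folds, their means are $E_0(\widehat g_a-g_a)$ and $-E_0(\widehat g_a-g_a)+\{\Psi_a^y(\widehat\eta)-\psi_a(y)\}$. Both are $o_p(1)$ by $\norm{\widehat\Gamma_a^y-\Gamma_a^y}_{P,2}=o_p(1)$ and the drift bound, so multiplying them by $\pi_r/\hat\pi_r-1=O_p(n^{-1/2})$ is harmless. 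The object to study becomes $\Pn\{\widehat\Gamma_a^y-\ind(R=0)\psi_a(y)/\pi_0\}$, and $\Gamma_a^y-\ind(R=0)\psi_a(y)/\pi_0=\phi_a^y$ exactly. Your remaining decomposition then goes through verbatim.

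The paper avoids the issue by ordering things differently. It decomposes in the conditional measures,
\[
\widehat\psi_a(y)-\psi_a(y)=(\Pnr{0}g_a-E_0g_a)+\Pnr{1}B+D_n+S_n,
\]
and applies Lemma \ref{lem:ratio} only at the end, to the fixed oracle functions $g_a$ and $B$. There $P_0g_a=\psi_a(y)$ and $E_1B=0$ make the centering automatic. Your ordering gives a cleaner remainder step: a plain i.i.d.\ conditional Chebyshev bound on unconditional fold averages, with no conditioning on the $R$-labels. The cost is this extra $\psi_a(y)$ bookkeeping.
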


Theorem \ref{thm:fixed_al} gives the fixed-$y$ CDF expansion with the efficient influence function from Theorem \ref{thm:eif}.  To obtain quantile inference, the expansion must hold locally around the true quantile and the inverse-CDF operation must be stable, as in classical and semiparametric QTE arguments \citep{KoenkerBassett1978,Firpo2007}.

\begin{assumption}[Local inverse-map conditions for pointwise quantiles]\label{ass:local_quantile}
Fix $\tau\in\cT$ and let $q_a=q_a(\tau)$. For each $a\in\{0,1\}$, $\widehat F_a$ is nondecreasing on a neighborhood $\mathcal N_a$ of $q_a$, $\widehat q_a(\tau)\in\mathcal N_a$ with probability tending to one, and
\[
\sup_{y\in\mathcal N_a}|\widehat F_a(y)-\psi_a(y)|=o_p(1).
\]
Moreover,
\[
\sup_{y\in\mathcal N_a}\left|\widehat F_a(y)-\psi_a(y)-\Pn\phi_a^y\right|=o_p(n^{-1/2}).
\]
The function $\psi_a$ is differentiable at $q_a$ with derivative $f_a(q_a)>0$. In addition, the local empirical process is asymptotically equicontinuous at the true quantile: for every deterministic sequence $\delta_n\downarrow0$,
\[
\sup_{|y-q_a|\le \delta_n,\,y\in\mathcal N_a}
\left|\Pn(\phi_a^y-\phi_a^{q_a})\right|=o_p(n^{-1/2}).
\]
A sufficient primitive condition for the last display is that $\{\phi_a^y:y\in\mathcal N_a\}$ is locally $P$-Donsker and $\|\phi_a^y-\phi_a^{q_a}\|_{P,2}\to0$ as $y\to q_a$.
\end{assumption}

Assumption \ref{ass:local_quantile} states exactly what is needed for pointwise inversion.  It combines local monotonicity of the estimated CDF, a local linear expansion of $\widehat F_a$, a positive derivative of $\psi_a$ at the target quantile, and a local empirical-process continuity condition for $\phi_a^y$.

\begin{corollary}[Pointwise Bahadur representation and QTE inference]\label{cor:pointwise}
Under Assumptions \ref{ass:causal}--\ref{ass:positivity} and \ref{ass:local_quantile}, for fixed $\tau\in\cT$,
\[
\sqrt n\{\widehat q_a(\tau)-q_a(\tau)\}
=-\frac1{\sqrt n}\sum_{i=1}^n\frac{\phi_a^{q_a(\tau)}(O_i)}{f_a(q_a(\tau))}+o_p(1).
\]
If Assumption \ref{ass:local_quantile} holds for $a=0,1$, then
\[
\sqrt n\{\widehat\Delta(\tau)-\Delta(\tau)\}
=\frac1{\sqrt n}\sum_{i=1}^n\phi_{\Delta,\tau}(O_i)+o_p(1),
\]
and hence
\[
\sqrt n\{\widehat\Delta(\tau)-\Delta(\tau)\}\dto N\{0,\sigma_\Delta^2(\tau)\},
\qquad \sigma_\Delta^2(\tau)=E\{\phi_{\Delta,\tau}^2(O)\}.
\]
\end{corollary}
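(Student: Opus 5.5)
The plan is the standard Bahadur-type inversion argument for a generalized inverse of a monotone random function, applied arm by arm and then differenced. Fix $a$ and $\tau$, write $q_a=q_a(\tau)$ and $\widehat q_a=\widehat q_a(\tau)$, and let $\mathbb G_n^y=\sqrt n\,\Pn\phi_a^y$.

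\emph{Step 1: consistency.} We first show $\widehat q_a\to q_a$ in probability. Since $f_a(q_a)>0$ and $\psi_a$ is nondecreasing, for every $\delta>0$ we have $\psi_a(q_a-\delta)<\tau<\psi_a(q_a+\delta)$, with both points in $\mathcal N_a$ for small $\delta$. Local uniform consistency $\sup_{\mathcal N_a}|\widehat F_a-\psi_a|=o_p(1)$ then gives, with probability tending to one, $\widehat F_a(q_a-\delta)<\tau\le\widehat F_a(q_a+\delta)$. Because $\widehat F_a$ is nondecreasing on $\mathcal N_a$ and $\widehat q_a\in\mathcal N_a$ w.p.a.\ one, this forces $\widehat q_a\in[q_a-\delta,q_a+\delta]$. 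Consequently there is a deterministic $\delta_n\downarrow0$ with $|\widehat q_a-q_a|\le\delta_n$ w.p.a.\ one.

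\emph{Step 2: the near-root identity.} Next we establish the generalized-inverse bound $\widehat F_a(\widehat q_a-\epsilon_n)<\tau\le\widehat F_a(\widehat q_a)$ for any $\epsilon_n>0$. The right inequality uses right-continuity of $\widehat F_a$, which holds for the isotonic step version; otherwise we use the infimum with an $o(n^{-1})$ slack. Taking $\epsilon_n=n^{-1}$ and using the local expansion at the two (random) points $\widehat q_a$ and $\widehat q_a-n^{-1}$, both lying in $\mathcal N_a$, we obtain
\[
\psi_a(\widehat q_a)-\tau=-\Pn\phi_a^{\widehat q_a}+o_p(n^{-1/2})+R_n,
\]
where $R_n$ is bounded by $\psi_a(\widehat q_a)-\psi_a(\widehat q_a-n^{-1})$ plus the increment $\Pn(\phi_a^{\widehat q_a}-\phi_a^{\widehat q_a-n^{-1}})$. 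The first piece is $O(n^{-1})$ by differentiability near $q_a$. The second is $o_p(n^{-1/2})$ by applying equicontinuity twice, relative to $q_a$. The uniform supremum in the assumption is what permits evaluation at a random point.

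\emph{Step 3: swap and invert (the main obstacle).} Equicontinuity with the $\delta_n$ from Step 1 gives $\Pn\phi_a^{\widehat q_a}=\Pn\phi_a^{q_a}+o_p(n^{-1/2})$, and $\Pn\phi_a^{q_a}=O_p(n^{-1/2})$ by the CLT, since $\phi_a^{q_a}$ is bounded under positivity. Differentiability at $q_a$ gives $\psi_a(\widehat q_a)-\tau=f_a(q_a)(\widehat q_a-q_a)+o(|\widehat q_a-q_a|)$, where $\psi_a(q_a)=\tau$ holds by continuity. The delicate point is to first obtain the rate $|\widehat q_a-q_a|=O_p(n^{-1/2})$ from $f_a(q_a)|\widehat q_a-q_a|\{1+o_p(1)\}=O_p(n^{-1/2})$, and only then solve. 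This yields
\[
\sqrt n(\widehat q_a-q_a)=-\sqrt n\,\Pn\phi_a^{q_a}/f_a(q_a)+o_p(1).
\]
This is the first display, because $\Pn\phi_a^{q_a}=n^{-1}\sum_i\phi_a^{q_a}(O_i)$.

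\emph{Step 4: the QTE.} Subtracting the two arm expansions gives the representation with $\phi_{\Delta,\tau}$ as in Corollary~\ref{cor:eif_quantile}. Since $\phi_{\Delta,\tau}$ is a fixed, mean-zero, square-integrable function (it is bounded by positivity and $f_a(q_a)>0$), the Lindeberg--Lévy CLT and Slutsky's lemma give the normal limit with variance $\sigma_\Delta^2(\tau)$.
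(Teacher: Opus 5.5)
Your proposal is correct, but it takes a different route from the paper.

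\textbf{The paper's route.} After the same bracketing consistency step, the paper never evaluates the expansion at $\widehat q_a(\tau)$ itself. It defines the bracket points $y_{n,\pm}=q_a-\Pn\phi_a^{q_a}/f_a(q_a)\pm t/\sqrt n$, which lie within $O_p(n^{-1/2})$ of $q_a$ by construction. The uniform local expansion, differentiability at $q_a$ and equicontinuity then give $\widehat F_a(y_{n,\pm})-\tau=\pm f_a(q_a)t/\sqrt n+o_p(n^{-1/2})$. Monotonicity of $\widehat F_a$ and the generalized-inverse definition trap $\widehat q_a(\tau)$ between $y_{n,-}$ and $y_{n,+}$ for every fixed $t>0$. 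This delivers the $n^{-1/2}$ rate and the Bahadur term at once. It needs no control of possible jumps of $\widehat F_a$ at $\widehat q_a$ and no separate rate-then-solve step.

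\textbf{Your route.} You plug in the estimator itself, in the style of an approximate Z-estimator. So you must first show the near-root property, that $\widehat F_a(\widehat q_a)-\tau$ is negligible at the root-$n$ scale (your Step 2), and then bootstrap the rate in Step 3. This costs an extra step. In exchange, it makes explicit that the generalized inverse is an approximate root even when $\widehat F_a$ is discontinuous. It also mirrors the standard argument for estimators defined as approximate zeros of an estimating equation.

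\textbf{One correction.} The corollary assumes differentiability of $\psi_a$ only at $q_a$; Assumption \ref{ass:quantile} is not among its hypotheses. So your claim that $\psi_a(\widehat q_a)-\psi_a(\widehat q_a-n^{-1})=O(n^{-1})$ ``by differentiability near $q_a$'' is not available as stated. Expanding both terms around $q_a$ instead gives $f_a(q_a)n^{-1}+o_p(|\widehat q_a-q_a|+n^{-1})$. The extra $o_p(|\widehat q_a-q_a|)$ is absorbed in Step 3 exactly like the linearization remainder of $\psi_a(\widehat q_a)$: first deduce $|\widehat q_a-q_a|\{f_a(q_a)-o_p(1)\}=O_p(n^{-1/2})$, then solve. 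The conclusion is unaffected.
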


Corollary \ref{cor:pointwise} is a pointwise Bahadur representation.  It is enough for a fixed quantile level, but not for simultaneous CDF or QTE bands.  The next assumptions strengthen the fixed-$y$ expansion to a uniform expansion on $\cY$, in line with distributional-effect inference based on full counterfactual processes \citep{ChernozhukovFernandezValMelly2013}.

\begin{assumption}[Uniform oracle process regularity]\label{ass:uniform_oracle}
Let $\cY\subset\Rbb$ be compact. For each $a\in\{0,1\}$:
\begin{enumerate}[label=(\roman*)]
\item The functions $y\mapsto m_a(y,x,s)$ and $y\mapsto g_a(y,x)$ are distribution functions for all $(x,s)$ and $x$, respectively, and are uniformly continuous in $y$ in $L_2$:
\[
\sup_{|u-v|\le h}\|m_a(u)-m_a(v)\|_{1a,2}+\sup_{|u-v|\le h}\|g_a(u)-g_a(v)\|_{P_{X,1},2}\to0.
\]
\item The density ratio, treatment propensity, and labeling propensity are bounded as in Assumption \ref{ass:positivity}.
\item The oracle influence-function class $\Phi_a=\{\phi_a^y:y\in\cY\}$ is pointwise measurable and $P$-Donsker with a bounded envelope. Its covariance semimetric is continuous in $y$.
\end{enumerate}
A sufficient primitive condition for (iii) is that the nuisance paths in (i) are contained in bounded one-dimensional VC-subgraph or bounded-variation subgraph classes after multiplication by the bounded weights in \eqref{eq:eif}.
\end{assumption}

Assumption \ref{ass:uniform_oracle} concerns the oracle influence-function class, before nuisance estimation.  It is a regularity condition on the target process itself: the conditional CDF paths must vary continuously in $y$, and the resulting influence-function class must admit a functional central limit theorem on $\cY$.

\begin{lemma}[Oracle CDF empirical process]\label{lem:oracle_donsker}
Under Assumption \ref{ass:uniform_oracle},
\[
\sqrt n(\Pn-P)\phi_a^{\cdot}\dto \mathbb G_a\quad\text{in }\ell^\infty(\cY),
\]
where $\mathbb G_a$ is a tight mean-zero Gaussian process with covariance
\[
\Cov\{\mathbb G_a(y),\mathbb G_a(y')\}=E\{\phi_a^y(O)\phi_a^{y'}(O)\}.
\]
\end{lemma}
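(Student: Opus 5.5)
The lemma is essentially a direct application of the functional central limit theorem for $P$-Donsker classes, so the plan is to verify the hypotheses of that theorem and then identify the limit. By Assumption \ref{ass:uniform_oracle}(iii), the class $\Phi_a=\{\phi_a^y:y\in\cY\}$ is pointwise measurable and $P$-Donsker with a bounded envelope. By the definition of the Donsker property \citep{vdVW1996}, the empirical process $\sqrt n(\Pn-P)f$, indexed by $f\in\Phi_a$, converges weakly in $\ell^\infty(\Phi_a)$ to a tight, mean-zero $P$-Brownian bridge $\mathbb G$. Its covariance is $Pfg-PfPg$.

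The second step transfers this limit from $\ell^\infty(\Phi_a)$ to $\ell^\infty(\cY)$. The map $y\mapsto\phi_a^y$ sends $\cY$ onto $\Phi_a$. The induced map $T:\ell^\infty(\Phi_a)\to\ell^\infty(\cY)$, defined by $(Tz)(y)=z(\phi_a^y)$, is linear with $\|Tz\|_{\cY}\le\|z\|_{\Phi_a}$, so it is continuous. The continuous mapping theorem then gives $\sqrt n(\Pn-P)\phi_a^{\cdot}\dto\mathbb G_a:=T\mathbb G$. Tightness is preserved under a continuous map, so $\mathbb G_a$ is tight. Continuity of the covariance semimetric in $y$ ensures that $\mathbb G_a$ has a version with sample paths uniformly continuous on the compact set $\cY$ for the standard-deviation semimetric, and hence with respect to the Euclidean metric as well.

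The third step identifies the covariance. By Definition \ref{def:uniform_signal}, $P\phi_a^y=E\Gamma_a^y-\psi_a(y)=0$. This also follows termwise: the second residual has conditional mean zero given $(R=1,A=a,X)$ by the definition of $g_a$. The third residual has conditional mean zero given $(R=1,A=a,X,S,M=1)$ by the definition of $m_a$. The first term is centered because $E_0 g_a=\psi_a$ by Proposition \ref{prop:identification}. The Brownian-bridge covariance $P\phi_a^y\phi_a^{y'}-P\phi_a^yP\phi_a^{y'}$ therefore reduces to $E\{\phi_a^y(O)\phi_a^{y'}(O)\}$. This moment is finite because of the bounded envelope, which in turn follows from the positivity bounds in Assumption \ref{ass:uniform_oracle}(ii) and the fact that $m_a,g_a,Z_y\in[0,1]$.

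The main obstacle is not the limit argument, which is routine once Donsker is granted, but confirming that the hypotheses hold. If one wishes to justify (iii) from the primitive sufficient condition, the plan is as follows. The maps $y\mapsto Z_y$, $y\mapsto m_a(y,x,s)$ and $y\mapsto g_a(y,x)$ are monotone, so the classes they generate are VC-subgraph or bounded-variation classes with bracketing entropy of order $\epsilon^{-1}$. Multiplying by the fixed bounded weights in \eqref{eq:eif} and summing three such classes preserves the Donsker property, by the permanence results in \citet{vdVW1996}. The $L_2$-continuity in Assumption \ref{ass:uniform_oracle}(i) gives the covariance continuity, with $\|\phi_a^u-\phi_a^v\|_{P,2}\to0$ as $|u-v|\to0$. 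Here the indicator term is handled through $E|Z_u-Z_v|^2\le|m$-increments$|$ integrated, which tends to zero by (i).
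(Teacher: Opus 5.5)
Your proof is correct and follows the same route as the paper: you invoke the functional central limit theorem given by the Donsker property in Assumption \ref{ass:uniform_oracle}(iii), then read off tightness and the covariance. You also supply two steps the paper leaves implicit: the explicit continuous reindexing map from $\ell^\infty(\Phi_a)$ to $\ell^\infty(\cY)$, and the check that $P\phi_a^y=0$, which uses the identification result and lets the Brownian-bridge covariance $P\phi_a^y\phi_a^{y'}-P\phi_a^yP\phi_a^{y'}$ reduce to $E\{\phi_a^y(O)\phi_a^{y'}(O)\}$.
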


Lemma \ref{lem:oracle_donsker} supplies the limiting Gaussian process that would be obtained if the nuisance functions were known.  The role of the next assumption is to ensure that replacing the oracle signal by the cross-fitted estimated signal does not change this first-order process.

\begin{assumption}[Uniform nuisance rates and realized-class entropy]\label{ass:uniform_nuisance}
For each $a$ and each fold, with probability tending to one, the estimated nuisance functions satisfy the same positivity and boundedness bounds as the true nuisance functions. Define
\begin{align*}
r_{w,a,n}&=\norm{\widehat\omega\,e_a/\widehat e_a-\omega}_{P_{X,1},2},\\
r_{g,a,n}&=\sup_{y\in\cY}\norm{\widehat g_a(y,\cdot)-g_a(y,\cdot)}_{P_{X,1},2},\\
r_{\rho\mathrm{rat},a,n}&=\norm{\rho_a/\widehat\rho_a-1}_{1a,2},\\
r_{m,a,n}&=\sup_{y\in\cY}\norm{\widehat m_a(y,\cdot,\cdot)-m_a(y,\cdot,\cdot)}_{1a,2}.
\end{align*}
The rates obey
\[
r_{w,a,n}r_{g,a,n}+r_{\rho\mathrm{rat},a,n}r_{m,a,n}=o_p(n^{-1/2}).
\]
Moreover,
\[
\sup_{y\in\cY}\norm{\widehat\Gamma_{a,k}^y-\Gamma_a^y}_{P,2}=o_p(1),
\]
uniformly over folds $k$. Let
\[
\widehat{\mathcal G}_{a,k}=\{\widehat\Gamma_{a,k}^y-\Gamma_a^y:y\in\cY\}.
\]
Conditional on the training folds, with probability tending to one, $\widehat{\mathcal G}_{a,k}$ is pointwise measurable and has a measurable envelope $\widehat G_{a,k}$ satisfying $\|\widehat G_{a,k}\|_{\infty}=O_p(1)$. There exist deterministic sequences $\delta_{a,n}\downarrow0$, $V_{a,n}\ge1$, and $A_{a,n}\ge e$ such that, with probability tending to one,
\[
\sup_{f\in\widehat{\mathcal G}_{a,k}}\|f\|_{P,2}\le\delta_{a,n},
\]
\[
\sup_Q \log N\{\epsilon\|\widehat G_{a,k}\|_{Q,2},\widehat{\mathcal G}_{a,k},L_2(Q)\}
\le V_{a,n}\log(A_{a,n}/\epsilon),\qquad 0<\epsilon\le1,
\]
and
\[
\delta_{a,n}\sqrt{V_{a,n}\log(A_{a,n}/\delta_{a,n})}
+\frac{V_{a,n}\log(A_{a,n}/\delta_{a,n})}{\sqrt n}=o(1).
\]

Finally, the fold-specific random-denominator classes also satisfy the entropy conditions of Lemma \ref{lem:uniform_ratio}. Specifically, with
\[
\widehat G_{0,a,k}^y(O)=\widehat g_{a,k}(y,X),
\]
and
\begin{align*}
\widehat G_{1,a,k}^y(O)
&=\frac{\widehat\omega_k(X)\ind(A=a)}{\widehat e_{a,k}(X)}\{\widehat m_{a,k}(y,X,S)-\widehat g_{a,k}(y,X)\}\\
&\quad+\frac{\widehat\omega_k(X)\ind(A=a)M}{\widehat e_{a,k}(X)\widehat\rho_{a,k}(X,S)}\{\ind(Y\le y)-\widehat m_{a,k}(y,X,S)\},
\end{align*}
the classes $\{\widehat G_{0,a,k}^y:y\in\cY\}$ and $\{\widehat G_{1,a,k}^y:y\in\cY\}$ obey Lemma \ref{lem:uniform_ratio} for $r=0$ and $r=1$, respectively, with uniformly bounded envelopes. The same is required for the corresponding oracle denominator classes. This additional condition is only used to replace conditional empirical means $\Pnr{r}$ by unconditional empirical means with known $\pi_r$ uniformly in $y$.
This condition contains the fixed-polynomial-entropy case as the special case $V_{a,n}=O(1)$ and $A_{a,n}=O(1)$, but it also allows finite-dimensional sieve classes whose entropy dimension grows slowly with $n$.
\end{assumption}

Assumption \ref{ass:uniform_nuisance} has three parts.  The first is the uniform product-rate drift condition.  The second requires the estimated one-step signal to approach the oracle signal uniformly in $L_2(P)$.  The third controls the entropy of the realized perturbation classes so that cross-fitting can turn $L_2(P)$ convergence into a uniform empirical-process remainder of order $o_p(n^{-1/2})$.

\begin{lemma}[Cross-fitted small empirical process]\label{lem:small_ep}
Let $\widehat{\mathcal G}_{k}$ be a fold-specific random function class independent of the evaluation fold conditional on the training folds. Suppose that, with probability tending to one, it is pointwise measurable, has envelope $\widehat G_k$ with $\|\widehat G_k\|_{\infty}=O_p(1)$, satisfies
\[
\sup_{f\in\widehat{\mathcal G}_{k}}\|f\|_{P,2}\le\delta_n,
\]
and
\[
\sup_Q \log N\{\epsilon\|\widehat G_k\|_{Q,2},\widehat{\mathcal G}_{k},L_2(Q)\}
\le V_n\log(A_n/\epsilon),\qquad 0<\epsilon\le1,
\]
for deterministic sequences $\delta_n\downarrow0$, $V_n\ge1$, $A_n\ge e$ satisfying
\[
\delta_n\sqrt{V_n\log(A_n/\delta_n)}+
\frac{V_n\log(A_n/\delta_n)}{\sqrt n}=o(1).
\]
Then
\[
\sup_{f\in\widehat{\mathcal G}_{k}}\left|(\mathbb P_{n,k}-P)f\right|=o_p(n^{-1/2})
\]
for each fixed number of folds, where $\mathbb P_{n,k}=|\mathcal I_k|^{-1}\sum_{i\in\mathcal I_k}\delta_{O_i}$ denotes the empirical measure over the evaluation fold. The same conclusion holds for $\Pn$ after averaging over a fixed number of folds.
\end{lemma}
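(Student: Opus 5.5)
The plan is to condition on the training folds and apply a maximal inequality for empirical processes indexed by VC-type classes with a small $L_2$ radius, such as Corollary 5.1 of \citet{ChernozhukovChetverikovKato2014}. Let $\mathcal D_k$ denote the observations outside $\mathcal I_k$. Conditional on $\mathcal D_k$, the class $\widehat{\mathcal G}_k$ is a fixed function class. The evaluation observations $\{O_i:i\in\mathcal I_k\}$ are i.i.d.\ from $P$ and independent of $\mathcal D_k$. Let $\mathcal E_n$ be the $\mathcal D_k$-measurable event on which pointwise measurability, the envelope bound $\|\widehat G_k\|_\infty\le B$ (for a large fixed $B$ chosen from the $O_p(1)$ statement), the radius bound $\delta_n$, and the uniform entropy bound all hold. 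On $\mathcal E_n$ the maximal inequality gives, with $n_k=|\mathcal I_k|\asymp n/K$,
\[
E\Bigl[\sup_{f\in\widehat{\mathcal G}_k}\sqrt{n_k}\,\bigl|(\mathbb P_{n,k}-P)f\bigr|\,\Bigm|\,\mathcal D_k\Bigr]
\lesssim \delta_n\sqrt{V_n\log(A_nB/\delta_n)}+\frac{BV_n\log(A_nB/\delta_n)}{\sqrt{n_k}}.
\]
We use $\sigma=\delta_n$ and $\|F\|_{P,2}\le\|F\|_\infty\le B$. If $\delta_n<\|\widehat G_k\|_{P,2}$ fails, we instead use the trivial bound $\sigma\le\|F\|$ and still obtain the same expression with $\delta_n$ replaced by $\max(\delta_n,\cdot)$. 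This causes no harm because we may always enlarge $\delta_n$ to any sequence that still tends to zero and satisfies the rate condition.

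We absorb the constant $B$ into $A_n$. This is valid because $\log(A_nB/\delta_n)\le \log B+\log(A_n/\delta_n)$ and $\log(A_n/\delta_n)\ge1$. The assumed rate condition then makes the right-hand side $o(1)$, and $n_k\asymp n$ because $K$ is fixed. Next we apply the conditional Markov inequality. For any $\epsilon>0$,
\[
P\Bigl(\sup_f\sqrt n|(\mathbb P_{n,k}-P)f|>\epsilon\Bigr)\le E\bigl[\min\{1,\epsilon^{-1}o(1)\}\mathbf 1_{\mathcal E_n}\bigr]+P(\mathcal E_n^c).
\]
The right-hand side tends to zero because $P(\mathcal E_n)\to1$. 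Since $B$ is arbitrary, this handles the envelope being only $O_p(1)$: first choose $B$ so that $P(\|\widehat G_k\|_\infty>B)$ is small, and then let $n\to\infty$. This gives $\sup_f|(\mathbb P_{n,k}-P)f|=o_p(n^{-1/2})$.

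For the full-sample statement, write $\Pn f=\sum_k (n_k/n)\mathbb P_{n,k}f$, applied to the fold-specific functions. This gives
\[
\sup|(\Pn-P)\widehat f|\le\sum_{k=1}^K\sup_{f\in\widehat{\mathcal G}_k}|(\mathbb P_{n,k}-P)f|.
\]
This is a finite sum of $o_p(n^{-1/2})$ terms. The fold sizes may be random, but $n_k/n\le1$, so this does not affect the bound.

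The main obstacle is the measurability and conditioning step. The maximal inequality must be applied with a random class, so we need the class to be pointwise measurable (this is assumed) and the independence of $\mathcal I_k$ from $\mathcal D_k$ to make the conditional law of the evaluation sample equal to $P^{\otimes n_k}$. A secondary point is checking that the envelope normalization in the entropy condition matches the cited inequality, which we handle by absorbing constants into $A_n$.
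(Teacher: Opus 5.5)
Your proposal is correct and takes essentially the same route as the paper. You condition on the training folds, apply the VC-type maximal inequality with $L_2(P)$ radius $\delta_n$ (Chernozhukov, Chetverikov and Kato, 2014), invoke the rate condition and conditional Markov, and then sum over the fixed number of folds. Your truncation at a fixed $B$ is more careful than the paper's appeal to ``constants depending only on the envelope bound'' for handling $\|\widehat G_k\|_\infty=O_p(1)$. One small slip: when $\delta_n\ge\|\widehat G_k\|_{P,2}$, you should take $\sigma=\|\widehat G_k\|_{P,2}\le\delta_n$, a minimum rather than a maximum, and this still gives the same bound.
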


Lemma \ref{lem:small_ep} is the empirical-process device that makes the preceding assumption operational.  Conditional on the training folds, the nuisance estimates are fixed relative to the evaluation fold, so the size and entropy of the perturbation class determine whether the fitted-signal empirical process is negligible uniformly in $y$.

\begin{proposition}[Uniform asymptotic linearity of the CDF estimator]\label{prop:uniform_linearization}
Under Assumptions \ref{ass:causal}--\ref{ass:positivity}, \ref{ass:uniform_oracle}, and \ref{ass:uniform_nuisance},
\[
\sup_{y\in\cY}\left|\widehat\psi_a(y)-\psi_a(y)-\Pn\phi_a^y\right|=o_p(n^{-1/2}).
\]
\end{proposition}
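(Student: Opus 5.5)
The plan is the standard three-term decomposition, made uniform in $y$. Write $\widehat\psi_a(y)$ as a fold-average of the conditional empirical means $\Pnr{0}\widehat G_{0,a,k}^y$ and $\Pnr{1}\widehat G_{1,a,k}^y$. The first step removes the random denominators $n_r/n$. By Lemma \ref{lem:uniform_ratio}, applied to the classes $\{\widehat G_{r,a,k}^y:y\in\cY\}$ and to their oracle counterparts (this is exactly what the last part of Assumption \ref{ass:uniform_nuisance} supplies), we can replace $\Pnr{r}$ by $\Pn\{\ind(R=r)\cdot/\pi_r\}$ uniformly in $y$. Because of the mean-zero centering in each $r$-block, the error of this replacement is $o_p(n^{-1/2})$, up to a common linear term in $(n_r/n-\pi_r)$. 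That common term is absorbed into the target-covariate part $\ind(R=0)\{g_a-\psi_a\}/\pi_0$ of $\phi_a^y$. After this reduction,
\[
\widehat\psi_a(y)=\frac1K\sum_k \frac{|\mathcal I_k|}{n/K}\,\mathbb P_{n,k}\widehat\Gamma^y_{a,k}+o_p(n^{-1/2})
\]
uniformly in $y$.

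Next, for each fold $k$, decompose
\[
\mathbb P_{n,k}\widehat\Gamma^y_{a,k}-\psi_a(y)=(\mathbb P_{n,k}-P)\Gamma_a^y+(\mathbb P_{n,k}-P)(\widehat\Gamma^y_{a,k}-\Gamma^y_a)+\{P\widehat\Gamma^y_{a,k}-\psi_a(y)\}.
\]
\begin{enumerate}[label=(\roman*)]
\item The first term, averaged over folds, equals $\Pn\phi_a^y$ exactly, since $P\Gamma_a^y=\psi_a(y)$ by Definition \ref{def:uniform_signal}.
\item The second term is the cross-fitted perturbation. Conditional on the training folds, it is an empirical process over $\widehat{\mathcal G}_{a,k}$. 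The envelope, $L_2(P)$-size and entropy requirements of Lemma \ref{lem:small_ep} are precisely those in Assumption \ref{ass:uniform_nuisance}. The lemma therefore gives $\sup_y|(\mathbb P_{n,k}-P)(\widehat\Gamma^y_{a,k}-\Gamma_a^y)|=o_p(n^{-1/2})$, and a fixed number of folds preserves this rate.
\item The third term is the drift. It is handled by Proposition \ref{prop:drift}, applied at each $y$ with $\bar\eta=\widehat\eta^{(-k)}$ held fixed; this is legitimate because the evaluation fold is independent of the training data. Its bound is $C\{r_{w,a,n}\|\widehat g_a(y)-g_a(y)\|+r_{\rho\mathrm{rat},a,n}\|\widehat m_a(y)-m_a(y)\|\}$, where the constant $C$ depends only on positivity bounds and not on $y$. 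Taking the supremum over $\cY$ gives $C\{r_{w,a,n}r_{g,a,n}+r_{\rho\mathrm{rat},a,n}r_{m,a,n}\}=o_p(n^{-1/2})$.
\end{enumerate}
Fold sizes satisfy $|\mathcal I_k|K/n\to1$ at rate $O_p(n^{-1/2})$. Multiplied by the $O_p(n^{-1/2})$ size of the remaining terms, uniformly in $y$ by Lemma \ref{lem:oracle_donsker}, this contributes only $o_p(n^{-1/2})$.

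The main obstacle is the denominator step. Its error is a product of $\sqrt n(n_r/n-\pi_r)$ with a uniform empirical mean. That product is negligible only after centering each $r$-block at its own population mean and checking that the leftover linear term in $(n_0/n-\pi_0)$ matches the target-covariate component of $\phi_a^y$. I would first verify this identity at the oracle level, where $P_1\{\ind(A=a)\omega(\cdot)/e_a\cdot\text{residual}\}=0$. I would then transfer it to the fitted classes via Lemma \ref{lem:uniform_ratio} together with the consistency $\sup_y\|\widehat\Gamma-\Gamma\|_{P,2}=o_p(1)$.
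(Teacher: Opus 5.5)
Your proposal is correct and follows the paper's proof. Both split the error into the oracle term $\Pn\phi_a^y$, the drift bounded uniformly through Proposition \ref{prop:drift}, and the cross-fitted perturbation controlled by Lemma \ref{lem:small_ep}, with the random denominators handled by Lemma \ref{lem:uniform_ratio}; the paper uses the centered form of that lemma directly and never passes through $\Pn\widehat\Gamma_{a,k}^y$.

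One bookkeeping point. Your displayed reduction drops the $O_p(n^{-1/2})$ term $\psi_a(y)\{1-n_0/(n\pi_0)\}$. Step (i) then uses $\phi_a^y=\Gamma_a^y-\psi_a(y)$, which Definition \ref{def:uniform_signal} asserts but which holds only in mean; pointwise, $\phi_a^y=\Gamma_a^y-\ind(R=0)\psi_a(y)/\pi_0$. The two slips cancel exactly, which is the ``absorption'' you describe, so carry that term explicitly.
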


Proposition \ref{prop:uniform_linearization} is the uniform analogue of Theorem \ref{thm:fixed_al}.  It states that, over the whole compact outcome interval $\cY$, the estimated CDF behaves like the empirical mean of the oracle efficient influence function.

\begin{theorem}[Uniform CDF inference]\label{thm:uniform}
If the conditions of Proposition \ref{prop:uniform_linearization} hold for a fixed arm $a$, then
\[
\sqrt n\{\widehat\psi_a-\psi_a\}\dto \mathbb G_a
\quad\text{in }\ell^\infty(\cY),
\]
where $\mathbb G_a$ is the Gaussian process in Lemma \ref{lem:oracle_donsker}. If the same conditions hold jointly for $a=0,1$, then, for the distributional treatment effect,
\[
\sqrt n\{(\widehat\psi_1-\widehat\psi_0)-(\psi_1-\psi_0)\}\dto \mathbb G_1-\mathbb G_0
\quad\text{in }\ell^\infty(\cY),
\]
with the covariance induced by the joint influence functions $(\phi_1^y,\phi_0^{y'})$.
\end{theorem}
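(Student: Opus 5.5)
The plan is to combine Proposition \ref{prop:uniform_linearization} with Lemma \ref{lem:oracle_donsker} through a Slutsky argument in $\ell^\infty(\cY)$. For a fixed arm $a$, I would write
\[
\sqrt n\{\widehat\psi_a(y)-\psi_a(y)\}=\sqrt n\,\Pn\phi_a^y+R_{a,n}(y),\qquad
\sup_{y\in\cY}|R_{a,n}(y)|=o_p(1).
\]
The uniform bound on the remainder is exactly what Proposition \ref{prop:uniform_linearization} gives after multiplying by $\sqrt n$. Because $P\phi_a^y=0$ for every $y$ (the three residual terms in \eqref{eq:eif} are conditionally centered, and $E_0 g_a=\psi_a$ by Proposition \ref{prop:identification}), we have $\sqrt n\,\Pn\phi_a^y=\sqrt n(\Pn-P)\phi_a^y$. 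Lemma \ref{lem:oracle_donsker} then says this converges weakly to $\mathbb G_a$ in $\ell^\infty(\cY)$. The remainder is a random element of $\ell^\infty(\cY)$ whose sup-norm tends to zero in outer probability, so Slutsky's lemma for weak convergence in metric spaces (van der Vaart and Wellner, Lemma 1.10.2 together with the Slutsky theorem in their Chapter 1.10) gives $\sqrt n(\widehat\psi_a-\psi_a)\dto\mathbb G_a$. One measurability point needs checking: $\widehat\psi_a$ may not be Borel measurable in $\ell^\infty(\cY)$. This causes no trouble, because the Hoffmann-J{\o}rgensen framework only requires asymptotic measurability of the limit sequence, and that is supplied by the Donsker property in Assumption \ref{ass:uniform_oracle}(iii).

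For the distributional treatment effect I would stack the two arms. The joint class $\{(y,b)\mapsto\phi_b^y:b\in\{0,1\},y\in\cY\}$ is the union of two Donsker classes, each with a bounded envelope, so it is Donsker. Equivalently, the map $O\mapsto(\phi_0^\cdot(O),\phi_1^\cdot(O))$ is Donsker in $\ell^\infty(\cY)\times\ell^\infty(\cY)$. This follows because a finite union of Donsker classes is Donsker, or alternatively from marginal tightness combined with joint convergence of finite-dimensional distributions, which the multivariate CLT supplies. The limit is therefore a tight Gaussian process $(\mathbb G_0,\mathbb G_1)$ with cross covariance $E\{\phi_1^y\phi_0^{y'}\}$. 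Applying the per-arm uniform linearizations jointly, the stacked remainder is still $o_p(1)$ in the product sup-norm. The difference map $(u,v)\mapsto u-v$ from $\ell^\infty(\cY)^2$ to $\ell^\infty(\cY)$ is continuous and linear, so the continuous mapping theorem yields the limit $\mathbb G_1-\mathbb G_0$ with the stated covariance.

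The main obstacle is mild: making sure the joint limit is a genuinely joint weak convergence, not just two marginal statements. I would handle it by working from the start with the indexed class over $\{0,1\}\times\cY$, so that Lemma \ref{lem:oracle_donsker} applies verbatim to the enlarged index set. The Donsker property, the bounded envelope, and continuity of the covariance semimetric all carry over from the two arm-specific classes, and the union of two compact index sets is totally bounded under the combined semimetric.
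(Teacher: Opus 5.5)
Your proposal is correct and is essentially the paper's proof: combine the uniform linearization of Proposition~\ref{prop:uniform_linearization} with the Donsker limit of Lemma~\ref{lem:oracle_donsker} via Slutsky in $\ell^\infty(\cY)$, then for the distributional effect pass to the stacked two-arm class and apply the continuous mapping theorem for subtraction. Your additional remarks make explicit steps the paper leaves implicit: that $P\phi_a^y=0$, so $\Pn\phi_a^y=(\Pn-P)\phi_a^y$, and that a finite union of Donsker classes is Donsker.
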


Theorem \ref{thm:uniform} converts the uniform linearization into weak convergence of the CDF process and the distributional treatment-effect process.  Once this process convergence is available, quantile-process convergence follows by the same inverse-CDF argument used pointwise, but now uniformly over $\cT$.

\begin{corollary}[Uniform quantile and QTE inference]\label{cor:uniform_quantile}
Suppose the conditions of Theorem \ref{thm:uniform} hold for $a=0,1$, Assumption \ref{ass:quantile} holds, and $\{q_a(\tau):\tau\in\cT\}\subset\cY$. Suppose $\widehat F_a$ is a nondecreasing version of $\widehat\psi_a$ satisfying
\[
\sup_{y\in\cY}|\widehat F_a(y)-\widehat\psi_a(y)|=o_p(n^{-1/2}).
\]
Then
\[
\sqrt n\{\widehat q_a-q_a\}\dto
-\frac{\mathbb G_a(q_a(\cdot))}{f_a(q_a(\cdot))}
\quad\text{in }\ell^\infty(\cT),
\]
and
\[
\sqrt n\{\widehat\Delta-\Delta\}\dto
-\frac{\mathbb G_1(q_1(\cdot))}{f_1(q_1(\cdot))}
+\frac{\mathbb G_0(q_0(\cdot))}{f_0(q_0(\cdot))}
\quad\text{in }\ell^\infty(\cT).
\]
For a growing-grid implementation, the same conclusion holds if the grid and projection conditions of Proposition \ref{prop:grid_projection} hold.
\end{corollary}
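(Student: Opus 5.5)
The proof will combine Theorem \ref{thm:uniform} with Hadamard differentiability of the left-inverse (quantile) map and the functional delta method. First, the monotone version $\widehat F_a$ differs from $\widehat\psi_a$ by $o_p(n^{-1/2})$ uniformly on $\cY$. Hence Slutsky's lemma in $\ell^\infty(\cY)$ gives $\sqrt n(\widehat F_a-\psi_a)\dto\mathbb G_a$, jointly for $a=0,1$; the joint convergence comes from the joint uniform linearization of Proposition \ref{prop:uniform_linearization}, which stacks the two Donsker classes $\Phi_0,\Phi_1$. The limit $\mathbb G_a$ has almost all sample paths in $C(\cY)$, because the covariance semimetric in Assumption \ref{ass:uniform_oracle}(iii) is continuous and $\cY$ is compact. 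Tightness then places the limit in the uniformly continuous subspace.

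Next I would invoke the standard result \citep[Lemma 3.9.23]{vdVW1996}. Let $F$ be continuously differentiable on an interval $[q_a(\tau_L)-\varepsilon,\,q_a(\tau_U)+\varepsilon]\subset\cY$ with derivative bounded below by $c_f>0$ on $\{q_a(\tau):\tau\in\cT\}$. Then the map $\phi:F\mapsto F^{-1}$, viewed as a map from nondecreasing elements of $\ell^\infty(\cY)$ into $\ell^\infty(\cT)$, is Hadamard differentiable at $F$ tangentially to $C(\cY)$, with derivative $h\mapsto -h(F^{-1})/F'(F^{-1})$. Assumption \ref{ass:quantile} supplies the derivative condition, and the hypothesis $\{q_a(\tau)\}\subset\cY$, together with the requirement that the quantiles lie away from the boundary of the differentiability neighborhood, supplies the needed margin inside $\cY$. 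Two points need checking. Since $\widehat q_a$ is defined as an infimum over $\cY$ rather than $\Rbb$, its definition coincides with the inverse map used in the lemma once $\widehat F_a$ crosses every $\tau\in\cT$ inside $\cY$; this holds with probability tending to one by uniform consistency and strict monotonicity of $\psi_a$ near the quantile region. Also, $\cT$ must be a compact subset of $(0,1)$ with $q_a$ continuous on it, which follows from $f_a\ge c_f$ and continuity of $\psi_a$.

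The functional delta method \citep[Theorem 3.9.4]{vdVW1996} then yields $\sqrt n(\widehat q_a-q_a)\dto -\mathbb G_a(q_a(\cdot))/f_a(q_a(\cdot))$ in $\ell^\infty(\cT)$. Applying it to the pair map $(F_1,F_0)\mapsto(F_1^{-1},F_0^{-1})$, which is Hadamard differentiable componentwise, and composing with the continuous linear difference map gives the QTE process limit. For the growing-grid statement, I would use Proposition \ref{prop:grid_projection} only to verify the same two inputs: the grid-interpolated projected CDF is within $o_p(n^{-1/2})$ of $\widehat\psi_a$ uniformly, and it is nondecreasing. After that the argument above applies unchanged.

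The main obstacle is the bookkeeping in the Hadamard-differentiability step. The lemma is usually stated for distribution functions on $\Rbb$, whereas here $\widehat F_a$ lives only on compact $\cY$, and we need $\widehat q_a(\tau)$ in the interior uniformly in $\tau$. My first attempt would extend $\widehat F_a$ and $\psi_a$ outside $\cY$ by constants. Using the margin, I would then show that the infimum over $\cY$ agrees with the unrestricted inverse on an event of probability tending to one. This reduces the claim to the textbook lemma.
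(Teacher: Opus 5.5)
Your proposal follows the same route as the paper's proof: Slutsky's lemma to pass from $\widehat\psi_a$ to the monotone $\widehat F_a$, Hadamard differentiability of the inverse map tangentially to continuous functions with the functional delta method, subtraction of the jointly converging arm-specific limits for the QTE, and Proposition~\ref{prop:grid_projection} for the growing grid; your remarks on path continuity, joint convergence, and the infimum over $\cY$ fill in details the paper leaves implicit. One correction is needed: the margin $[q_a(\min\cT)-\varepsilon,\,q_a(\max\cT)+\varepsilon]\subset\cY$ does not follow from $\{q_a(\tau):\tau\in\cT\}\subset\cY$ and Assumption~\ref{ass:quantile}, because that assumption's margin concerns only the differentiability neighborhood of $\psi_a$, yet the margin is genuinely required (if $q_a(\tau)=\min\cY$, then $\sqrt n\{\widehat q_a(\tau)-q_a(\tau)\}\ge0$ always, which contradicts a nondegenerate centered Gaussian limit), so you must assume the quantile set lies in the interior of $\cY$, as the paper's proof also does tacitly.
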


Corollary \ref{cor:uniform_quantile} is the main simultaneous-inference result for the QTE curve.  Wald-type simultaneous bands require an estimated version of the QTE influence function and, therefore, estimates of the local density factors appearing in Corollary \ref{cor:eif_quantile}.  The next result states these requirements at a high level.

\begin{corollary}[High-level multiplier bootstrap simultaneous bands]\label{cor:multiplier}
Under the conditions of Corollary \ref{cor:uniform_quantile}, suppose additionally that the following high-level estimated-process conditions hold. Define
\[
\widehat\phi_{\Delta,\tau}(O_i)
=-\frac{\widehat\phi_1^{\widehat q_1(\tau)}(O_i)}{\widehat f_1\{\widehat q_1(\tau)\}}
+\frac{\widehat\phi_0^{\widehat q_0(\tau)}(O_i)}{\widehat f_0\{\widehat q_0(\tau)\}},
\]
where $\widehat\phi_a^y$ is the plug-in centered CDF influence function in Definition \ref{def:uniform_signal}. The estimated QTE influence-function class
\[
\widehat\Phi_\Delta=\{\widehat\phi_{\Delta,\tau}:\tau\in\cT\}
\]
is pointwise measurable with probability tending to one and has a uniformly bounded envelope. Conditional on the data, $\widehat\Phi_\Delta$ satisfies a VC-type entropy bound of the form
\[
\sup_Q\log N\{\epsilon\|\widehat G_\Delta\|_{Q,2},\widehat\Phi_\Delta,L_2(Q)\}
\le V_{\Delta,n}\log(A_{\Delta,n}/\epsilon),\qquad 0<\epsilon\le1,
\]
for a measurable envelope $\widehat G_\Delta$ with $\|\widehat G_\Delta\|_\infty=O_p(1)$ and deterministic sequences obeying the corresponding multiplier maximal-inequality requirement
\[
\sup_{\tau\in\cT}\|\widehat\phi_{\Delta,\tau}-\phi_{\Delta,\tau}\|_{P,2}=o_p(1).
\]
Assume the estimated density factors satisfy $\inf_{a,\tau}\widehat f_a\{\widehat q_a(\tau)\}\ge c_f/2$ with probability tending to one and
\[
\sup_{a,\tau\in\cT}\left|\widehat f_a\{\widehat q_a(\tau)\}-f_a\{q_a(\tau)\}\right|=o_p(1).
\]
Let $\xi_1,\ldots,\xi_n$ be i.i.d. multipliers with mean zero, variance one, and finite exponential moment, independent of the data. Then, conditionally on the data,
\[
\mathbb Z_n^*(\tau)=\frac1{\sqrt n}\sum_{i=1}^n\xi_i\{\widehat\phi_{\Delta,\tau}(O_i)-\Pn\widehat\phi_{\Delta,\tau}\}
\]
converges weakly in probability in $\ell^\infty(\cT)$ to the limiting QTE Gaussian process in Corollary \ref{cor:uniform_quantile}. If the distribution of the supremum norm of the limiting process is continuous at its $(1-\alpha)$ quantile, bootstrap critical values give asymptotically valid simultaneous confidence bands for $\Delta(\tau)$ on $\cT$. Alternatively, one may invert valid multiplier CDF bands; that route avoids explicit density estimation but targets bands constructed through the inverse-map operation rather than Wald-type QTE bands.
\end{corollary}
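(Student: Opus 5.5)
The plan is to reduce the conditional weak convergence of $\mathbb Z_n^*$ to a multiplier central limit theorem for the \emph{oracle} QTE influence class $\Phi_\Delta=\{\phi_{\Delta,\tau}:\tau\in\cT\}$, and then show that replacing $\phi_{\Delta,\tau}$ by $\widehat\phi_{\Delta,\tau}$ changes the multiplier process by $o_p(1)$ uniformly in $\tau$. Write
\[
\mathbb Z_n^*(\tau)=\mathbb Z_n^{*,o}(\tau)+\mathbb R_n^*(\tau),\qquad
\mathbb Z_n^{*,o}(\tau)=\frac1{\sqrt n}\sum_{i=1}^n\xi_i\{\phi_{\Delta,\tau}(O_i)-\Pn\phi_{\Delta,\tau}\}.
\]
\textbf{Step 1.} Show that $\Phi_\Delta$ is $P$-Donsker with a bounded envelope. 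The map $\tau\mapsto\phi_{\Delta,\tau}$ is the composition of $y\mapsto\phi_a^y$, which is Donsker by Assumption \ref{ass:uniform_oracle}(iii), with $\tau\mapsto q_a(\tau)$. The latter is continuous by Assumption \ref{ass:quantile}. The result is then scaled by $1/f_a(q_a(\tau))$, which is continuous and bounded below by $c_f$. Reparametrization of a Donsker class by a continuous index map, followed by multiplication by a bounded continuous deterministic function of the index, preserves the Donsker property. The difference of two such classes is also Donsker. The covariance of the resulting limit is exactly that of $-\mathbb G_1(q_1(\cdot))/f_1(q_1(\cdot))+\mathbb G_0(q_0(\cdot))/f_0(q_0(\cdot))$ from Corollary \ref{cor:uniform_quantile}. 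The conditional multiplier central limit theorem (van der Vaart and Wellner, Theorem 2.9.6/2.9.7) then gives $\mathbb Z_n^{*,o}\dto$ this limit conditionally in probability. The centering by $\Pn\phi_{\Delta,\tau}$ only subtracts $(n^{-1/2}\sum\xi_i)\,\Pn\phi_{\Delta,\tau}$, which is $O_p(1)\cdot o_p(1)$ uniformly by Glivenko–Cantelli, since $P\phi_{\Delta,\tau}=0$.

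\textbf{Step 2, the main obstacle.} Show $\sup_\tau|\mathbb R_n^*(\tau)|=o_p(1)$, where
\[
\mathbb R_n^*(\tau)=n^{-1/2}\sum_i\xi_i\{d_\tau(O_i)-\Pn d_\tau\},\qquad d_\tau=\widehat\phi_{\Delta,\tau}-\phi_{\Delta,\tau}.
\]
Conditionally on the data, the $\xi_i$ are the only randomness. Hence $\mathbb R_n^*$ is a multiplier process over the data-dependent class $\widehat\Phi_\Delta-\Phi_\Delta$. I would apply a multiplier maximal inequality of the Chernozhukov–Chetverikov–Kato type, with sub-exponential multipliers and a VC-type class. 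This bounds $E_\xi\sup_\tau|\mathbb R_n^*(\tau)|$ by a constant times
\[
\widehat\sigma_n\sqrt{V\log(A/\widehat\sigma_n)}+n^{-1/2}V\log(A/\widehat\sigma_n)\,\|\text{envelope}\|_\infty\log n,
\]
where $\widehat\sigma_n^2=\sup_\tau\Pn d_\tau^2$. The entropy of the difference class is at most the sum of the entropy bound assumed for $\widehat\Phi_\Delta$ and that of $\Phi_\Delta$, which is VC-type under the primitive condition of Assumption \ref{ass:uniform_oracle}. The delicate point is that the assumption gives $\sup_\tau\|d_\tau\|_{P,2}=o_p(1)$, whereas the bound needs the empirical norm $\Pn d_\tau^2$. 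Here $\widehat\phi$ is not cross-fit independent of the evaluation sample, because it contains $\widehat q_a$, $\widehat f_a$ and $\widehat\psi_a$. So I would bound $\sup_\tau|(\Pn-P)d_\tau^2|$ through a uniform law of large numbers for the squared class. That class has a bounded envelope and the same VC-type entropy up to constants, and the uniform LLN holds whenever $V_{\Delta,n}\log A_{\Delta,n}/n\to0$. This is exactly the "corresponding multiplier maximal-inequality requirement" imposed on the deterministic sequences. Combining the two gives $\widehat\sigma_n=o_p(1)$, and with the rate condition the maximal bound is $o_p(1)$. Markov's inequality conditionally on the data then yields $\sup_\tau|\mathbb R_n^*|=o_p(1)$ in probability.

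\textbf{Step 3.} The two steps together give conditional weak convergence in probability, in the bounded-Lipschitz metric, of $\mathbb Z_n^*$ to the limit of Corollary \ref{cor:uniform_quantile}. The density conditions, $\widehat f_a\ge c_f/2$ and uniform consistency, are what make $\widehat\phi_{\Delta,\tau}$ bounded and make $\sup_\tau\|d_\tau\|_{P,2}\to0$ plausible. They enter only through the stated high-level conditions. For the bands, the continuous mapping theorem applied to the supremum norm, together with continuity of the limiting sup-norm law at its $(1-\alpha)$ quantile, shows that the bootstrap quantile $\widehat c_{1-\alpha}$ converges in probability to the true one. Combining this with Corollary \ref{cor:uniform_quantile} gives
\[
P\bigl\{\sup_\tau\sqrt n|\widehat\Delta(\tau)-\Delta(\tau)|\le\widehat c_{1-\alpha}\bigr\}\to1-\alpha .
\]
The remark about inverting CDF bands follows the same way from Theorem \ref{thm:uniform} and the monotone inverse map, and needs no density estimate.
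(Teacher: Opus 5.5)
Your proposal follows the paper's proof essentially step for step: a conditional multiplier CLT for the oracle Donsker class $\Phi_\Delta$, with the $\Pn$-centering treated as a negligible data-dependent shift; a conditional multiplier maximal inequality for the estimated-minus-oracle difference class, driven by $\sup_{\tau\in\cT}\|\widehat\phi_{\Delta,\tau}-\phi_{\Delta,\tau}\|_{P,2}=o_p(1)$ and the entropy bound; and continuous mapping to the supremum norm for the band. You are more explicit than the paper, which passes over this silently, in noting that the maximal inequality needs the empirical norm $\Pn d_\tau^2$ rather than the $L_2(P)$ norm; however, an entropy bound for a class chosen using the same sample does not by itself give a uniform LLN, so that step should be carried out fold by fold, on a class that is fixed given the training folds and indexed by the finitely many plug-in scalars $\widehat q_a(\tau)$, $\widehat f_a\{\widehat q_a(\tau)\}$ and $\widehat\psi_a\{\widehat q_a(\tau)\}$.
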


The multiplier-bootstrap statement deliberately separates process approximation from density estimation.  It gives valid simultaneous Wald-type QTE bands when the estimated influence-function class is close to the oracle class, using the same type of Gaussian and multiplier approximation logic used for suprema of empirical processes \citep{ChernozhukovChetverikovKato2014}.  The alternative inversion of CDF bands avoids explicit density estimation but produces bands through the inverse-map operation rather than through a direct Wald expansion for $\Delta(\tau)$.

\begin{definition}[Growing-grid implementation]\label{def:growing_grid}
Let $\cY_n=\{y_{1,n}<\cdots<y_{J_n,n}\}\subset\cY$ be a deterministic grid with mesh
\[
h_n=\sup_{y\in\cY}\min_{1\le j\le J_n}|y-y_{j,n}|.
\]
The raw grid estimator linearly interpolates $\widehat\psi_a(y_{j,n})$ between adjacent grid points. Quantile inversion is applied to a nondecreasing version $\widehat F_{a,n}$, obtained either because the raw interpolant is already nondecreasing or by isotonic projection on the grid followed by interpolation. For root-$n$ quantile inference we require $h_n=o(n^{-1/2})$ or another condition implying that interpolation and discretization errors are $o(n^{-1/2})$ on the relevant quantile region, and we require the monotonicity correction to be first-order negligible as stated in Proposition \ref{prop:grid_projection}.
\end{definition}

The preceding process statements are written for a continuum of outcome thresholds.  In computation, the CDF is evaluated on a grid and then interpolated or projected.  Definition \ref{def:growing_grid} records the condition under which this numerical representation is fine enough for root-$n$ quantile inference.

\begin{proposition}[Grid approximation and isotonic projection]\label{prop:grid_projection}
Assume $\psi_a$ is continuously differentiable on $\cY$ with derivative bounded above by $C_f$ and bounded below by $c_f>0$ on the quantile region. Let $\widetilde\psi_{a,n}$ be the linear interpolant of the unprojected grid values $\widehat\psi_a(y_{j,n})$. Suppose
\[
\max_{1\le j\le J_n}\left|\widehat\psi_a(y_{j,n})-\psi_a(y_{j,n})-\Pn\phi_a^{y_{j,n}}\right|=o_p(n^{-1/2}),
\]
\[
\sup_{|u-v|\le h_n}\left|\Pn(\phi_a^u-\phi_a^v)\right|=o_p(n^{-1/2}),
\qquad h_n=o(n^{-1/2}).
\]
Then
\[
\sup_{y\in\cY}\left|\widetilde\psi_{a,n}(y)-\psi_a(y)-\Pn\phi_a^y\right|=o_p(n^{-1/2}).
\]
Let $\widehat F_{a,n}$ be a nondecreasing version of $\widetilde\psi_{a,n}$ on the quantile region, for example the isotonic projection on the grid followed by interpolation, and suppose
\[
\sup_{y\in\cY_q}|\widehat F_{a,n}(y)-\widetilde\psi_{a,n}(y)|=o_p(n^{-1/2})
\]
for a compact interval $\cY_q\subset\cY$ containing $\{q_a(\tau):\tau\in\cT\}$ in its interior. Then the generalized inverse of $\widehat F_{a,n}$ satisfies the pointwise and uniform inverse-map limits in Corollary \ref{cor:pointwise} and Corollary \ref{cor:uniform_quantile}. On a fixed grid with strictly increasing true grid values, isotonic projection is asymptotically inactive. On a growing grid, the displayed projection-distance condition is required; the expansion of the raw interpolant alone does not imply monotonicity. In all cases, isotonic projection is a convex nonexpansive Euclidean projection and is a finite-sample stabilizer.
\end{proposition}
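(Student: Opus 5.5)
The proof has three parts: the interpolation bound, transferring the expansion to $\widehat F_{a,n}$ and inverting it, and the two remarks on isotonic projection.

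\emph{Interpolation step.} Fix $y\in\cY$ and let $y_{j,n}\le y\le y_{j+1,n}$ be the adjacent grid points. Write $\lambda\in[0,1]$ for the interpolation weight, so that $\widetilde\psi_{a,n}(y)=\lambda\widehat\psi_a(y_{j,n})+(1-\lambda)\widehat\psi_a(y_{j+1,n})$. Set $D(u)=\widehat\psi_a(u)-\psi_a(u)-\Pn\phi_a^u$. The error $\widetilde\psi_{a,n}(y)-\psi_a(y)-\Pn\phi_a^y$ splits into three pieces:
\begin{itemize}
\item[(i)] the convex combination $\lambda D(y_{j,n})+(1-\lambda)D(y_{j+1,n})$, which is bounded by $\max_j|D(y_{j,n})|=o_p(n^{-1/2})$;
\item[(ii)] the deterministic interpolation error $\lambda\psi_a(y_{j,n})+(1-\lambda)\psi_a(y_{j+1,n})-\psi_a(y)$, which by the mean value theorem is at most $C_f h_n=o(n^{-1/2})$;
\item[(iii)] the empirical-process difference $\lambda\Pn(\phi_a^{y_{j,n}}-\phi_a^y)+(1-\lambda)\Pn(\phi_a^{y_{j+1,n}}-\phi_a^y)$, which is controlled by the assumed modulus condition with $|u-v|\le h_n$.
\end{itemize}
I will read $h_n$ as the mesh, so the gap between adjacent grid points within $\cY$ is at most $2h_n$. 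The modulus condition is stated at scale $h_n$, so I expect to apply it at scale $2h_n$, or to pass through the midpoint. Each piece is uniform in $y$, which gives the first display.

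\emph{Transfer and inversion.} The projection-distance condition and the triangle inequality give
\[
\sup_{y\in\cY_q}|\widehat F_{a,n}(y)-\psi_a(y)-\Pn\phi_a^y|=o_p(n^{-1/2}).
\]
Since $\widehat F_{a,n}$ is nondecreasing, it then satisfies the hypotheses used for inversion. For the pointwise claim, I verify Assumption \ref{ass:local_quantile} with $\mathcal N_a$ a neighbourhood inside $\cY_q$:
\begin{itemize}
\item Uniform consistency holds because $\sup_y|\Pn\phi_a^y|=O_p(n^{-1/2})$, using the bounded envelope and the Donsker property (Lemma \ref{lem:oracle_donsker}), or at fixed points the CLT together with the modulus.
\item $\widehat q_a(\tau)\in\mathcal N_a$ with probability tending to one, because $\psi_a$ is strictly increasing with slope at least $c_f$ near $q_a$ and $\widehat F_{a,n}$ is uniformly close to it.
\item Local equicontinuity of $\Pn\phi_a^y$ is taken from the Donsker hypothesis used in Corollary \ref{cor:pointwise}.
\end{itemize}
Corollary \ref{cor:pointwise} then applies. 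For the uniform claim, the sup-norm condition $\sup|\widehat F_a-\widehat\psi_a|=o_p(n^{-1/2})$ in Corollary \ref{cor:uniform_quantile} is used only to transfer the uniform linear expansion to $\widehat F_a$. That expansion has now been established on $\cY_q$ directly, so the Hadamard-differentiability argument for the inverse map on $\cY_q$ goes through unchanged. The reason is that the quantiles lie in the interior of $\cY_q$ and the generalized inverse only looks at $\widehat F_{a,n}$ there with probability tending to one.

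\emph{Remarks on isotonic projection.}
\begin{itemize}
\item \emph{Fixed grid.} The true grid values are strictly increasing with fixed gaps, and the raw grid values are consistent. So with probability tending to one the raw vector is already strictly increasing and lies in $[0,1]$ up to boundary effects. The pool-adjacent-violators algorithm then returns it unchanged, so the projection is inactive.
\item \emph{Growing grid.} The increments $\psi_a(y_{j+1,n})-\psi_a(y_{j,n})\approx f_a h_n=o(n^{-1/2})$ are smaller than the $n^{-1/2}$ noise, so violations of monotonicity can occur. This is why the projection-distance condition must be assumed and cannot be deduced from the expansion.
\item \emph{Nonexpansiveness.} This is the standard property of Euclidean projection onto the closed convex cone of ordered vectors intersected with $[0,1]^{J_n}$.
\end{itemize}

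The main obstacle is making the inverse-map step rigorous when the expansion is available only on $\cY_q$ rather than on all of $\cY$. The key fact I will use is localization: with probability tending to one, $\widehat q_a(\tau)\in\cY_q$ uniformly in $\tau$, because $\widehat F_{a,n}$ exceeds $\sup_{\cT}\tau$ and falls below $\inf_{\cT}\tau$ strictly inside $\cY_q$. Given localization, the remaining steps are routine.
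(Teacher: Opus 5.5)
Your proposal is correct and follows the paper's proof essentially step for step: the same three-way split of the interpolated remainder (a convex combination of grid remainders, an $O(h_n)$ deterministic interpolation error, and the interpolated empirical-process modulus), transfer to $\widehat F_{a,n}$ by the triangle inequality, an appeal to the Bahadur and inverse-map arguments of Corollaries \ref{cor:pointwise} and \ref{cor:uniform_quantile}, and the same fixed-grid, growing-grid, and nonexpansiveness remarks. Several details are handled correctly where the paper's proof passes over them: the midpoint device for adjacent gaps of size up to $2h_n$, and the explicit localization of $\widehat q_a(\tau)$ to $\cY_q$. That localization implicitly uses that $\widehat F_{a,n}$ is nondecreasing on all of $\cY$, as in Definitions \ref{def:estimator} and \ref{def:growing_grid}, not only on $\cY_q$, so it is worth stating explicitly.
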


Proposition \ref{prop:grid_projection} explains why the grid and isotonic projection steps do not alter the leading asymptotic distribution when their approximation errors are below the root-$n$ scale.  This completes the high-level inference theory.  The next subsection gives sufficient lower-level conditions for the nuisance-rate and entropy assumptions used above.

\subsection{Lower-Level Nuisance-Rate Verification}\label{sec:nuisance_rates}

The preceding results are stated in terms of high-level product rates and realized-class entropy.  This subsection verifies those conditions for a restricted class of estimators whose behavior can be controlled directly.  The point is not to cover all possible machine-learning algorithms, but to show that the required rates are attainable under transparent approximation, dimension, and regularity conditions, paralleling nonparametric and series-estimation theory as well as high-dimensional orthogonal-estimation analyses \citep{Stone1980,Stone1985,Newey1997,BelloniChernozhukovFernandezValHansen2017,ChernozhukovDML2018}.

\begin{definition}[Verified convex finite-dimensional nuisance learners]\label{def:sieve_learners}
Let $b_{X,n}:\cX\to\Rbb^{d_{X,n}}$ and $b_{U,n}:\cX\times\cS\to\Rbb^{d_{U,n}}$ be bounded feature maps, with $U=(X,S)$. A verified convex finite-dimensional implementation is one of the following specific implementations; the theorem below does not cover arbitrary flexible convex learners:
\begin{enumerate}[label=(\roman*)]
\item $m_a(y,x,s)$ is estimated over $R=1,A=a,M=1$ by ridge logistic regression for $\ind(Y\le y)$ or by ridge least squares/KRR in $b_{U,n}(x,s)$ followed by clipping to $[0,1]$;
\item $g_a(y,x)$ is estimated over $R=1,A=a$ by ridge least squares/KRR in $b_{X,n}(x)$ using cross-fitted $\widehat m_a(y,X,S)$ as the response;
\item $e_a(x)$ is known by design or estimated by ridge logistic regression in $b_{X,n}(x)$;
\item $\rho_a(x,s)$ is estimated by ridge logistic regression in $b_{U,n}(x,s)$;
\item $\omega(x)$ is estimated either by a ridge logistic source-target classifier in $b_{X,n}(x)$ or by entropy balancing over $b_{X,n}(x)$ under the exponential-tilt density-ratio model and regularity conditions stated in Assumption \ref{ass:sieve_regular}.
\end{enumerate}
When $g_a$ is fit with generated responses $\widehat m_a(y,X_i,S_i)$, nested cross-fitting means that the source training sample used for the second-stage $g_a$ regression is split into inner folds and each generated response is obtained from an $m_a$ estimator trained without observation $i$. Thus the second-stage regression noise is conditionally independent of the observation on which the generated response is evaluated. All empirical risks are normalized as sample averages and use ridge penalties $\lambda_n\|\theta\|_2^2/2$.
\end{definition}

Definition \ref{def:sieve_learners} fixes the class of finite-dimensional procedures to which the subsequent rate theorem applies.  The generated-regressor step for $g_a$ is treated explicitly because $g_a$ is learned from estimates of $m_a$ rather than from directly observed responses.

\begin{assumption}[Convex sieve regularity]\label{ass:sieve_regular}
For the learners in Definition \ref{def:sieve_learners}, assume:
\begin{enumerate}[label=(\roman*)]
\item $\|b_{X,n}(X)\|_2\le C\sqrt{d_{X,n}}$ and $\|b_{U,n}(U)\|_2\le C\sqrt{d_{U,n}}$ almost surely, and the corresponding population Gram or logistic Hessian matrices have eigenvalues bounded above and below by positive constants uniformly in $n$ and $y\in\cY$. Ridge penalties are strictly positive when needed for numerical uniqueness; if an empirical convex problem has multiple minimizers, any minimizer with the same fitted values on the relevant design span may be used.
\item The population minimizers of the working risks are unique in the relevant fitted-function metric, and the empirical minimizers lie in the corresponding local neighborhoods with probability tending to one.
\item The best sieve approximations to $m_a$, $g_a$, $e_a$, $\rho_a$, and $\omega$ have $L_2$ errors $a_{m,n}$, $a_{g,n}$, $a_{e,n}$, $a_{\rho,n}$, and $a_{\omega,n}$, respectively, uniformly over $a$ and $y\in\cY$ where applicable.
\item The coefficient vectors of the best approximations have Euclidean norms bounded by $B_n$, and $\lambda_n B_n=o(r_n)$ for the relevant rate $r_n$ below.
\item $d_{X,n}\log n/n\to0$, $d_{U,n}\log n/n\to0$, and the empirical Hessians are uniformly consistent on neighborhoods of the best approximating parameters.
\item For continuum $\cY$, the maps $y\mapsto m_a(y,\cdot)$ and $y\mapsto g_a(y,\cdot)$ are uniformly Lipschitz in $L_2$, and all $y$-indexed regressions are implemented on a grid with mesh $h_n$ and linear interpolation. For a fixed finite grid, set $h_n=0$.
\item If entropy balancing is used for $\omega$, the true or pseudo-true density ratio lies in an exponential-tilt sieve $\omega_{\theta,n}(x)=c(\theta)\exp\{\theta^\top b_{X,n}(x)\}$ up to approximation error $a_{\omega,n}$, the normalizing constant $c(\theta)$ is chosen so that $E_1\omega_{\theta,n}(X)=1$, the coefficient norm is bounded by $B_n$, and the population Jacobian of the calibration moments has eigenvalues bounded above and below by positive constants. The same bounded-feature and entropy-growth conditions used for the other finite-dimensional sieves apply to this class when it is evaluated at new $x$ values.
\end{enumerate}
\end{assumption}

Assumption \ref{ass:sieve_regular} supplies the usual ingredients for finite-dimensional convex M-estimation: bounded features, stable population curvature, controlled approximation error, and manageable growth of feature dimension and threshold grid size.  The entropy-balancing clause gives an analogous local identification condition for estimating $\omega$ through calibration.

\begin{lemma}[Uniform convex M-estimation rate]\label{lem:uniform_mrate}
Let $\widehat\theta_y$ minimize a ridge-penalized convex empirical risk indexed by $y\in\cY_n$, and let $\theta_y^\star$ be the corresponding population minimizer. Under Assumption \ref{ass:sieve_regular},
\[
\sup_{y\in\cY_n}\|\widehat\theta_y-\theta_y^\star\|_2
=O_p\left(\sqrt{\frac{d_n+\log(eJ_n)}{n}}+\lambda_n B_n\right),
\]
where $d_n$ is the relevant feature dimension and $J_n=|\cY_n|$; if $J_n=1$, the term is interpreted as $\sqrt{d_n/n}$. The same rate holds for the induced $L_2$ fitted-function error, plus the sieve approximation error. For continuum $\cY$, an additional interpolation error of order $h_n$ is incurred under the Lipschitz condition.
\end{lemma}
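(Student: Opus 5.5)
The plan is the standard localized convexity argument for M-estimators, made uniform over the finite index set $\cY_n$ by a union bound. Fix $y\in\cY_n$ and write the penalized empirical risk as $\widehat L_y(\theta)=\Pn\ell_y(\theta)+\lambda_n\|\theta\|_2^2/2$, with population risk $L_y(\theta)=P\ell_y(\theta)$ minimized at $\theta_y^\star$. The loss $\ell_y$ is squared error or logistic loss in the bounded features $b_n$. By convexity, to show $\|\widehat\theta_y-\theta_y^\star\|_2\le t$ it suffices to show that $\widehat L_y(\theta)>\widehat L_y(\theta_y^\star)$ on the sphere $\|\theta-\theta_y^\star\|_2=t$. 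A second-order expansion of $\widehat L_y$ around $\theta_y^\star$ gives
\[
\widehat L_y(\theta)-\widehat L_y(\theta_y^\star)\ge \{\nabla\widehat L_y(\theta_y^\star)\}^\top\Delta+\tfrac{c}{2}\|\Delta\|_2^2,\qquad \Delta=\theta-\theta_y^\star,
\]
as long as the empirical Hessian is uniformly bounded below by $c>0$ on the neighborhood. This lower bound comes from Assumption \ref{ass:sieve_regular}(i) and (v): population eigenvalues are bounded below, and empirical Hessians are uniformly consistent on neighborhoods. For the logistic case, the boundedness of $\theta$ in the neighborhood (from (iv) and local containment (ii)) keeps the weights $p(1-p)$ bounded away from zero. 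The expansion yields $\|\widehat\theta_y-\theta_y^\star\|_2\le (2/c)\|\nabla\widehat L_y(\theta_y^\star)\|_2$.

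Next I would decompose the gradient as
\[
\nabla\widehat L_y(\theta_y^\star)=(\Pn-P)\nabla\ell_y(\theta_y^\star)+\lambda_n\theta_y^\star,
\]
using that $P\nabla\ell_y(\theta_y^\star)=0$ at the unpenalized population minimizer. The penalty term is at most $\lambda_nB_n$ by (iv).

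The main obstacle is the uniform bound on $\sup_{y}\|(\Pn-P)\nabla\ell_y(\theta_y^\star)\|_2$. Each summand is $\xi_{y,i}b_n(U_i)$ with a residual $|\xi_{y,i}|\le 1$, so its norm is at most $C\sqrt{d_n}$ by (i). Its covariance is dominated by the bounded Gram matrix, so $E\|\cdot\|_2^2\le Cd_n$. I would first try a vector Bernstein (or Hilbert-space Pinelis) inequality for each fixed $y$:
\[
P\left(\|(\Pn-P)\nabla\ell_y\|_2\ge C\sqrt{d_n/n}+C\sqrt{x/n}+C\sqrt{d_n}\,x/n\right)\le e^{-x}.
\]
Taking $x=\log(eJ_n)+u$ and a union bound over the $J_n$ grid points then gives $O_p\big(\sqrt{(d_n+\log(eJ_n))/n}+\sqrt{d_n}\log(eJ_n)/n\big)$. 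The last term is of smaller order because $d_n\log n/n\to0$ by (v), provided $\log J_n\lesssim\log n$, which I would assume (grids of polynomial size). A vector Bernstein bound could be replaced by a bound on the empirical Gram matrix plus a scalar sub-Gaussian argument if needed. Because this bound holds with high probability uniformly in $y$, the localization radius $t_n$ can be chosen as $C$ times the rate, and (ii) guarantees that the empirical minimizer lies inside the ball where curvature holds. When $J_n=1$ the $\log$ term disappears.

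Finally, I would translate the parameter rate into the fitted-function rate. The bound $\|b_n^\top(\widehat\theta_y-\theta_y^\star)\|_{L_2}\le C\|\widehat\theta_y-\theta_y^\star\|_2$ follows from the upper eigenvalue bound on the Gram matrix. For the logistic link, the Lipschitz property of the sigmoid gives the same bound. The triangle inequality then adds the approximation error $a_{\cdot,n}$ from (iii), and clipping to $[0,1]$ is nonexpansive. For a continuum $\cY$, any $y$ lies within $h_n$ of a grid point, and linear interpolation combined with the uniform $L_2$-Lipschitz condition (vi) contributes an $O(h_n)$ interpolation error, as stated.
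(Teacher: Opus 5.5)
Your proposal is correct and is essentially the paper's argument. It has the same ingredients: a uniform lower bound on the empirical Hessian from Assumption \ref{ass:sieve_regular}(i),(v), a bound on the score at $\theta_y^\star$ via Bernstein plus a union bound over the $J_n$ grid points, the penalty contribution $\lambda_n\|\theta_y^\star\|_2\le\lambda_nB_n$, and then Gram-eigenvalue translation to fitted functions, the triangle inequality for sieve approximation error, and Lipschitz interpolation for continuum $\cY$.

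The differences are minor. You localize via the convexity sphere argument rather than by inverting the Taylor-expanded first-order condition, which actually makes your appeal to (ii) for containment of $\widehat\theta_y$ superfluous. You also track the Bernstein linear term $\sqrt{d_n}\log(eJ_n)/n$ explicitly, hence the mild extra requirement $\log J_n\lesssim\log n$ that the paper leaves implicit. One caveat: your displayed sum-form bound needs a Talagrand--Bousquet inequality over the unit sphere with $O(1)$ weak variance, because the Pinelis route you mention would only yield the weaker rate $\sqrt{d_n\log(eJ_n)/n}$.
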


Lemma \ref{lem:uniform_mrate} is the basic rate statement used repeatedly for the nuisance functions.  It gives a uniform coefficient and fitted-function rate over the grid $\cY_n$, with interpolation error added when the target is a continuum of thresholds.

\begin{theorem}[Uniform rates for convex sieve implementation]\label{thm:sieve_rates}
Suppose Assumptions \ref{ass:positivity} and \ref{ass:sieve_regular} hold and the two-stage estimator of $g_a$ uses the nested cross-fitting scheme in Definition \ref{def:sieve_learners}, so each generated response used in the second-stage risk is trained without that observation and without the outer evaluation fold. Define
\begin{align*}
r_{m,n}&=a_{m,n}+\sqrt{\frac{d_{U,n}+\log(eJ_n)}{n}}+h_n,\\
r_{e,n}&=a_{e,n}+\sqrt{\frac{d_{X,n}}{n}},\\
r_{\rho,n}&=a_{\rho,n}+\sqrt{\frac{d_{U,n}}{n}},\\
r_{\omega,n}&=a_{\omega,n}+\sqrt{\frac{d_{X,n}}{n}},\\
r_{g,n}&=a_{g,n}+r_{m,n}+\sqrt{\frac{d_{X,n}+\log(eJ_n)}{n}}+h_n.
\end{align*}
Then the convex sieve learners satisfy, uniformly over $a$ and $y\in\cY$ where applicable,
\begin{align*}
\sup_{y\in\cY}\|\widehat m_a(y)-m_a(y)\|_{1a,2}&=O_p(r_{m,n}),\\
\sup_{y\in\cY}\|\widehat g_a(y)-g_a(y)\|_{P_{X,1},2}&=O_p(r_{g,n}),\\
\|\widehat e_a-e_a\|_{P_{X,1},2}&=O_p(r_{e,n}),\\
\|\widehat\rho_a-\rho_a\|_{1a,2}&=O_p(r_{\rho,n}),\\
\|\widehat\omega-\omega\|_{P_{X,1},2}&=O_p(r_{\omega,n}).
\end{align*}
Moreover, if fitted propensities and density ratios are truncated to the positivity interval and each fold-specific $\widehat\omega^{(-k)}$ is normalized by $\mathbb P_{n,1}^{(-k)}\widehat\omega^{(-k)}=1$, then
\begin{align*}
r_{w,a,n}&=O_p(r_{\omega,n}+r_{e,n}),\\
r_{\rho\mathrm{rat},a,n}&=O_p(r_{\rho,n}).
\end{align*}
\end{theorem}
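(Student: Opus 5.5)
The proof applies Lemma \ref{lem:uniform_mrate} to each nuisance learner separately. Each bound splits into a sieve approximation error and an estimation error. The only genuinely new steps are the generated-response regression for $g_a$ and the conversion from raw nuisance rates to the ratio-level quantities $r_{w,a,n}$ and $r_{\rho\mathrm{rat},a,n}$.

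\textbf{Single-stage learners.} For $e_a$, $\rho_a$ and the classifier-based $\omega$, take $J_n=1$ in Lemma \ref{lem:uniform_mrate}. This gives a coefficient error $O_p(\sqrt{d/n}+\lambda_nB_n)$, and $\lambda_nB_n$ is absorbed by Assumption \ref{ass:sieve_regular}(iv). Because the features are bounded and the population Gram matrix has eigenvalues bounded above, the $L_2$ fitted-function error is at most a constant times the coefficient error. For the logistic links, the logistic function is Lipschitz, so the same bound holds on the probability scale. Adding the approximation error $a_{\cdot,n}$ by the triangle inequality yields $r_{e,n}$, $r_{\rho,n}$ and $r_{\omega,n}$. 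When $\omega$ is obtained by entropy balancing, the entropy-balancing dual is a convex exponential-tilt M-estimation problem. Its Jacobian condition in Assumption \ref{ass:sieve_regular}(vii) plays the role of the Hessian condition, so the same lemma applies. The rate carries over to $\widehat\omega$ because $\exp$ is Lipschitz on the bounded range, given $\|b_{X,n}\|_2\le C\sqrt{d_{X,n}}$ and coefficients confined to a local neighbourhood.

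\textbf{The threshold-indexed regression $m_a$.} Apply Lemma \ref{lem:uniform_mrate} over the grid $\cY_n$ with $J_n$ points. The $\log(eJ_n)$ term comes from a union bound, or from a maximal inequality over the finitely many score processes. Clipping to $[0,1]$ is a contraction, so it cannot increase the $L_2$ error. Between grid points, the Lipschitz condition in Assumption \ref{ass:sieve_regular}(vi) for $m_a$, together with the fact that linear interpolation is a convex combination of fitted values, adds $O(h_n)$. This gives $r_{m,n}$.

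\textbf{The generated-response regression $g_a$ (main obstacle).} Let $\widetilde g_a$ be the infeasible ridge fit that uses the oracle responses $m_a(y,X_i,S_i)$. Since $E\{m_a(y,X,S)\mid R=1,A=a,X\}=g_a(y,X)$, Lemma \ref{lem:uniform_mrate} gives $\widetilde g_a$ the rate $a_{g,n}+\sqrt{(d_{X,n}+\log(eJ_n))/n}+h_n$.

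The difference $\widehat g_a-\widetilde g_a$ is the ridge/least-squares fit applied to the response errors $\widehat m_a^{(-i)}-m_a$. On the design span, this fit is the empirical $L_2$ projection $\widehat\Pi$ onto the span of $b_{X,n}$, perturbed by the ridge penalty. I would bound the difference in two steps.

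First, the empirical projection $\widehat\Pi$ is nonexpansive in the empirical norm. Second, the bounded-eigenvalue condition together with $d_{X,n}\log n/n\to0$ makes the empirical Gram matrix uniformly equivalent to the population one, for example by a matrix Bernstein inequality. Together these give $\|\widehat g_a-\widetilde g_a\|_{P_{X,1},2}\lesssim\|\widehat m_a-m_a\|_{n,2}$ uniformly in $y$.

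Nested cross-fitting makes each $\widehat m_a^{(-i)}$ independent of observation $i$. Hence the empirical norm of the response error concentrates around its population counterpart, which is $O_p(r_{m,n})$, uniformly over the grid via the same $\log J_n$ union bound. The delicate point is that the inner-fold estimators differ across $i$. I would handle this by bounding the maximum over the finitely many inner folds, which is harmless because the number of inner folds is fixed. Summing the two pieces gives $r_{g,n}$.

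\textbf{Ratio-level rates.} Write
\[
\widehat\omega e_a/\widehat e_a-\omega=(\widehat\omega-\omega)e_a/\widehat e_a+\omega(e_a-\widehat e_a)/\widehat e_a .
\]
Truncation gives $\widehat e_a\ge\epsilon$, and $\omega$ and $e_a$ are bounded, so $r_{w,a,n}\lesssim\|\widehat\omega-\omega\|+\|\widehat e_a-e_a\|$.

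The fold-specific normalization multiplies $\widehat\omega$ by $1/\mathbb P_{n,1}^{(-k)}\widehat\omega$. Since $E_1\omega=1$, this factor differs from one by at most
\[
|(\mathbb P_{n,1}^{(-k)}-P_1)\widehat\omega|+|P_1(\widehat\omega-\omega)| .
\]
The first term is $O_p(\sqrt{d_{X,n}/n})$ by the entropy of the bounded sieve class. The second term is $\le\|\widehat\omega-\omega\|$. Truncation to the positivity interval is a contraction toward a set that contains $\omega$, so it does not increase the error.

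Similarly,
\[
\rho_a/\widehat\rho_a-1=(\rho_a-\widehat\rho_a)/\widehat\rho_a ,
\]
and $\widehat\rho_a\ge\epsilon$ gives $r_{\rho\mathrm{rat},a,n}=O_p(r_{\rho,n})$.
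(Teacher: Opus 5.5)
Your proposal is correct. For $m_a$, $e_a$, $\rho_a$, $\omega$ and the ratio-level rates it follows the paper's proof: Lemma~\ref{lem:uniform_mrate} applied learner by learner, Lipschitz links, and truncation plus positivity. You are also more explicit than the paper about the effect of the fold-specific normalization of $\widehat\omega$.

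The real difference is the generated-response step for $g_a$. The paper centers at the random population regression $g_a^{\widehat m}(y,x)=E\{\widehat m_a(y,X,S)\mid R=1,A=a,X=x\}$. It treats $\widehat g_a-g_a^{\widehat m}$ as an ordinary second-stage ridge error, with nested cross-fitting making the generated-response noise conditionally mean zero. It then bounds $g_a^{\widehat m}-g_a$ by $r_{m,n}$ because conditional expectation is an $L_2$ contraction.

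You center instead at the oracle-response fit $\widetilde g_a$. This is a genuine i.i.d. regression with target $g_a$ and approximation error exactly $a_{g,n}$. You then control $\widehat g_a-\widetilde g_a$ essentially deterministically: ridge is linear in the responses, the ridge smoother is nonexpansive in the empirical norm, and Gram equivalence transfers the bound to $L_2(P_{X,1})$. Cross-fitting is needed only to make $\|\widehat m_a^{(-i)}-m_a\|_{n,2}$ concentrate.

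What each route buys:
\begin{itemize}
\item The paper's route is shorter and isolates the first-stage error in one line. However, it implicitly treats $g_a^{\widehat m}$ as a single fixed target, although it is random and inner-fold specific. Its sieve approximation error is therefore $a_{g,n}+O_p(r_{m,n})$ rather than $a_{g,n}$; this is harmless because $r_{m,n}$ is already in $r_{g,n}$. Also, the second-stage score is centered at different targets in different inner folds.
\item Your route avoids both points and needs no mean-zero property of the response errors. The price is a Gram-equivalence step and a concentration step.
\end{itemize}

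In the concentration step, use a variance-sensitive Bernstein bound for the squared residuals. These lie in $[0,1]$, so their variance is at most their mean. A plain Hoeffding bound plus a union bound over the $J_n$ grid points would leave an additive $(\log J_n/n)^{1/4}$ term, which is not contained in $r_{g,n}$.
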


Theorem \ref{thm:sieve_rates} translates the generic convex M-estimation rate into rates for $m_a$, $g_a$, $e_a$, $\rho_a$, and $\omega$.  The rate for $g_a$ includes the first-stage error from estimating $m_a$, reflecting the nested structure of the surrogate-integrated regression.

\begin{corollary}[Sufficient low-level product-rate and entropy conditions]\label{cor:sieve_products}
Under the conditions of Theorem \ref{thm:sieve_rates}, define
\[
D_n=d_{X,n}+d_{U,n}+\log(eJ_n),\qquad
r_{\Gamma,n}=r_{m,n}+r_{g,n}+r_{e,n}+r_{\rho,n}+r_{\omega,n}.
\]
Assumption \ref{ass:uniform_nuisance} holds if
\[
(r_{\omega,n}+r_{e,n})r_{g,n}+r_{\rho,n}r_{m,n}=o(n^{-1/2}),
\]
\[
r_{\Gamma,n}\sqrt{D_n\log n}+\frac{D_n\log n}{\sqrt n}=o(1),
\]
and the grid/interpolation error is included in $r_{\Gamma,n}$. In particular, if $e_a$, $\rho_a$, and $\omega$ are known by design or estimated at $n^{-1/2}$ rate and $r_{m,n}+r_{g,n}=o(1)$ with $(r_{m,n}+r_{g,n})\sqrt{D_n\log n}+D_n\log n/\sqrt n=o(1)$, then the product-rate and perturbation-process conditions hold. If all nuisance functions have a common rate $r_n$ and $D_n$ is bounded, the familiar sufficient condition is $r_n=o(n^{-1/4})$; if $D_n\to\infty$, the additional displayed growing-entropy condition is required.
\end{corollary}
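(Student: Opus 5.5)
The plan is to verify the three blocks of Assumption \ref{ass:uniform_nuisance} one at a time: the positivity and boundedness part, the product-rate part, and the signal-convergence plus realized-class entropy part. Each block will be fed by Theorem \ref{thm:sieve_rates} and by the finite-dimensional structure of the learners in Definition \ref{def:sieve_learners}.

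\textbf{Positivity and product rates.} Positivity with probability tending to one is immediate from truncation of $\widehat e_a,\widehat\rho_a,\widehat\omega$ to the positivity interval and clipping of $\widehat m_a,\widehat g_a$ to $[0,1]$. Clipping and truncation are $1$-Lipschitz and the truth lies in the target interval, so they do not increase $L_2$ errors. Theorem \ref{thm:sieve_rates} then gives $r_{w,a,n}=O_p(r_{\omega,n}+r_{e,n})$, $r_{\rho\mathrm{rat},a,n}=O_p(r_{\rho,n})$, $r_{g,a,n}=O_p(r_{g,n})$ and $r_{m,a,n}=O_p(r_{m,n})$. The first displayed condition of the corollary is therefore exactly the product-rate requirement. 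For the special case, $n^{-1/2}$ rates for $(e,\rho,\omega)$ times $r_{m,n}+r_{g,n}=o(1)$ give $o(n^{-1/2})$.

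\textbf{Signal convergence.} Write $\widehat\Gamma_{a,k}^y-\Gamma_a^y$ as a telescoping sum, replacing one nuisance at a time. The weights $\omega/e$ and $\omega/(e\rho)$ are bounded above and below, as are their estimated versions. Each difference of weights is controlled by $|\widehat\omega-\omega|+|\widehat e-e|+|\widehat\rho-\rho|$, and residual factors such as $\ind(Y\le y)-m$ are bounded by $2$. Hence
\[
\sup_y\|\widehat\Gamma_{a,k}^y-\Gamma_a^y\|_{P,2}\le C r_{\Gamma,n}.
\]
This is $o_p(1)$ by the second condition, since $r_{\Gamma,n}\sqrt{D_n\log n}\to0$. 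We may take $\delta_{a,n}=C' r_{\Gamma,n}$, inflated by a slowly diverging factor to turn $O_p$ into a deterministic bound with probability tending to one. The envelope is bounded by positivity.

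\textbf{Entropy (main obstacle).} Conditional on the training folds, the realized class $\{\widehat\Gamma_{a,k}^y-\Gamma_a^y:y\in\cY\}$ has to be shown to be VC-type with $V_{a,n}\lesssim D_n$ and $A_{a,n}$ polynomial in $n$. I would argue as follows.
\begin{enumerate}[label=(\roman*)]
\item On the grid, $\widehat m_a(y_j,\cdot)$ and $\widehat g_a(y_j,\cdot)$ are clipped (logistic or linear) functions of $b_{U,n}$ or $b_{X,n}$.
\item Between grid points they are linear interpolations in $y$. Each fitted path is therefore a combination of at most $J_n$ fixed functions, but indexed by the single parameter $y$.
\item Hence $\{\widehat m_a(y,\cdot):y\in\cY\}$ is covered by the $J_n$ segments. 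Each segment $\{(1-t)f_j+tf_{j+1}:t\in[0,1]\}$ has covering number $O(1/\epsilon)$ in sup norm, giving $\log N\lesssim \log J_n+\log(1/\epsilon)$.
\item The indicator class $\{\ind(Y\le y)\}$ is VC with index $2$.
\item Products with fixed bounded weights and finite sums preserve VC-type bounds, so the realized class has $\log N\le C\log(eJ_n)\log(A/\epsilon)$.
\item The oracle class $\{\Gamma_a^y\}$ is Donsker with polynomial entropy by Assumption \ref{ass:uniform_oracle}, adding $O(1)$ dimension.
\end{enumerate}
I will conservatively take $V_{a,n}\asymp D_n$ and $A_{a,n}\asymp n$. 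The entropy requirement of Lemma \ref{lem:small_ep} then reads $r_{\Gamma,n}\sqrt{D_n\log n}+D_n\log n/\sqrt n=o(1)$, using $\log(A_{a,n}/\delta_{a,n})\lesssim\log n$ when $\delta_{a,n}\ge n^{-1}$, which we may impose by inflating $\delta_{a,n}$. This is the second displayed condition.

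The delicate points are the following.
\begin{itemize}
\item Interpolation in $y$ must be included so that the continuum class stays finite-dimensional, and its error $h_n$ must enter $r_{\Gamma,n}$ as assumed.
\item The random-denominator classes $\widehat G^y_{0,a,k},\widehat G^y_{1,a,k}$ of Lemma \ref{lem:uniform_ratio} have the same segment-union structure, so they satisfy the same bounds by the same argument.
\end{itemize}

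The final remark of the corollary is the bounded-$D_n$ specialization. With a common rate $r_n$, the product condition reads $r_n^2=o(n^{-1/2})$, that is $r_n=o(n^{-1/4})$. The entropy condition reduces to $r_n\sqrt{\log n}=o(1)$, which this already implies.
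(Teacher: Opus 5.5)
Your proposal is correct and follows the paper's route: substitute the rates of Theorem \ref{thm:sieve_rates} into the drift bound of Proposition \ref{prop:drift}, then bound the realized perturbation class by entropy dimension of order $D_n$ and $L_2(P)$ radius of order $r_{\Gamma,n}$, and invoke Lemma \ref{lem:small_ep}. Your segment-union argument is sharper than the paper's count: conditional on the training folds only $y$ varies, so the entropy is really of order $\log(eJ_n)$, and taking $V_{a,n}\asymp D_n$ is conservative. One small repair is needed in item (vi): Assumption \ref{ass:uniform_oracle} gives only Donsker-ness, not a uniform polynomial entropy bound, for the oracle class. However, $m_a(y,\cdot)$, $g_a(y,\cdot)$ and $\ind(Y\le y)$ are nondecreasing in $y$, so their subgraphs form chains and are VC-subgraph, which supplies exactly the bound you use there.
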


Corollary \ref{cor:sieve_products} is the bridge back to Assumption \ref{ass:uniform_nuisance}.  It shows that the high-level drift and empirical-process conditions can be checked by combining approximation errors, feature dimensions, and the outcome-grid size.  When the dimension is fixed, the product-rate condition reduces to the familiar fourth-root requirement; when the dimension grows, entropy growth must also be controlled.

\begin{assumption}[Uniform RKHS-KRR regularity]\label{ass:rkhs_uniform}
For each KRR regression nuisance, the kernel is bounded and measurable, the response is bounded, and the target regression functions $f_y$ indexed by $y\in\cY$ belong uniformly to the source class $\mathcal R(T^r)$ of the kernel integral operator $T$ with radius bounded by a constant. Here $T$ is the integral operator under the covariate law relevant to the particular regression problem; separate instances are used for $m_a$ under the law of $(X,S)\mid R=1,A=a,M=1$ and for $g_a$ under the law of $X\mid R=1,A=a$. The effective dimension satisfies
\[
\mathcal N(\lambda)=\operatorname{tr}\{T(T+\lambda I)^{-1}\}\le C\lambda^{-\alpha},\qquad \alpha\in(0,1].
\]
The map $y\mapsto f_y$ is Lipschitz in $L_2(P)$ and in the RKHS interpolation norm used by the source condition. KRR is fit on a grid $\cY_n$ and linearly interpolated, with regularization $\lambda_n\asymp n^{-1/(2r+\alpha)}$.
\end{assumption}

Assumption \ref{ass:rkhs_uniform} gives an alternative route for regression-type nuisance functions when kernel ridge regression is used.  The source condition and effective-dimension bound determine the regression rate, while the Lipschitz condition in $y$ makes the grid-based implementation compatible with the uniform CDF theory.

\begin{theorem}[Uniform RKHS-KRR rate verification for regression nuisances]\label{thm:rkhs_rates}
Under Assumption \ref{ass:rkhs_uniform}, suppose the effective sample size for a KRR regression nuisance is $N_{\mathrm{eff}}$ and $N_{\mathrm{eff}}/n$ is bounded away from zero with probability tending to one. This applies to the regression-type nuisances $m_a$ and $g_a$ under Assumption \ref{ass:positivity}, with nested cross-fitting for the generated response in $g_a$. Then
\[
\sup_{y\in\cY}\|\widehat f_y-f_y\|_{P,2}
=O_p\left(N_{\mathrm{eff}}^{-r/(2r+\alpha)}\sqrt{\log(eJ_n)}+h_n\right).
\]
Consequently, this theorem verifies the regression $L_2$ rates for $m$ and $g$. If $m$ and $g$ are estimated by KRR at rate $r_{K,n}=n^{-r/(2r+\alpha)}\sqrt{\log(eJ_n)}+h_n$ and the nuisance rates for $(e,\rho,\omega)$ are separately verified as $s_{e,n},s_{\rho,n},s_{\omega,n}$ by logistic, calibration, or density-ratio arguments, then the product-rate part of Assumption \ref{ass:uniform_nuisance} holds whenever
\[
(s_{\omega,n}+s_{e,n})r_{K,n}+s_{\rho,n}r_{K,n}=o(n^{-1/2}).
\]
The perturbation-process part of Assumption \ref{ass:uniform_nuisance} additionally requires the entropy-growth condition in Corollary \ref{cor:sieve_products}, with $r_{\Gamma,n}$ formed from $r_{K,n}$ and the separately verified rates. If $(e,\rho,\omega)$ are parametric or known and the KRR classes have bounded entropy dimension, any $r_{K,n}=o(1)$ suffices for the product term. If all nuisances have a common KRR-type rate $r_{K,n}$ and the corresponding density-ratio/classification rates are separately justified, a sufficient smoothness condition for the product term is $r/(2r+\alpha)>1/4$ up to logarithmic and grid-mesh factors.
\end{theorem}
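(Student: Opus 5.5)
The plan is to prove the KRR rate first at a single grid point, then make it uniform over the grid by a union bound, extend it to the continuum by interpolation, and finally substitute the resulting rates into Assumption \ref{ass:uniform_nuisance}.

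\textbf{Fixed threshold.} For a fixed $y\in\cY_n$, I would invoke the standard integral-operator analysis of KRR. Write $\widehat f_y=(\widehat T+\lambda)^{-1}\widehat g_y$, where $\widehat T$ is the empirical covariance operator and $\widehat g_y$ is the empirical cross-covariance of the bounded response. Split the error into three pieces.
\begin{itemize}
\item The approximation bias is $\|f_{y,\lambda}-f_y\|_{P,2}\le C\lambda^{r}$ under the uniform source condition $f_y\in\mathcal R(T^r)$. For $r>1$ I would use the qualification-capped version, or restrict to $r\le1$.
\item The variance term is $\|(T+\lambda)^{-1/2}(\widehat g_y-\widehat T f_{y,\lambda})\|$, bounded by a Hilbert-space Bernstein inequality at order $\sqrt{\mathcal N(\lambda)/N_{\mathrm{eff}}}+1/(N_{\mathrm{eff}}\sqrt\lambda)$, with tail $\exp(-t)$.
\item The operator perturbation $\|(T+\lambda)^{-1/2}(T-\widehat T)(T+\lambda)^{-1/2}\|$ is at most $1/2$ with high probability once $\mathcal N(\lambda)\log(1/\lambda)/N_{\mathrm{eff}}\to0$. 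This lets $(\widehat T+\lambda)^{-1}$ be replaced by $(T+\lambda)^{-1}$ at the cost of a constant.
\end{itemize}
With $\lambda_n\asymp n^{-1/(2r+\alpha)}$ and $\mathcal N(\lambda)\le C\lambda^{-\alpha}$, bias and variance balance at $N_{\mathrm{eff}}^{-r/(2r+\alpha)}$. Here I use $N_{\mathrm{eff}}\asymp n$ with probability tending to one, which follows from positivity of $\pi_1$, $e_a$ and $\rho_a$.

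\textbf{Uniformity over the grid and the continuum.} The operator-perturbation event does not depend on $y$, so it is shared across thresholds. Only the response-dependent Bernstein term must be controlled for each of the $J_n$ thresholds. Setting $t=\log(eJ_n)+u$ and taking a union bound gives the extra factor $\sqrt{\log(eJ_n)}$, because the sub-exponential tail enters as $\sqrt t$ in the leading term. For $y$ off the grid, the interpolant $\widehat f_y$ is a convex combination of the fits at the two neighbouring grid points, and $f_y$ is Lipschitz in $L_2(P)$. The triangle inequality then adds $O(h_n)$.

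\textbf{The generated response in $g_a$.} This is the step I expect to be the main obstacle. With nested cross-fitting, the response $\widehat m_a(y,X_i,S_i)$ is a bounded function trained without observation $i$. I would decompose it as
\[
\widehat m_a(y,X_i,S_i)=m_a(y,X_i,S_i)+\{\widehat m_a-m_a\}(y,X_i,S_i).
\]
The first part gives a KRR problem with target $H_am_a=g_a$ and conditionally mean-zero, bounded noise, so the argument above applies. The second part passes through the KRR smoother, which is $L_2$-stable: $\|(\widehat T+\lambda)^{-1}\widehat T\|\le1$ after the same operator-concentration step. Its contribution is therefore $O_p(\sup_y\|\widehat m_a-m_a\|_{1a,2})=O_p(r_{K,n})$, using the bounded change of measure between the $M=1$ and the full $A=a$ covariate laws implied by $\rho_a\ge\epsilon$. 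The delicate point is conditional independence across observations, because the inner folds share training data. I would handle this by conditioning on the inner-fold training sets, so that within each inner fold the perturbation is a fixed function, and then summing over the finitely many inner folds.

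\textbf{Consequences for Assumption \ref{ass:uniform_nuisance}.} I would substitute the rates into the bound of Proposition \ref{prop:drift}. Since $\bar w_a$ and $\bar r_a$ are bounded by positivity, Cauchy–Schwarz gives $r_{w,a,n}=O_p(s_{\omega,n}+s_{e,n})$ and $r_{\rho\mathrm{rat},a,n}=O_p(s_{\rho,n})$. The product-rate part follows directly from the displayed condition. The remaining claims are then immediate:
\begin{itemize}
\item For parametric or known $(e,\rho,\omega)$, the $s$-rates are $n^{-1/2}$, so any $r_{K,n}=o(1)$ suffices for the product term.
\item With common rates, the product is $o(n^{-1/2})$ whenever $r/(2r+\alpha)>1/4$, up to logarithmic and mesh factors.
\item The perturbation-process part is deferred verbatim to the entropy-growth condition of Corollary \ref{cor:sieve_products}, with $r_{\Gamma,n}$ built from $r_{K,n}$ and the $s$-rates.
\end{itemize}
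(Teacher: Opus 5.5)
Your proposal is correct and takes the same route as the paper's proof. Both use a fixed-$y$ Caponnetto--De Vito bias/variance bound at $\lambda_n\asymp n^{-1/(2r+\alpha)}$, a union bound over the $J_n$ grid points for the $\sqrt{\log(eJ_n)}$ factor, an $O(h_n)$ term from Lipschitz interpolation, and substitution into Proposition~\ref{prop:drift} for the product-rate consequences. Your explicit treatment of the generated response in $g_a$ (linearity of KRR in the response, $L_2$-stability of the smoother, and conditioning on inner-fold training sets) fills in a step the paper only invokes by name.

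One small technical correction concerns the operator-concentration step. It is driven by $\sup_x\|(T+\lambda)^{-1/2}k(x,\cdot)\|^2\lesssim\lambda^{-1}$, not by $\mathcal N(\lambda)$. So the condition you need is $\log\mathcal N(\lambda)/(\lambda N_{\mathrm{eff}})\to0$, not $\mathcal N(\lambda)\log(1/\lambda)/N_{\mathrm{eff}}\to0$. This condition does hold at your tuning in the range $1/2\le r\le 1$, which is where the cited KRR bound applies.
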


Theorem \ref{thm:rkhs_rates} verifies the regression-rate part of the nuisance conditions for KRR.  It does not by itself verify density-ratio, treatment, or validation propensity estimation; those components must still be justified by the corresponding classifier, calibration, or propensity-rate arguments.

\subsection{Surrogate Efficiency Gains}\label{sec:eff_gain}

The preceding results establish efficiency of the surrogate-assisted estimator in the observed-data model.  This subsection asks a narrower comparison question: when is observing $S$ among units with missing $Y$ more informative than a benchmark that does not observe those surrogates?  To make this comparison meaningful, both procedures must identify the same target functional under a common model.

A comparison with a procedure that does not observe $S$ on units with missing $Y$ requires a common identifying model. Without further restrictions, if $M$ depends on $S$, then dropping $S$ from the missing-outcome units changes the observed-data model and may destroy identification. Therefore the following assumption is used only for the benchmark comparison.

\begin{assumption}[Surrogate-ignorable labeling for the no-surrogate benchmark]\label{ass:rho_x}
For each $a\in\{0,1\}$,
\[
\rho_a(X,S)=\rho_a^0(X)=P(M=1\mid R=1,A=a,X).
\]
\end{assumption}

Assumption \ref{ass:rho_x} is not needed for the main identification or efficiency results.  It is imposed only so that the no-surrogate benchmark remains identifiable after dropping $S$ for units with $M=0$.  When validation depends on $S$, the benchmark and the surrogate-assisted model no longer represent the same observed-data problem.

\begin{definition}[No-surrogate benchmark]\label{def:noS}
Under Assumption \ref{ass:rho_x}, define the benchmark observed-data model in which $S$ is not observed for source units with $M=0$. Let
\[
g_{a}^{\mathrm{nos}}(y,x)=P(Y\le y\mid R=1,A=a,X=x,M=1).
\]
Then $g_{a}^{\mathrm{nos}}(y,x)=g_a(y,x)$ under Assumption \ref{ass:rho_x}. The benchmark efficient influence function is
\[
\phi_{a,0}^y(O)=
\frac{\ind(R=0)}{\pi_0}\{g_a(y,X)-\psi_a(y)\}
+\frac{\ind(R=1)\omega(X)\ind(A=a)M}{\pi_1 e_a(X)\rho_a^0(X)}
\{Z_y-g_a(y,X)\}.
\]
Let $V_{a,0}(y)=E\{[\phi_{a,0}^y(O)]^2\}$.
\end{definition}

Definition \ref{def:noS} gives the influence function for the benchmark model.  Compared with Definition \ref{def:eif}, the benchmark cannot use the residual $m_a-g_a$ among source units with missing $Y$ because $S$ is unavailable there.  The variance comparison therefore isolates the information contributed by observing the surrogate on those otherwise unlabeled source units.

\begin{theorem}[Distributional efficiency gain from observing surrogates]\label{thm:eff_gain_cdf}
Suppose Assumptions \ref{ass:causal}, \ref{ass:mar}, \ref{ass:transport}, \ref{ass:positivity}, and \ref{ass:rho_x} hold. Then
\[
V_{a,0}(y)-V_a(y)
=\frac{1}{\pi_1}E_1\left[
\frac{\omega^2(X)}{e_a(X)}\frac{1-\rho_a^0(X)}{\rho_a^0(X)}
\V\{m_a(y,X,S)\mid R=1,A=a,X\}
\right]\ge0.
\]
The inequality is strict whenever the displayed conditional variance is positive on a set with positive $P_{X,1}$ probability and $\rho_a^0(X)<1$ on that set.
\end{theorem}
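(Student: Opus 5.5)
The plan is to compute both second moments directly and subtract. Write $W=\ind(R=1)\omega(X)\ind(A=a)/\{\pi_1e_a(X)\}$. Under Assumption \ref{ass:rho_x}, $\rho_a(X,S)=\rho_a^0(X)$, so the two influence functions in Definitions \ref{def:eif} and \ref{def:noS} share the target term $T_0=\ind(R=0)\pi_0^{-1}\{g_a-\psi_a\}$. They differ only in the source part:
\[
\phi_a^y=T_0+W\{m_a-g_a\}+\frac{WM}{\rho_a^0}\{Z_y-m_a\},\qquad
\phi_{a,0}^y=T_0+\frac{WM}{\rho_a^0}\{Z_y-g_a\}.
\]
The first step is to write $Z_y-g_a=(Z_y-m_a)+(m_a-g_a)$ in the benchmark. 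Then
\[
\phi_{a,0}^y=\phi_a^y+W\left(\frac{M}{\rho_a^0(X)}-1\right)\{m_a(y,X,S)-g_a(y,X)\}.
\]
Call the added term $D$. It then suffices to show $E\{\phi_a^yD\}=0$ and to compute $E\{D^2\}$.

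\emph{Orthogonality.} Since $T_0D=0$ (because $\ind(R=0)\ind(R=1)=0$), only two cross terms remain.

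For the term against $WM(\rho_a^0)^{-1}\{Z_y-m_a\}$, condition on $(R=1,A=a,X,S,M=1)$. The factor $Z_y-m_a$ has conditional mean zero there by the definition of $m_a$, and $D$ is measurable with respect to that sigma-field.

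For the term against $W\{m_a-g_a\}$, condition on $(R=1,A=a,X,S)$. We have $E\{M/\rho_a^0(X)-1\mid R=1,A=a,X,S\}=\rho_a(X,S)/\rho_a^0(X)-1=0$ by Assumption \ref{ass:rho_x}, and the other factors are fixed given $(X,S)$.

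Hence $V_{a,0}(y)=V_a(y)+E\{D^2\}$.

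\emph{Variance of $D$.} Condition again on $(R=1,A=a,X,S)$. This gives $E\{(M/\rho_a^0-1)^2\mid\cdot\}=(1-\rho_a^0)/\rho_a^0$, using $\rho_a(X,S)=\rho_a^0(X)$. Next average over $S$ given $(R=1,A=a,X)$: since $g_a=H_am_a$ by definition, $E\{(m_a-g_a)^2\mid R=1,A=a,X\}=\V\{m_a\mid R=1,A=a,X\}$. Finally average $\ind(A=a)$ given $(R=1,X)$, which contributes $e_a(X)$. Writing $E\{\ind(R=1)h\}=\pi_1E_1h$, we get
\[
E\{D^2\}=\pi_1E_1\left[\frac{\omega^2}{\pi_1^2e_a^2}e_a\frac{1-\rho_a^0}{\rho_a^0}\V\{m_a\mid R=1,A=a,X\}\right].
\]
This is exactly the displayed formula. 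All quantities are bounded by positivity, so every expectation is finite.

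\emph{Strictness.} The integrand is nonnegative. If the conditional variance is positive and $\rho_a^0<1$ on a set of positive $P_{X,1}$ probability, then the integrand is positive there, because $\omega^2/e_a$ is positive on that set. The only caveat is that $\omega$ could vanish on part of the set. I would either read the strictness claim on the set where $\omega>0$, or note that the target-relevant region requires $\omega>0$. This is the one delicate point in the argument. Everything else is routine iterated conditioning, and the main work is the decomposition that isolates $D$ and verifies its orthogonality to $\phi_a^y$.
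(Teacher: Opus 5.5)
Your proof is correct. It reaches the formula by a slightly different route from the paper's.

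\textbf{Paper's route.} With $D=m_a-g_a$ and $\varepsilon=Z_y-m_a$, the paper computes separately, conditionally on $(R=1,A=a,X)$, the second moments of the two source components $C_S=W\{D+(M/\rho)\varepsilon\}$ and $C_0=W(M/\rho)(D+\varepsilon)$. Each contains the outcome-noise term $\rho^{-1}E\{\V(Z_y\mid X,S,A=a,R=1)\mid X,A=a,R=1\}$, which the paper writes that way using MAR. Subtracting cancels the noise terms and leaves $(\rho^{-1}-1)\V(D\mid X,A=a,R=1)$.

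\textbf{Your route.} You write $\phi_{a,0}^y=\phi_a^y+W(M/\rho_a^0-1)(m_a-g_a)$ and show that the added term is $L_2(P)$-orthogonal to $\phi_a^y$. The gain is then simply its second moment.

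\textbf{Comparison.} Your Pythagorean version makes nonnegativity structural and never needs the outcome-noise term. It uses only the definitions of $m_a$ and $g_a$, positivity, and Assumption \ref{ass:rho_x}. MAR, transportability, and exchangeability play no role, so the result is visibly an observed-data identity. It also shows the semiparametric picture: $\phi_{a,0}^y$ is an inefficient gradient in the richer model, differing from the canonical gradient by an orthogonal term. The paper's route, in exchange, displays the individual variances $V_a(y)$ and $V_{a,0}(y)$ and shows exactly where the $1/\rho$ inflation enters each.

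\textbf{Strictness.} Your caveat is a genuine correction to the statement, not just a delicate point. Assumption \ref{ass:positivity} bounds $\omega$ only from above, and the paper's proof simply asserts that strictness follows immediately. Suppose the set where $\V\{m_a(y,X,S)\mid R=1,A=a,X\}>0$ and $\rho_a^0(X)<1$ lies inside $\{\omega=0\}$. Then the gain is zero even though the stated hypothesis holds. The correct sufficient condition is that this set has positive $P_{X,0}$-probability. Equivalently, its intersection with $\{\omega>0\}$ must have positive $P_{X,1}$-probability, which is exactly what your integrand argument needs.
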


Theorem \ref{thm:eff_gain_cdf} shows that the gain is nonnegative and is driven by the conditional variation of $m_a(y,X,S)$ given $(X,A=a,R=1)$.  Thus $S$ improves efficiency at threshold $y$ exactly to the extent that it predicts the threshold outcome after conditioning on $X$, and the gain is larger when validation is scarcer through the factor $(1-\rho_a^0)/\rho_a^0$.

\begin{corollary}[Quantile-specific efficiency gain]\label{cor:eff_gain_quantile}
Under the assumptions of Theorem \ref{thm:eff_gain_cdf} and Assumption \ref{ass:quantile}, the efficiency gain for estimating $q_a(\tau)$ from observing $S$ is
\[
\frac{V_{a,0}(q_a(\tau))-V_a(q_a(\tau))}{\{f_a(q_a(\tau))\}^2}.
\]
The gain for the QTE $\Delta(\tau)$ is
\[
\sum_{a=0}^1
\frac{V_{a,0}(q_a(\tau))-V_a(q_a(\tau))}{\{f_a(q_a(\tau))\}^2}.
\]
Thus a surrogate may substantially improve inference at one quantile and provide little gain at another.
\end{corollary}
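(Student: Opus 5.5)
The plan is to derive the statement from the efficient influence functions for quantiles in Corollary \ref{cor:eif_quantile}, applied separately in the surrogate-assisted model and in the no-surrogate benchmark of Definition \ref{def:noS}. Then I subtract the two variances at the quantile thresholds and invoke Theorem \ref{thm:eff_gain_cdf}.

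\emph{Step 1: common delta map.} In both observed-data models the CDF functional is $\psi_a$, identified as $E_0\{g_a(y,X)\}$. Indeed $g_a^{\mathrm{nos}}=g_a$ under Assumption \ref{ass:rho_x}, so the quantile $q_a(\tau)$ and the density $f_a(q_a(\tau))$ are the same population objects in both models. Assumption \ref{ass:quantile} gives Hadamard differentiability of the inverse map at $\psi_a$ tangentially to continuous functions. The chain rule for pathwise derivatives then gives the quantile gradient as $-\phi/f_a(q_a(\tau))$, where $\phi$ is the CDF gradient evaluated at $y=q_a(\tau)$. In the surrogate model this is exactly Corollary \ref{cor:eif_quantile}. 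In the benchmark, repeating that argument with $\phi_{a,0}^{y}$ in place of $\phi_a^y$ gives the gradient $-\phi_{a,0}^{q_a(\tau)}/f_a(q_a(\tau))$. This step only needs pathwise differentiability of $y\mapsto\psi_a(y)$ along submodels near $q_a(\tau)$, which holds by the same argument in both models.

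\emph{Step 2: single-arm gain.} The bounds are $V_a(q_a(\tau))/f_a^2$ and $V_{a,0}(q_a(\tau))/f_a^2$, both taken at $f_a=f_a(q_a(\tau))$. Their difference is the first display. Theorem \ref{thm:eff_gain_cdf} applied at $y=q_a(\tau)$ shows it is nonnegative and gives its explicit form.

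\emph{Step 3: QTE gain, the main obstacle.} The QTE gradient is $-\phi_1^{q_1}/f_1+\phi_0^{q_0}/f_0$ in the surrogate model, and the analogous expression holds in the benchmark. The variance difference therefore equals the two arm-specific differences plus twice the difference of the cross terms $E\{\phi_1^{q_1}\phi_0^{q_0}\}$ and $E\{\phi_{1,0}^{q_1}\phi_{0,0}^{q_0}\}$, each divided by $f_1f_0$. I will show that both cross terms are equal. The only nonzero overlap between the two arms comes from the target-sample term, because the source terms carry $\ind(A=1)\ind(A=0)=0$. That target term, $\pi_0^{-1}\ind(R=0)\{g_a-\psi_a\}$, is identical in the two influence functions.

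Cross products of the target term with source terms vanish because $\ind(R=0)\ind(R=1)=0$. Hence both cross terms equal
\[
\pi_0^{-1}\Cov_0\{g_1(q_1,X),g_0(q_0,X)\}.
\]
They cancel, and the QTE gain is the sum of the arm-specific gains. The care needed here is in checking that the covariance decomposition is the same in both models. I also need the benchmark's target component to coincide with the surrogate model's, which holds because $g_a^{\mathrm{nos}}=g_a$.
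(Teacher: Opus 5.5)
Your proposal is correct and follows essentially the same route as the paper. The quantile bounds are the CDF bounds at $q_a(\tau)$ divided by $f_a^2(q_a(\tau))$ via Corollary \ref{cor:eif_quantile}, Theorem \ref{thm:eff_gain_cdf} is applied at $y=q_a(\tau)$, and for the QTE the only cross-arm overlap is the target-covariate term, which is identical in the two models; your Step 3 simply makes explicit, through $\ind(R=0)\ind(R=1)=0$ and $\ind(A=1)\ind(A=0)=0$, the cross-term cancellation that the paper compresses into ``disjoint treatment support'' (the cross term actually enters the QTE variance with a minus sign, which does not matter because it cancels).
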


Corollary \ref{cor:eff_gain_quantile} transfers the CDF-level variance reduction to quantiles through the inverse-CDF derivative.  The same surrogate can therefore yield different gains at different quantiles, because both the predictive content of $m_a(q_a(\tau),X,S)$ and the density factor $f_a(q_a(\tau))$ may vary with $\tau$.

\section{Simulation Studies}\label{sec:experiments}

\subsection{Design of Simulation Studies}\label{sec:sim_design}

The simulation study evaluates the finite-sample behavior of the transported distributional and quantile treatment-effect procedures developed in Section \ref{sec:theory}.  Throughout, the observed data have the same structure as in the main analysis: target units contribute baseline covariates only, while source units contribute treatment, surrogate, validation status, and the primary outcome when validated.  The target estimand is always the transported QTE
\[
\Delta(\tau)=q_1(\tau)-q_0(\tau),
\qquad \tau\in\{0.25,0.50,0.75\}.
\]

In all experiments, source and target samples are of comparable size.  The source covariates are generated on a bounded support, and the target covariate law is an exponential tilt of the source law, so that the density ratio \(\omega(x)=dP_{X,0}/dP_{X,1}(x)\) is known for oracle calculations but must be estimated by the feasible procedures.  Source treatment is randomized in the experiments reported here.  Potential surrogates and outcomes are generated from treatment-specific models of the form
\[
S^a=h_a(X)+\varepsilon_S,
\qquad
Y^a=\nu_a(X)+\lambda_S\beta_a^\top S^a+\sigma_a(X)\varepsilon_Y,
\]
with Gaussian errors.  The details of \(h_a,\nu_a,\sigma_a\), and \(\beta_a\) differ across experiments to target different theoretical claims; the replication code fixes these functions across Monte Carlo replications.

The true target CDFs, quantiles, and QTEs are computed from a large independent target Monte Carlo sample using the closed-form conditional CDFs implied by the Gaussian model. Feasible estimators use five-fold sample splitting. Nuisance functions are fit with stable ridge and logistic working regressions, and estimated probabilities and density ratios are truncated only to avoid numerical instability. We report Monte Carlo bias, MSE, RMSE, and, for one-step estimators with implemented influence-function standard errors, pointwise coverage and interval length.  Unless otherwise stated, results are based on \(B_{\mathrm{MC}}=500\) replications.

\paragraph{Experiment 1: pointwise finite-sample behavior under transport and surrogate-dependent validation.} The first experiment is a stress test of the full transported missing-primary-outcome problem.  It combines source-target covariate shift, a nonlinear surrogate process, and validation probabilities that depend on the post-treatment surrogate.  The conditional outcome law given \((X,S,A)\) is intentionally learnable, so that \(m_a(y,X,S)\) is not made difficult artificially; however, the induced \(g_a(y,X)=E\{m_a(y,X,S)\mid R=1,A=a,X\}\) is nonlinear.  This makes a low-dimensional transported plug-in estimator vulnerable to persistent approximation bias, while the proposed one-step estimator can use both the source surrogate-process residual and the validated-outcome residual to correct it.

The main competitors are SA, the proposed cross-fitted surrogate-assisted one-step estimator; Oracle, the same estimating equation with true nuisance functions; IPW, a validation-only transported inverse-probability weighted estimator; Plugin, the transported regression plug-in estimator without one-step correction; and Source, a negative control that averages over the source rather than target covariate law.  The main table uses \(n=2000\); a larger replication with \(n=4000\) is reported in Appendix Table \ref{tab:exp1_appendix_n4000}.

\paragraph{Experiment 2: efficiency gains from observing surrogates.} The second experiment isolates the efficiency-gain claim in Theorem \ref{thm:eff_gain_cdf} and Corollary \ref{cor:eff_gain_quantile}.  To make the comparison with the no-surrogate benchmark clean, this experiment uses a correctly specified linear-Gaussian specialization and sets the validation probability to satisfy Assumption \ref{ass:rho_x}; in the reported implementation, \(\rho_a^0(X)\equiv\bar\rho\).  Thus SA and NoS identify the same transported QTE, and their difference reflects the information carried by the observed surrogate among source units whose primary outcome is missing.

We vary the surrogate-primary-outcome association and the primary-outcome validation rate over a small grid,
\[
\lambda_S\in\{0,1,2\},
\qquad
\bar\rho\in\{0.20,0.40,0.70\},
\]
while keeping the transport and randomized treatment mechanisms fixed. 

SA is compared with NoS, an otherwise analogous transported one-step estimator that targets the same transported QTE but does not use \(S\). 
Specifically, SA uses the surrogate-assisted regression \(m_a(y,X,S)\) and the source surrogate-process correction \(m_a(y,X,S)-g_a(y,X)\), whereas NoS works only with the \(X\)-level regression \(g_a(y,X)\) and the validation residual among units with observed primary outcomes. 

We also report IPW, Oracle, and FullOracle. IPW is a validation-only transported weighting estimator. Oracle is an infeasible observed-data benchmark that uses the true nuisance functions under the same missing-primary-outcome structure as SA. FullOracle is an infeasible full-data benchmark that treats all source primary outcomes as observed.

\paragraph{Experiment 3: source-target covariate shift and density-ratio adjustment.}
The third experiment isolates the transport component of the efficient influence function. 
Whereas Experiment 1 evaluates the full transported missing-outcome problem and Experiment 2 isolates the efficiency gain from observing surrogates, this experiment varies only the strength of source-target covariate shift. 
We fix the surrogate-outcome association, validation rate, and validation-surrogate dependence, and vary
\[
c\in\{0,0.4,0.8,1.2\}.
\]
Both randomized and observational source treatment assignment mechanisms are considered.

The proposed estimator with estimated \(\widehat\omega\) is compared with the same estimator using the true density ratio, a source-population negative control, and a validation-only transported IPW estimator. 
The source negative control estimates the source-population QTE rather than the target-population QTE. 
Thus it should be nearly unbiased when \(c=0\), but should become biased as the target and source covariate laws diverge. 
By contrast, the transported SA estimator should remain approximately unbiased when the density-ratio and regression learners are adequate. 
We also report the effective source sample size induced by the fitted density ratio,
\[
\mathrm{ESS}_{\omega}
=
\frac{\left(\sum_{i:R_i=1}\widehat\omega(X_i)\right)^2}
{\sum_{i:R_i=1}\widehat\omega^2(X_i)},
\]
as an overlap diagnostic. 
Comparing estimated-\(\omega\) and oracle-\(\omega\) versions separates intrinsic overlap deterioration from density-ratio estimation error.

\paragraph{Experiment 4: growing-grid quantile inversion, isotonic projection, and uniform bands.}
The fourth experiment evaluates the numerical and inferential components of the uniform QTE theory. 
Unlike the previous experiments, which compare estimators or data-generating regimes, this experiment fixes the transported SA estimator and varies only the outcome grid used to estimate and invert the transported CDFs. We use the same baseline transported missing-outcome design as in Experiment 1 and consider three grids: a coarse fixed grid, a moderate fixed grid, and the growing grid \(J_n=\lceil 4 n^{0.6}\rceil\). 
The QTE curve is evaluated on \(\mathcal T=\{0.10,0.11,\ldots,0.90\}\).

For each grid, we compute the raw one-step CDF estimates, their isotonic projections, the interpolated quantile functions, and simultaneous Wald-type QTE bands based on multiplier draws of the estimated QTE influence function. 
The reported diagnostics are simultaneous coverage over \(\mathcal T\), average band width, \(\sup_{\tau\in\mathcal T}|\widehat\Delta(\tau)-\Delta(\tau)|\), the isotonic projection distance \(d_{\mathrm{iso}}\), and \(\sqrt n d_{\mathrm{iso}}\). 

We also compute an oracle grid-discretization error by applying the same inverse-CDF operation to the true CDF restricted to each numerical grid. 
This last diagnostic isolates the approximation requirement in Proposition~\ref{prop:grid_projection}: fixed grids may look numerically stable in finite samples, but their discretization error is not controlled at the root-\(n\) scale, whereas the growing grid is designed to make the inverse-map error negligible.

\subsection{Results of Simulation Studies}\label{sec:sim_results}
\paragraph{Experiment 1: pointwise finite-sample behavior under transport and surrogate-dependent validation.} Table \ref{tab:exp1_main_n2000} reports the main finite-sample results for Experiment 1.  Across all three quantiles, SA is the best feasible point estimator among the methods compared.  Relative to IPW, SA reduces MSE by 60.0\%, 64.9\%, and 84.9\% at $\tau=0.25,0.50,0.75$, respectively.  Relative to the transported plug-in estimator, the corresponding reductions are 66.1\%, 81.9\%, and 75.3\%.  The source-population negative control is also substantially biased, confirming that the target covariate law is material in this design.  The oracle rows show the performance of the infeasible benchmark and should be read as a reference point rather than as a competing method.

\begin{table}[!htbp]
\centering
\begin{threeparttable}
\caption{Experiment 1: finite-sample performance for transported QTE estimation, $n=2000$}
\label{tab:exp1_main_n2000}
\small
\setlength{\tabcolsep}{5.5pt}
\renewcommand{\arraystretch}{1.12}
\begin{tabular}{cc l r c r r r}
\toprule
$\tau$ & $\Delta_0(\tau)$ & Method & Bias & MSE & RMSE & Coverage & CI length \\
\midrule
0.25 & -3.884 & IPW        & -0.073 & \msegain{3.907}{60.0} & 1.976 & 0.706 & 4.655 \\
0.25 & -3.884 & Oracle     & -0.163 & \mseplain{1.364}      & 1.168 & 0.780 & 3.471 \\
0.25 & -3.884 & Plugin     &  1.961 & \msegain{4.599}{66.1} & 2.144 & \dash & \dash \\
\SArow
0.25 & -3.884 & \textbf{SA} & \textbf{0.023} & \msebest{1.561} & \textbf{1.249} & \textbf{0.770} & \textbf{3.460} \\
0.25 & -3.884 & Source     &  1.757 & \msegain{3.270}{52.3} & 1.808 & \dash & \dash \\
\addlinespace[0.35em]
0.50 & -0.940 & IPW        &  0.037 & \msegain{4.773}{64.9} & 2.185 & 0.754 & 6.376 \\
0.50 & -0.940 & Oracle     & -0.080 & \mseplain{1.053}      & 1.026 & 0.802 & 2.858 \\
0.50 & -0.940 & Plugin     &  2.931 & \msegain{9.275}{81.9} & 3.046 & \dash & \dash \\
\SArow
0.50 & -0.940 & \textbf{SA} & \textbf{0.215} & \msebest{1.675} & \textbf{1.294} & \textbf{0.762} & \textbf{3.521} \\
0.50 & -0.940 & Source     &  2.875 & \msegain{8.483}{80.3} & 2.913 & \dash & \dash \\
\addlinespace[0.35em]
0.75 &  3.150 & IPW        & -0.104 & \msegain{13.398}{84.9} & 3.660 & 0.788 & 9.962 \\
0.75 &  3.150 & Oracle     & -0.030 & \mseplain{0.678}       & 0.824 & 0.816 & 2.289 \\
0.75 &  3.150 & Plugin     &  2.777 & \msegain{8.157}{75.3}  & 2.856 & \dash & \dash \\
\SArow
0.75 &  3.150 & \textbf{SA} & \textbf{0.129} & \msebest{2.017} & \textbf{1.420} & \textbf{0.744} & \textbf{3.453} \\
0.75 &  3.150 & Source     &  2.832 & \msegain{8.173}{75.3} & 2.859 & \dash & \dash \\
\bottomrule
\end{tabular}
\begin{tablenotes}[flushleft]
\footnotesize
\item \textit{Notes.} Results are based on $B_{\mathrm{MC}}=500$ Monte Carlo replications.  $\Delta_0(\tau)$ is the Monte Carlo truth for the transported QTE.  Bias is $\overline{\widehat\Delta}-\Delta_0(\tau)$; MSE is $E\{(\widehat\Delta-\Delta_0)^2\}$; RMSE is $\sqrt{\mathrm{MSE}}$.  Coverage and CI length refer to nominal 95\% Wald intervals based on the implemented influence-function standard error and are reported for one-step estimators.  Plugin and Source are point-estimation baselines, so interval metrics are not reported.  Oracle uses true nuisance functions and is infeasible.  For IPW, Plugin, and Source, the superscript next to MSE reports $100\times(1-\mathrm{MSE}_{\mathrm{SA}}/\mathrm{MSE}_{\mathrm{method}})$; larger values indicate larger MSE reduction by SA.
\end{tablenotes}
\end{threeparttable}
\end{table}

The main pattern is stable across the three parts of the distribution.  IPW is most affected in the upper tail, where the validation-only strategy has much larger variance.  Plugin and Source have smaller Monte Carlo variation but large systematic biases, reflecting approximation error in the transported regression and the failure to target the target covariate law, respectively.  SA combines the low-bias behavior of the one-step correction with materially smaller MSE than the feasible baselines.  Appendix Table \ref{tab:exp1_appendix_n4000} shows that the same ordering persists at $n=4000$, with SA continuing to dominate IPW, Plugin, and Source in MSE at every reported quantile.

\paragraph{Experiment 2: efficiency gains from observing surrogates.}
Table \ref{tab:exp2_efficiency_tau50} reports the main efficiency comparison at the median.  The results follow the pattern predicted by the efficiency-gain formula.  When \(\lambda_S=0\), the NoS/SA MSE ratio is essentially one at every validation rate, indicating that the procedure does not manufacture precision from an uninformative surrogate.  As the surrogate becomes more predictive of the primary outcome, SA becomes substantially more efficient than NoS.  At the lowest validation rate, the NoS/SA MSE ratio increases from 1.00 to 1.64 and 2.32 as \(\lambda_S\) moves from 0 to 1 and 2, corresponding to MSE reductions of 39.2\% and 56.8\%.  The gain is smaller when validation is abundant, as expected from the \((1-\rho_a^0)/\rho_a^0\) factor in Theorem \ref{thm:eff_gain_cdf}.

The empirical MSE ratios closely track the oracle influence-function variance ratios, supporting the interpretation that the observed improvement is the efficiency gain predicted by the theory rather than a nuisance-model artifact. IPW is also less stable than SA, especially under low validation rates, because it relies only on validated primary outcomes.

Appendix Table~\ref{tab:exp2_appendix_ratios} repeats the same efficiency-ratio analysis at $\tau=0.25,0.50,0.75$ and shows that the monotone pattern is not specific to the median. 
To further check that the improvement of SA is not obtained at the cost of bias or unreliable inference, Appendix Figures~\ref{fig:exp2_a3_coverage_length} and~\ref{fig:exp2_a3_bias_mcsd} summarize the full finite-sample diagnostics. These diagnostics show that SA remains nearly unbiased, has broadly comparable coverage to the other one-step estimators, and achieves shorter intervals and smaller empirical dispersion than NoS and IPW.

\begin{table}[!htbp]
\centering
\caption{Experiment 2: efficiency gains from observing surrogates at $\tau=0.50$}
\label{tab:exp2_efficiency_tau50}
\begin{threeparttable}
\small
\setlength{\tabcolsep}{4.6pt}
\renewcommand{\arraystretch}{1.12}
\begin{adjustbox}{max width=\textwidth}
\begin{tabular}{ccrrrrrrr}
\toprule
$\bar\rho$ & $\lambda_S$ & MSE(SA) & MSE(NoS) & NoS/SA MSE & SA gain & NoS/SA length & Theory NoS/SA & IPW/SA MSE \\
\midrule
0.20 & 0.00 & 0.010 & 0.010 & 1.00 & 0.2\% & 0.98 & 1.00 & \textbf{1.18} \\
0.20 & 1.00 & 0.023 & 0.037 & \textbf{1.64} & 39.2\% & 1.25 & \textbf{1.59} & \textbf{1.97} \\
0.20 & 2.00 & 0.044 & 0.101 & \textbf{2.32} & 56.8\% & 1.50 & \textbf{2.19} & \textbf{2.76} \\
0.40 & 0.00 & 0.005 & 0.005 & 1.00 & 0.1\% & 1.00 & 1.00 & \textbf{1.38} \\
0.40 & 1.00 & 0.013 & 0.019 & \textbf{1.48} & 32.3\% & 1.17 & \textbf{1.39} & \textbf{2.08} \\
0.40 & 2.00 & 0.029 & 0.050 & \textbf{1.68} & 40.6\% & 1.31 & \textbf{1.71} & \textbf{2.25} \\
0.70 & 0.00 & 0.003 & 0.003 & 1.00 & -0.1\% & 1.00 & 1.00 & \textbf{1.81} \\
0.70 & 1.00 & 0.008 & 0.009 & \textbf{1.16} & 14.1\% & 1.07 & \textbf{1.15} & \textbf{2.11} \\
0.70 & 2.00 & 0.020 & 0.025 & \textbf{1.29} & 22.5\% & 1.12 & \textbf{1.27} & \textbf{2.25} \\
\bottomrule
\end{tabular}
\end{adjustbox}
\begin{tablenotes}[flushleft]
\footnotesize
\item \textit{Notes.} SA is the proposed surrogate-assisted transported one-step estimator.  NoS is the otherwise analogous transported one-step estimator that does not use the surrogate.  Ratios larger than one favor SA.  SA gain is $100\times(1-\mathrm{MSE}_{\mathrm{SA}}/\mathrm{MSE}_{\mathrm{NoS}})$.  The theory column reports the corresponding oracle influence-function variance ratio under the simulation law.  IPW/SA MSE is included to show the loss from relying only on validated primary outcomes.
\end{tablenotes}
\end{threeparttable}
\end{table}

\paragraph{Experiment 3: source-target covariate shift and density-ratio adjustment.}

The results in Table~\ref{tab:exp3_transport_tau50} follow the pattern predicted by the transport identification argument and the efficient influence function. When \(c=0\), the source and target covariate laws coincide, and the source-population negative control is nearly unbiased for the target QTE. This sanity check confirms that the negative control behaves as expected when no transport adjustment is needed. As \(c\) increases, however, the source-population estimator develops substantial systematic bias because it continues to target the source covariate law rather than the target covariate law. For example, at the median, its bias increases in magnitude from nearly zero at \(c=0\) to about \(-0.70\) at \(c=1.2\) under both randomized and observational assignment.

In contrast, the proposed transported SA estimator remains centered near the target estimand throughout the shift sequence. 
Its MSE and interval length increase moderately as \(c\) grows, reflecting the finite-sample overlap loss induced by more variable density-ratio weights, but the bias remains small. 
Moreover, the feasible SA estimator with estimated \(\widehat\omega\) is nearly indistinguishable from the corresponding oracle-\(\omega\) version. 
This indicates that, in this design, the observed degradation under stronger shift is driven primarily by intrinsic overlap loss rather than by density-ratio estimation error. 
The validation-only IPW estimator is also approximately centered, but it is less efficient, with substantially longer confidence intervals. 
Thus Experiment 3 confirms that the density-ratio component is essential for transported QTE estimation and that the proposed estimator uses it effectively in both randomized and observational source studies.

The density-ratio diagnostics in Appendix Figure~\ref{fig:exp3_omega_diagnostics} show that estimating \(\omega\) becomes more difficult as \(c\) increases, but the MSE of SA using \(\widehat\omega\) remains essentially the same as that of the oracle-\(\omega\) version; hence the finite-sample degradation under stronger shift is primarily an overlap issue rather than a failure of density-ratio estimation.
\begin{table}[!htbp]
\centering
\begin{threeparttable}
\caption{Experiment 3: median transported QTE performance under source-target shift}
\label{tab:exp3_transport_tau50}
\small
\renewcommand{\arraystretch}{1.03}

\begin{tabular}{c l rrrr rrrr}
\toprule
& & \multicolumn{4}{c}{\textit{Randomized}} 
& \multicolumn{4}{c}{\textit{Observational}} \\
\cmidrule(lr){3-6}\cmidrule(lr){7-10}
$c$ & Method 
& Bias & MSE & Coverage & CI length
& Bias & MSE & Coverage & CI length \\
\midrule
\SArow
0.0 & SA                &  0.004 & 0.012 & 0.922 & 0.456 & -0.013 & 0.013 & 0.928 & 0.462 \\
0.0 & SA true $\omega$  &  0.004 & 0.012 & 0.922 & 0.453 & -0.013 & 0.013 & 0.932 & 0.458 \\
0.0 & Source            &  0.004 & 0.006 & --    & --    & -0.004 & 0.006 & --    & --    \\
0.0 & IPW               & -0.016 & 0.021 & 0.994 & 0.869 & -0.036 & 0.019 & 0.996 & 0.954 \\
\addlinespace[0.25em]
\SArow
0.4 & SA                &  0.003 & 0.014 & 0.920 & 0.472 & -0.003 & 0.014 & 0.932 & 0.463 \\
0.4 & SA true $\omega$  &  0.004 & 0.014 & 0.926 & 0.470 & -0.003 & 0.013 & 0.932 & 0.462 \\
0.4 & Source            & -0.254 & 0.070 & --    & --    & -0.254 & 0.070 & --    & --    \\
0.4 & IPW               & -0.018 & 0.020 & 0.982 & 0.790 & -0.027 & 0.016 & 0.994 & 0.830 \\
\addlinespace[0.25em]
\SArow
0.8 & SA                & -0.006 & 0.016 & 0.914 & 0.495 &  0.003 & 0.016 & 0.942 & 0.533 \\
0.8 & SA true $\omega$  & -0.005 & 0.015 & 0.918 & 0.491 &  0.003 & 0.016 & 0.940 & 0.526 \\
0.8 & Source            & -0.498 & 0.254 & --    & --    & -0.488 & 0.244 & --    & --    \\
0.8 & IPW               & -0.028 & 0.026 & 0.970 & 0.776 & -0.018 & 0.018 & 0.988 & 0.826 \\
\addlinespace[0.25em]
\SArow
1.2 & SA                & -0.002 & 0.022 & 0.918 & 0.576 &  0.002 & 0.024 & 0.920 & 0.603 \\
1.2 & SA true $\omega$  & -0.001 & 0.021 & 0.914 & 0.565 &  0.002 & 0.024 & 0.922 & 0.605 \\
1.2 & Source            & -0.702 & 0.498 & --    & --    & -0.697 & 0.492 & --    & --    \\
1.2 & IPW               & -0.015 & 0.031 & 0.962 & 0.862 & -0.014 & 0.027 & 0.970 & 0.880 \\
\bottomrule
\end{tabular}
\begin{tablenotes}[flushleft]
\footnotesize
\item \textit{Notes.} SA uses the estimated density ratio. SA true $\omega$ uses the true density ratio but otherwise estimates the same nuisance functions. Source is the source-population negative control without transport. IPW is a validation-only transported weighting estimator.
\end{tablenotes}
\end{threeparttable}
\end{table}

\paragraph{Experiment 4: growing-grid quantile inversion, isotonic projection, and uniform bands.}

Table~\ref{tab:exp4_uniform_grid} reports the numerical and uniform-in-\(\tau\) diagnostics for Experiment 4. The most direct diagnostic for Proposition~\ref{prop:grid_projection} is the oracle grid error, which isolates inverse-CDF discretization error by applying the same quantile inversion step to the true transported CDF restricted to each numerical grid. For the fixed grids, this error is essentially constant as \(n\) increases, so its root-\(n\) scale contribution grows with sample size. In contrast, the growing grid makes the oracle grid error negligible at the root-\(n\) scale. This supports the theoretical requirement that the numerical grid be refined fast enough for the quantile inversion step not to contaminate the first-order expansion.

The remaining columns summarize the finite-sample behavior of the full SA estimator and the multiplier simultaneous bands. The sup-norm QTE error decreases with \(n\) across all grid choices, and the simultaneous coverage over \(\mathcal T=\{0.10,\ldots,0.90\}\) remains at or above the nominal level in these designs. The growing grid is conservative in finite samples, as reflected by its wider simultaneous bands, but it is also the only grid whose deterministic inversion error is controlled at the root-\(n\) scale. Thus the experiment illustrates the distinction between finite-sample smoothing effects from coarse grids and the asymptotic grid-refinement condition required by the uniform quantile theory. Appendix Figure~\ref{fig:exp4_width_iso} further reports the band-width and isotonic-projection diagnostics underlying these results.

\begin{table}[!htbp]
\centering
\begin{threeparttable}
\caption{Experiment 4: grid design and simultaneous QTE bands}
\label{tab:exp4_uniform_grid}
\small
\renewcommand{\arraystretch}{1.00}

\begin{tabular*}{\textwidth}{@{\extracolsep{\fill}}lrrrrrr@{}}
\toprule
Grid & $n$ & $J$ & Oracle grid err. & Sup. error & Unif. cover & $\sqrt n d_{\rm iso}$ \\
\midrule
fixed101 & 2000 & 101 & 0.0010 & 0.347 & 0.998 & 0.160 \\
fixed51  & 2000 & 51  & 0.0054 & 0.332 & 0.996 & 0.145 \\
growing  & 2000 & 383 & 0.0001 & 0.361 & 1.000 & 0.182 \\
fixed101 & 4000 & 101 & 0.0010 & 0.243 & 0.996 & 0.085 \\
fixed51  & 4000 & 51  & 0.0054 & 0.230 & 0.988 & 0.075 \\
growing  & 4000 & 580 & 0.0000 & 0.255 & 1.000 & 0.099 \\
fixed101 & 8000 & 101 & 0.0010 & 0.171 & 0.974 & 0.050 \\
fixed51  & 8000 & 51  & 0.0054 & 0.162 & 0.974 & 0.044 \\
growing  & 8000 & 879 & 0.0000 & 0.178 & 1.000 & 0.058 \\
\bottomrule
\end{tabular*}

\begin{tablenotes}[flushleft]
\footnotesize
\item \textit{Notes.} Uniform coverage is simultaneous over $\mathcal{T}=\{0.10,0.11,\ldots,0.90\}$ using Wald-type multiplier bands. Oracle grid err. is the deterministic inverse-CDF discretization error obtained by restricting the true CDF to the reported grid. Sup. error is $\sup_{\tau\in\mathcal{T}}|\widehat{\Delta}(\tau)-\Delta(\tau)|$. $d_{\rm iso}$ is the maximum projection distance between raw and isotonic CDF estimates over treatment arms and grid points.
\end{tablenotes}
\end{threeparttable}
\end{table}

\section{Real Data Illustration}\label{sec:realdata}

We illustrate the proposed method using the ACTG 175 HIV clinical trial \citep{hammer1996trial,actg175data}. 
The trial enrolled adults with HIV infection and baseline CD4 counts between 200 and 500 cells per cubic millimeter, and compared zidovudine monotherapy with alternative antiretroviral regimens. 
For the present analysis, the primary outcome \(Y\) is CD4 count at 96 weeks, denoted by \texttt{cd496}; the validation indicator \(M\) is \texttt{r}, which records whether the 96-week CD4 measurement is observed. 
The surrogate vector \(S\) consists of 20-week CD4 and CD8 counts, \texttt{cd420} and \texttt{cd820}. 
Baseline covariates \(X\) include demographic, clinical, treatment-history, and baseline immunologic variables, including baseline CD4 and CD8 counts. 
Treatment \(A\) is coded as non-zidovudine-only therapy versus zidovudine-only therapy.

ACTG 175 does not contain an external target-only cohort. 
We therefore use it as an internal transport illustration. 
A fixed source/target split is constructed using only baseline covariates, and the estimators use only \(X\) from the target units. 
All target post-baseline variables, including \(A\), \(S\), \(M\), and \(Y\), are held out from estimation. 
This construction gives observed data of the same form as in the paper,
\[
O=(R,X,R A,R S,R M,R M Y),
\]
while allowing a real-data assessment of the transported QTE estimator under missing primary outcomes. 
The purpose of the analysis is not to validate a known ground truth, but to examine whether early surrogate biomarkers improve stability relative to no-surrogate and validation-only alternatives in a real clinical-trial setting.

The constructed sample contains 1,408 source units and 731 target units. 
The source validation rate is 0.627, so a substantial fraction of source primary outcomes are missing. 
The density-ratio effective sample size is 682.9, equal to 48.5\% of source units, indicating nontrivial but usable source-target overlap. 
The early biomarkers are predictive of the primary outcome: in the source validation sample, adding 20-week CD4 and CD8 to the baseline covariates increases the predictive \(R^2\) for 96-week CD4 from 0.313 to 0.504. 
This provides empirical motivation for using the surrogate-assisted estimator.

\begin{table}[!htbp]
\centering
\begin{threeparttable}
\caption{ACTG 175 empirical illustration: transported QTE estimates for 96-week CD4 count}
\label{tab:actg175_qte}
\small
\renewcommand{\arraystretch}{1.05}

\begin{tabular*}{\textwidth}{@{\extracolsep{\fill}}ccrrrr@{}}
\toprule
$\tau$ & Method & $\widehat\Delta(\tau)$ & SE & 95\% CI & CI length ratio \\
\midrule
\textbf{0.25} & \textbf{SA} & \textbf{121.1} & \textbf{2.8} & \textbf{[115.6, 126.6]} & \textbf{1.00} \\
0.25 & NoS & 111.8 & 4.8 & [102.4, 121.1] & 1.70 \\
0.25 & IPW & 113.4 & 3.1 & [107.3, 119.6] & 1.12 \\
0.25 & Source & 112.8 & -- & -- & -- \\
\textbf{0.50} & \textbf{SA} & \textbf{78.0} & \textbf{5.0} & \textbf{[68.1, 87.9]} & \textbf{1.00} \\
0.50 & NoS & 72.3 & 5.0 & [62.5, 82.1] & 0.99 \\
0.50 & IPW & 64.0 & 5.7 & [52.9, 75.2] & 1.12 \\
0.50 & Source & 51.5 & -- & -- & -- \\
\textbf{0.75} & \textbf{SA} & \textbf{112.0} & \textbf{5.4} & \textbf{[101.4, 122.7]} & \textbf{1.00} \\
0.75 & NoS & 105.8 & 5.4 & [95.2, 116.4] & 0.99 \\
0.75 & IPW & 72.4 & 6.9 & [58.8, 86.0] & 1.28 \\
0.75 & Source & 55.2 & -- & -- & -- \\
\bottomrule
\end{tabular*}

\begin{tablenotes}[flushleft]
\footnotesize
\item \textit{Notes.} The target covariate distribution is constructed by a fixed baseline-only internal transport split. Only target baseline covariates are used by the estimators. The primary outcome is 96-week CD4 count; surrogates are 20-week CD4 and CD8 counts. Source is a negative-control source-population estimate. The source validation rate is 0.627, and the density-ratio ESS is 682.9 (48.5\% of source units).
\end{tablenotes}
\end{threeparttable}
\end{table}

\begin{figure}[!htbp]
    \centering
    \includegraphics[width=0.92\linewidth]{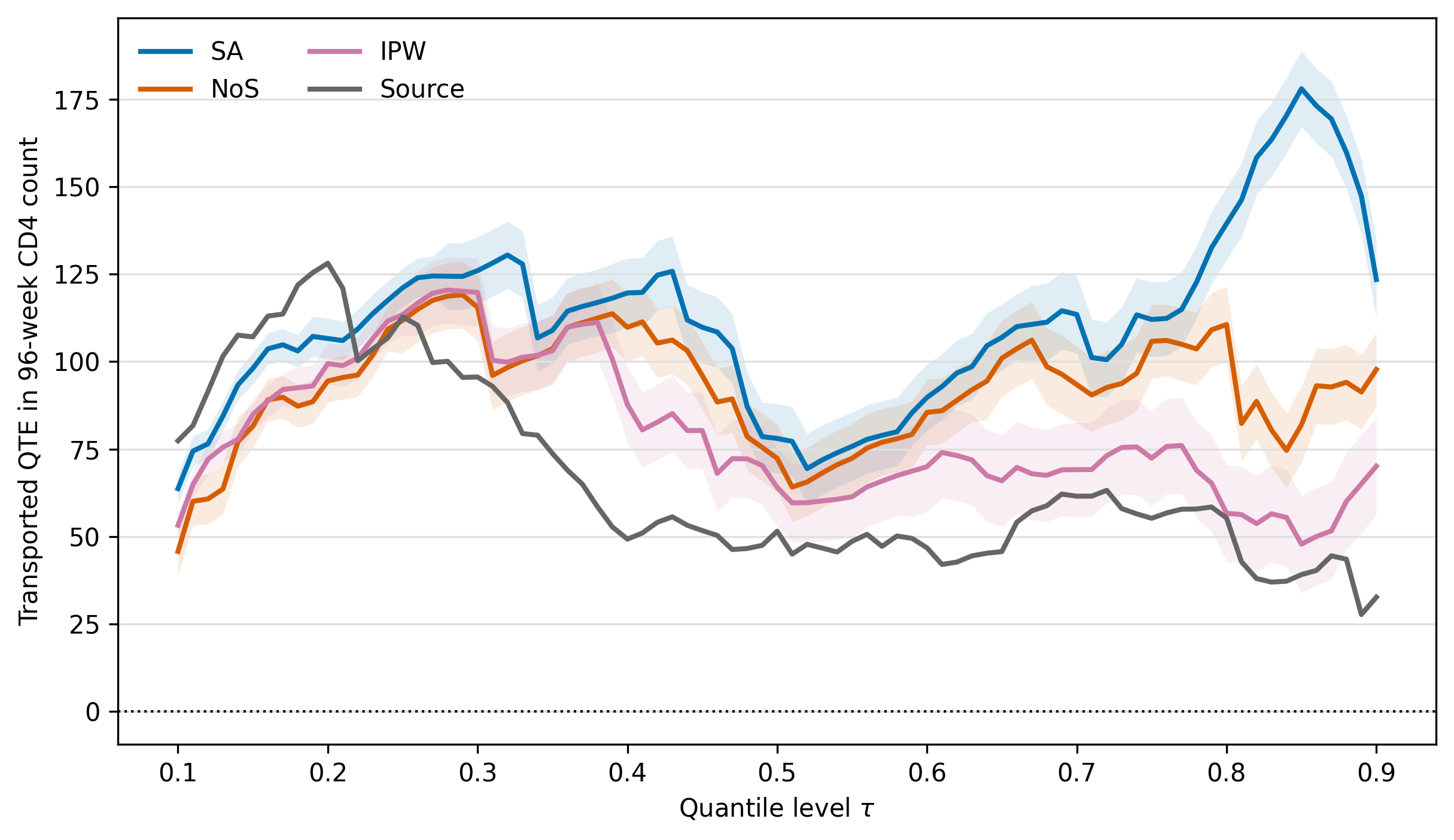}
    \caption{
    ACTG 175 empirical illustration: transported QTE curve for 96-week CD4 count. 
    The figure reports \(\widehat\Delta(\tau)\) over \(\tau\in[0.10,0.90]\). 
    Shaded bands are pointwise 95\% Wald intervals for SA, NoS, and IPW. 
    Source is a source-population negative-control curve and does not target the transported estimand. 
    The SA curve is positive over the central quantile range, while the validation-only IPW curve is less stable and the source-population curve differs materially from the transported curves.
    }
    \label{fig:actg175_qte_curve}
\end{figure}

Table~\ref{tab:actg175_qte} and Figure~\ref{fig:actg175_qte_curve} show positive transported QTE estimates for 96-week CD4 count across the reported quantiles. 
At \(\tau=0.25,0.50,0.75\), the SA estimates are 121.1, 78.0, and 112.0 CD4 cells, respectively. 
The comparison with IPW illustrates the value of using surrogate information: the validation-only IPW intervals are wider at all three reported quantiles, with CI length ratios relative to SA equal to 1.12, 1.12, and 1.28. The comparison with NoS is more localized. SA gives a substantially shorter interval at the lower quartile, while the median and upper-quartile intervals are similar. This is consistent with the surrogate providing useful information about parts of the outcome distribution, without implying uniform dominance at every quantile in a single real-data analysis. 
Finally, the source-population curve and estimates are notably smaller at the median and upper quartile, highlighting that the target-population transported estimand can differ materially from the source-population estimand.

Overall, the empirical illustration supports the main message of the paper: when early biomarkers are predictive and the long-term primary outcome is incompletely observed, the proposed estimator can use the surrogate process to improve stability relative to validation-only weighting while targeting the desired transported distributional estimand. Additional implementation diagnostics, including interval-length ratios and source-target balance before and after density-ratio weighting, are reported in Appendix~\ref{app:actg175_details}.

\section{Discussion and Scope}\label{sec:discussion}

The method uses the surrogate as auxiliary information rather than as a replacement endpoint. It does not require the surrogate to mediate the treatment effect. The identifying assumptions are missing at random for the primary outcome and conditional transportability of the treatment-specific surrogate-primary outcome process. The core theoretical contribution is the three-layer canonical gradient, together with the exact drift identity and the quantile-specific efficiency-gain formula. The computational contribution is that, after nuisance fitting, the estimator is a closed-form one-step update and the recommended nuisance subproblems can be implemented with convex or normal-equation procedures. Taken together, the simulations and ACTG 175 illustration indicate that the proposed framework is practically implementable and that its main components have distinct empirical roles: surrogate measurements improve efficiency, density-ratio weighting targets the desired covariate law, and growing-grid quantile inversion stabilizes uniform inference. Future work may extend the empirical component to settings with external target cohorts and richer high-dimensional surrogate processes.

\clearpage
\appendix
\setcounter{table}{0}
\renewcommand{\thetable}{A.\arabic{table}}
\renewcommand{\theHtable}{A.\arabic{table}}
\setcounter{figure}{0}
\renewcommand{\thefigure}{A.\arabic{figure}}
\renewcommand{\theHfigure}{A.\arabic{figure}}

\section{Proofs for Identification and Efficient Influence Functions}

\begin{proof}[Proof of Proposition \ref{prop:identification}]
Fix $a$ and $y$. By the definition of the target estimand and iterated expectation,
\[
\psi_a(y)=E\{P(Y^a\le y\mid X,R=0)\mid R=0\}.
\]
By Assumption \ref{ass:transport},
\[
P(Y^a\le y\mid X,R=0)=P(Y^a\le y\mid X,R=1).
\]
Conditioning on $S^a$ gives
\[
P(Y^a\le y\mid X,R=1)=E\{P(Y^a\le y\mid X,S^a,R=1)\mid X,R=1\}.
\]
By Assumption \ref{ass:causal}, $(S^a,Y^a)\perp A\mid X,R=1$, and by consistency, among source units with $A=a$, $(S,Y)=(S^a,Y^a)$. Hence
\[
S^a\mid X,R=1\overset d=S\mid X,A=a,R=1,
\]
and
\[
P(Y^a\le y\mid X,S^a=s,R=1)=P(Y\le y\mid X,S=s,A=a,R=1).
\]
By Assumption \ref{ass:mar},
\begin{align*}
P(Y\le y\mid X,S=s,A=a,R=1)
&=P(Y\le y\mid X,S=s,A=a,R=1,M=1)\\
&=m_a(y,X,s).
\end{align*}
Combining these displays yields
\[
P(Y^a\le y\mid X,R=0)=E\{m_a(y,X,S)\mid X,A=a,R=1\}=g_a(y,X).
\]
Taking expectation over $X\mid R=0$ proves $\psi_a(y)=E_0\{g_a(y,X)\}$. The unconditional representation with $\ind(R=0)/\pi_0$ is immediate. The source representation follows from $dP_{X,0}=\omega dP_{X,1}$.
\end{proof}

\begin{proof}[Proof of Theorem \ref{thm:eif}]
Fix $a$ and $y$ and suppress these indices where doing so is harmless. The proof is for the observed-data functional $\Psi_a^y(P)=E_P\{g_P(y,X)\mid R=0\}$; Proposition \ref{prop:identification} then maps it to the causal estimand. The argument is the usual observed-data tangent-space calculation for a pathwise differentiable functional \citep{Newey1994,RobinsRotnitzky1995,Tsiatis2006}. Consider any regular parametric submodel $P_\varepsilon$ through $P$ with score $\ell(O)$ at $\varepsilon=0$. The observed-data score has the orthogonal decomposition
\[
\ell=\ell_R+\ind(R=0)\ell_{X,0}+\ind(R=1)\{\ell_{X,1}+\ell_A+\ell_S+\ell_M+M\ell_Y\},
\]
where each conditional score has mean zero given its conditioning variables. Thus the observed-data tangent space is the orthogonal direct sum of the closed linear spans generated by these score components.

The pathwise derivative of
\[
\Psi_\varepsilon=E_{0,\varepsilon}\{g_\varepsilon(X)\},\qquad
 g_\varepsilon(x)=E_\varepsilon\{m_\varepsilon(X,S)\mid R=1,A=a,X=x\},
\]
with $m_\varepsilon(x,s)=E_\varepsilon(Z_y\mid R=1,A=a,X=x,S=s,M=1)$ has three nonzero components. First, perturbing $P_{X\mid R=0}$ gives
\[
\dot\Psi_{X0}=E\left[\frac{\ind(R=0)}{\pi_0}\{g(X)-\Psi\}\ell(O)\right].
\]
The score $\ell_R$ for the marginal law of $R$ contributes zero because $\Psi$ is conditional on $R=0$, and the centered form above is orthogonal to $\ell_R$.

Second, perturbing the conditional law $P_{S\mid R=1,A=a,X}$ gives
\begin{align*}
\dot\Psi_S
&=E_0\left[E\{(m(X,S)-g(X))\ell_S\mid R=1,A=a,X\}\right]\\
&=E\left[\frac{\ind(R=1)\omega(X)\ind(A=a)}{\pi_1e_a(X)}\{m(X,S)-g(X)\}\ell(O)\right],
\end{align*}
where the second equality follows by changing measure from $X\mid R=0$ to $X\mid R=1$ using $\omega$ and by reweighting the event $A=a$ within the source. The representing function lies in the closed tangent subspace for $S\mid R=1,A,X$ and is orthogonal to the source $X$ and treatment tangent spaces because its conditional mean given $(R=1,A,X)$ is zero.

Third, perturbing $P_{Y\mid R=1,A=a,X,S,M=1}$ gives
\begin{align*}
\dot\Psi_Y
&=E_0\left[E\{E[(Z_y-m(X,S))\ell_Y\mid R=1,A=a,X,S,M=1]\mid R=1,A=a,X\}\right]\\
&=E\left[\frac{\ind(R=1)\omega(X)\ind(A=a)M}{\pi_1e_a(X)\rho_a(X,S)}\{Z_y-m(X,S)\}\ell(O)\right].
\end{align*}
This representing function belongs to the closed tangent subspace for the validated primary-outcome law and is orthogonal to the missingness tangent space because its conditional mean given $(R=1,A,X,S,M)$ is zero.

The remaining tangent components have zero derivative and zero projection. Perturbing $P_{X\mid R=1}$ alone changes neither the target conditional law $P_{X\mid R=0}$ nor the conditional regressions entering $g$; its possible appearance through the Radon--Nikodym derivative is already only a representation of the change of measure and cancels in the pathwise derivative. Perturbing $P_{A\mid X,R=1}$ or $P_{M\mid A,X,S,R=1}$ changes neither $g(x)$ nor $m(x,s)$ as conditional distributions; the displayed representer is orthogonal to these score spaces because the surrogate and outcome residuals have the corresponding conditional means zero.

Adding the three derivative representers yields $E\{\phi_a^y(O)\ell(O)\}$ for every score $\ell$ in the nonparametric observed-data tangent space. The three summands are mutually orthogonal closed-subspace components, and $E\phi_a^y=0$ by iterated expectation. Hence $\phi_a^y$ is the canonical gradient and its variance is the efficiency bound. In a submodel in which $e_a$ is known, the tangent component for $A\mid X,R=1$ is removed; in a submodel in which $\rho_a$ is known, the tangent component for $M\mid A,X,S,R=1$ is removed. Since $\phi_a^y$ is already orthogonal to both components, its projection onto the smaller tangent space is unchanged. This proves the invariance of the canonical gradient and bound under known treatment or missingness mechanisms.
\end{proof}

\begin{proof}[Proof of Corollary \ref{cor:mean_special_case}]
The proof is identical to the proof of Theorem \ref{thm:eif} with the threshold outcome $Z_y=\ind(Y\le y)$ replaced by the integrable primary outcome $Y$. The pathwise derivative through the target covariate law gives the first term, the derivative through the source surrogate law gives the second term, and the derivative through the validated primary-outcome law gives the third term. These three terms lie in mutually orthogonal tangent components and are mean zero, so their sum is the canonical gradient. Taking the difference between $a=1$ and $a=0$ gives the transported ATE gradient.

When $P_{X,0}=P_{X,1}$, $\omega=1$, so the two source residual terms match the surrogate-process and validation-outcome residuals appearing in the surrogate-assisted ATE gradient of \citet{KallusMao2024Surrogates}, with $M$ playing the role of their primary-outcome observation indicator. The target-covariate term in the present formula is still generated by the two-sample transport design. If that design is further collapsed to a single-population model in which the covariate law is observed from the same population as the source study, the first term becomes the usual baseline regression component and the displayed gradient is exactly the armwise surrogate-assisted ATE canonical gradient in the reference model. This proves both the transported mean result and the stated precise relation to the reference paper.
\end{proof}

\begin{proof}[Proof of Corollary \ref{cor:eif_quantile}]
For fixed $a$, the quantile functional is the inverse map $q_a(\tau)=\psi_a^{-1}(\tau)$. This is the same inverse-map step used in semiparametric quantile and QTE inference \citep{KoenkerBassett1978,Firpo2007,ChernozhukovFernandezValMelly2013}. Under Assumption \ref{ass:quantile}, the inverse map is Hadamard differentiable tangentially to continuous functions \citep{vdVaart1998}, with derivative
\[
h\mapsto -\frac{h(q_a(\tau))}{f_a(q_a(\tau))}.
\]
Applying this derivative to the CDF influence function in Theorem \ref{thm:eif} gives the displayed influence function for $q_a(\tau)$. The QTE is the difference between the two treatment-specific quantiles, so linearity yields the displayed influence function for $\Delta(\tau)$.
\end{proof}

\begin{proof}[Proof of Proposition \ref{prop:drift}]
Fix $a,y$. By iterated expectation and Definition \ref{def:signal},
\begin{align*}
\Psi_a^y(\bar\eta)
&=E_0\{\bar g_a(y,X)\}\\
&\quad+E_1\left[\bar w_a(X)H_a\{\bar m_a(y,X,S)-\bar g_a(y,X)\}(X)\right]\\
&\quad+E_1\left[\bar w_a(X)H_a\{\bar r_a(X,S)[m_a(y,X,S)-\bar m_a(y,X,S)]\}(X)\right].
\end{align*}
Also $\psi_a(y)=E_0\{g_a(y,X)\}=E_1\{\omega(X)g_a(y,X)\}$ and $g_a=H_am_a$. Let $\Delta_g=\bar g_a-g_a$ and $\Delta_m=\bar m_a-m_a$. Then
\[
H_a(\bar m_a-\bar g_a)=H_a\Delta_m-\Delta_g,\qquad
H_a\{\bar r_a(m_a-\bar m_a)\}=-H_a(\bar r_a\Delta_m).
\]
Substitution yields
\begin{align*}
\Psi_a^y(\bar\eta)-\psi_a(y)
&=E_1\{\omega(X)\Delta_g(y,X)\}
+E_1\{\bar w_a(X)[H_a\Delta_m(X)-\Delta_g(y,X)-H_a(\bar r_a\Delta_m)(X)]\}\\
&=E_1\{[\omega(X)-\bar w_a(X)]\Delta_g(y,X)\}
+E_1\{\bar w_a(X)H_a[(1-\bar r_a)\Delta_m](X)\},
\end{align*}
which proves \eqref{eq:drift}. The two robustness statements are immediate. The norm bound follows by Cauchy--Schwarz, Jensen's inequality for $H_a$, and boundedness of $\bar w_a$.
\end{proof}

\section{Proofs for Estimation and Inference}

\begin{lemma}[Random-denominator empirical expansion]\label{lem:ratio}
Let $F$ be square-integrable with $P_rF=E(F\mid R=r)$. Then
\[
\Pnr{r}F-P_rF
=\Pn\left[\frac{\ind(R=r)}{\pi_r}\{F-P_rF\}\right]+o_p(n^{-1/2}).
\]
If $F=\widehat F$ is cross-fitted and $\norm{\widehat F-F}_{P,2}=o_p(1)$, the same expansion holds with $F$ replaced by $\widehat F$ up to an additional $o_p(n^{-1/2})$ term.
\end{lemma}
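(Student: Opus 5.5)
The plan is to write the conditional empirical mean as a ratio of two unconditional empirical means and then linearize. Since $\Pnr{r}F=\Pn\{\ind(R=r)F\}/\Pn\ind(R=r)$ and $P_rF=P\{\ind(R=r)F\}/\pi_r$, subtracting $P_rF$ inside the numerator gives the exact identity
\[
\Pnr{r}F-P_rF=\frac{\Pn[\ind(R=r)\{F-P_rF\}]}{\widehat\pi_r},\qquad \widehat\pi_r=n_r/n.
\]
The numerator is a centered i.i.d. average, because $E[\ind(R=r)\{F-P_rF\}]=0$ by the definition of $P_rF$. By square-integrability of $F$ and the central limit theorem it is $O_p(n^{-1/2})$. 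Also $\widehat\pi_r=\pi_r+O_p(n^{-1/2})$, and $\pi_r\ge\epsilon$ by Assumption \ref{ass:positivity}. Writing $1/\widehat\pi_r=1/\pi_r+(\pi_r-\widehat\pi_r)/(\pi_r\widehat\pi_r)$, the correction term is a product $O_p(n^{-1/2})\cdot O_p(n^{-1/2})=O_p(n^{-1})$. This proves the first claim.

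For the cross-fitted version, I apply the same exact identity with $\widehat F$ in place of $F$. Here $P_r\widehat F$ is understood fold-wise, as the conditional mean given $R=r$ with the training folds held fixed. For the denominator step it suffices that $\Pn[\ind(R=r)\{\widehat F-P_r\widehat F\}]=O_p(n^{-1/2})$. I then compare this numerator fold by fold with the oracle numerator by considering the difference
\[
\Pn\bigl[\ind(R=r)\{(\widehat F-F)-P_r(\widehat F-F)\}\bigr].
\]
Conditional on the training folds, each fold's contribution is a centered average of i.i.d. terms. Its conditional variance is bounded by $n^{-1}\|\widehat F-F\|_{P,2}^2$ up to constants, using that $P_r$ is a contraction in $L_2(P_r)$ and that $\pi_r\ge\epsilon$. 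Chebyshev's inequality applied conditionally then gives $o_p(n^{-1/2})$. This is the standard cross-fitting argument that underlies Lemma \ref{lem:small_ep} with a singleton class, and the number of folds $K$ is fixed, so summing over folds costs nothing.

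Combining the two steps shows that the $\widehat F$ numerator equals the oracle numerator plus $o_p(n^{-1/2})$. Since the oracle numerator is $O_p(n^{-1/2})$, the same ratio linearization goes through. The result is
\[
\Pnr{r}\widehat F-P_r\widehat F=\Pn\left[\frac{\ind(R=r)}{\pi_r}\{F-P_rF\}\right]+o_p(n^{-1/2}),
\]
which is the stated expansion up to the additional $o_p(n^{-1/2})$ term.

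The main obstacle is the cross-fitting subtlety. Within an evaluation fold, the empirical measure carries a factor $|\mathcal I_k|/n$, and the fold-specific counts $n_{r,k}$ are random. I will handle this by noting that the overall $\Pnr{r}$ still has the single global denominator $n_r$. The only place where conditioning on the training sample matters is therefore the numerator difference above, and conditional independence of the evaluation fold from the training folds is what justifies the conditional Chebyshev bound.
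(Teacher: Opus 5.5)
Your proposal is correct and follows essentially the same route as the paper: write $\Pnr{r}F$ as a ratio with random denominator $\widehat\pi_r$, linearize in that denominator, and handle the cross-fitted case by splitting $\widehat F=F+(\widehat F-F)$ and applying Chebyshev's inequality conditionally on the training folds. The only difference is cosmetic. You use the exact identity $\Pnr{r}F-P_rF=\Pn[\ind(R=r)\{F-P_rF\}]/\widehat\pi_r$ where the paper uses a first-order Taylor expansion of $(u,v)\mapsto u/v$; the paper itself uses your exact identity in the proof of Lemma \ref{lem:uniform_ratio}, and it makes the $O_p(n^{-1})$ remainder explicit.
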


\begin{proof}[Proof of Lemma \ref{lem:ratio}]
Write $\hat\pi_r=\Pn\ind(R=r)$. Then $\Pnr{r}F=\Pn\{\ind(R=r)F\}/\hat\pi_r$. A first-order Taylor expansion of $(u,v)\mapsto u/v$ at $(\pi_rP_rF,\pi_r)$ gives
\begin{align*}
\Pnr{r}F-P_rF
&=\frac{1}{\pi_r}\Pn\{\ind(R=r)F\}-\frac{P_rF}{\pi_r}\Pn\{\ind(R=r)\}+o_p(n^{-1/2})\\
&=\Pn\left[\frac{\ind(R=r)}{\pi_r}\{F-P_rF\}\right]+o_p(n^{-1/2}).
\end{align*}
For cross-fitted $\widehat F$, decompose $\widehat F=F+(\widehat F-F)$. Conditional on the training folds, Chebyshev's inequality gives
\[
\Pn\left[\frac{\ind(R=r)}{\pi_r}\{(\widehat F-F)-P_r(\widehat F-F)\}\right]
=O_p\{n^{-1/2}\norm{\widehat F-F}_{P,2}\}=o_p(n^{-1/2}).
\]
The Taylor remainder is unchanged because $\hat\pi_r-\pi_r=O_p(n^{-1/2})$ and $\norm{\widehat F-F}_{P,2}=o_p(1)$.
\end{proof}

\begin{lemma}[Uniform random-denominator expansion]\label{lem:uniform_ratio}
Let $\widehat{\mathcal F}_k=\{\widehat F_y:y\in\cY\}$ be a fold-specific class independent of the evaluation fold conditional on the training data. Suppose that, with probability tending to one, it is pointwise measurable, has envelope $\widehat F_{e,k}$ with $\|\widehat F_{e,k}\|_{\infty}=O_p(1)$, and satisfies
\[
\sup_Q\log N\{\epsilon\|\widehat F_{e,k}\|_{Q,2},\widehat{\mathcal F}_k,L_2(Q)\}
\le V_n\log(A_n/\epsilon),\qquad 0<\epsilon\le1,
\]
for deterministic $V_n\ge1$ and $A_n\ge e$, with
\[
\frac{V_n\log A_n}{\sqrt n}=O(1).
\]
Then, for $r\in\{0,1\}$,
\[
\sup_{y\in\cY}\left|
\Pnr{r}\widehat F_y-P_r\widehat F_y
-\Pn\left[\frac{\ind(R=r)}{\pi_r}\{\widehat F_y-P_r\widehat F_y\}\right]
\right|=o_p(n^{-1/2})
\]
provided either
\[
\sup_{y\in\cY}\left|\Pn\left[\ind(R=r)\{\widehat F_y-P_r\widehat F_y\}\right]\right|=o_p(1),
\]
or, more primitively,
\[
\sup_{y\in\cY}|P_r\widehat F_y|=O_p(1)
\quad\text{and}\quad
\frac{V_n\log A_n}{n}=o(1).
\]
The same statement holds for finite sums of such fold-specific classes over a fixed number of cross-fitting folds. The bounded-envelope condition may be replaced by an appropriate higher-moment tail condition together with the corresponding maximal inequality.
\end{lemma}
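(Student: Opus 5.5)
We prove Lemma \ref{lem:uniform_ratio} by repeating the scalar argument of Lemma \ref{lem:ratio} uniformly in $y$, with the entropy bound doing the work that Chebyshev's inequality did before. Write $\hat\pi_r=\Pn\ind(R=r)$, so that $\hat\pi_r-\pi_r=O_p(n^{-1/2})$ by the central limit theorem and $\hat\pi_r\ge\pi_r/2$ with probability tending to one. Put $D_y=\ind(R=r)\{\widehat F_y-P_r\widehat F_y\}$. Because $P\{\ind(R=r)\widehat F_y\}=\pi_rP_r\widehat F_y$, we have the exact algebraic identity
\[
\Pnr{r}\widehat F_y-P_r\widehat F_y=\frac{\Pn D_y}{\hat\pi_r}
=\frac{\Pn D_y}{\pi_r}+\Pn D_y\Big(\frac1{\hat\pi_r}-\frac1{\pi_r}\Big).
\]
The first term is exactly the claimed leading term. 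So the whole statement reduces to showing that $\sup_y|\Pn D_y|\,|\hat\pi_r^{-1}-\pi_r^{-1}|=o_p(n^{-1/2})$. The second factor is $O_p(n^{-1/2})$, so it suffices to have $\sup_y|\Pn D_y|=o_p(1)$. That is precisely the first sufficient condition in the statement, so under that condition the proof is already complete.

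For the primitive alternative, we need to show $\sup_y|\Pn D_y|=o_p(1)$ from the entropy assumptions.

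1. We work conditionally on the training folds. There $\widehat{\mathcal F}_k$ is a fixed class, independent of the evaluation fold, with an $O_p(1)$ envelope.

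2. The class $\{\ind(R=r)(\widehat F_y-c_y)\}$, where $c_y=P_r\widehat F_y$ is bounded uniformly by $\sup_y|P_r\widehat F_y|=O_p(1)$, inherits the VC-type entropy bound. The reason is that multiplying by the fixed indicator and subtracting a bounded constant raise $V_n$ and $\log A_n$ by at most constant factors; the constants form a one-dimensional bounded family.

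3. Since $P D_y=0$, we have $\Pn D_y=(\Pn-P)D_y$. We then apply a standard maximal inequality for VC-type classes with bounded envelope (e.g., the Chernozhukov--Chetverikov--Kato bound, taking $\sigma$ equal to the envelope scale). This gives
\[
E\sup_y|(\mathbb P_{n,k}-P)D_y|\lesssim\sqrt{\frac{V_n\log A_n}{n}}+\frac{V_n\log A_n}{n}.
\]
This is $o(1)$ under $V_n\log A_n/n=o(1)$.

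4. Markov's inequality conditional on the training data, followed by dominated convergence for conditional probabilities, converts this bound into the unconditional $o_p(1)$.

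If one wanted to avoid the product argument and bound the fluctuation at rate $O_p(n^{-1/2})$ directly, the same inequality together with $V_n\log A_n/\sqrt n=O(1)$ would give $\sup_y|\Pn D_y|=O_p(\sqrt{V_n\log A_n/n})$, which is stronger than needed.

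For finite sums over folds, we note that $\Pn$ is a weighted average of the fold empirical measures $\mathbb P_{n,k}$ with weights $|\mathcal I_k|/n\to1/K$. A fixed number of $o_p(1)$ suprema sums to $o_p(1)$, and the denominator argument above is unchanged. The replacement of the bounded envelope by a moment condition would use the version of the maximal inequality with an $L_q$ envelope, adding the term $\|\max_i\widehat F_{e,k}(O_i)\|_{2}/n$. This is $o(1)$ under a finite $q>2$ moment.

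The main obstacle is step 3. The class is random, so the maximal inequality must be applied conditionally on the training folds, and its constants must not depend on the realized nuisance fit. We handle this by using only the deterministic sequences $V_n,A_n$ and the event, of probability tending to one, on which the envelope is bounded by a fixed constant $C$. On the complement of that event we simply absorb the probability into the $o_p$ statement.
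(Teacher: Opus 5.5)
Your proof is correct and follows essentially the same route as the paper. Both use the exact identity $\Pnr{r}\widehat F_y-P_r\widehat F_y-\Pn[\ind(R=r)\{\widehat F_y-P_r\widehat F_y\}]/\pi_r=\Pn D_y\,(\hat\pi_r^{-1}-\pi_r^{-1})$ to reduce the claim to $\sup_y|\Pn D_y|=o_p(1)$, which is either assumed or obtained from a VC-type maximal inequality applied conditionally on the training folds; your extra details (evaluating on the fold measure $\mathbb P_{n,k}$ and the conditional Markov/dominated-convergence step) only make explicit what the paper leaves implicit.
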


\begin{proof}[Proof of Lemma \ref{lem:uniform_ratio}]
Write $\hat\pi_r=\Pn\ind(R=r)$, $\mu_y=P_r\widehat F_y$, and
\[
A_y=\Pn\{\ind(R=r)(\widehat F_y-\mu_y)\}.
\]
Then $\Pn\{\ind(R=r)\widehat F_y\}=A_y+\mu_y\hat\pi_r$, and hence the ratio identity is exact:
\begin{align*}
\Pnr{r}\widehat F_y-P_r\widehat F_y
-\Pn\left[\frac{\ind(R=r)}{\pi_r}\{\widehat F_y-P_r\widehat F_y\}\right]
&=A_y\left(\frac1{\hat\pi_r}-\frac1{\pi_r}\right).
\end{align*}
Since $\hat\pi_r-\pi_r=O_p(n^{-1/2})$ and $\pi_r$ is bounded away from zero, it suffices to prove $\sup_y|A_y|=o_p(1)$. This is assumed in the first route. Under the primitive entropy route, conditional on the training data, the centered class $\{\ind(R=r)(\widehat F_y-P_r\widehat F_y):y\in\cY\}$ has the same entropy order and a bounded envelope with probability tending to one. The maximal inequality for VC-type classes gives
\[
E\left[\sup_y |(\Pn-P)\{\ind(R=r)(\widehat F_y-P_r\widehat F_y)\}|\mid\text{training}\right]
\lesssim \sqrt{\frac{V_n\log A_n}{n}}+\frac{V_n\log A_n}{n},
\]
up to envelope constants, which is $o_p(1)$ when $V_n\log A_n=o(n)$. The centered class has mean zero by definition, so the left-hand side controls $\sup_y|A_y|$. Therefore $\sup_y|A_y|=o_p(1)$ and the displayed expansion follows. Averaging over a fixed number of folds preserves the order.
\end{proof}

\begin{proof}[Proof of Proposition \ref{prop:closed_convex}]
Given nuisance estimates, \eqref{eq:estimator} is a finite sum of observed quantities. For the implementation class in Definition \ref{def:convex}, and for the finite-dimensional version in Definition \ref{def:sieve_learners}, ridge least squares and finite-rank KRR solve normal equations; ridge logistic regression minimizes a convex negative log-likelihood plus a convex quadratic penalty; and entropy balancing under the stated exponential-tilt calibration model has a convex entropy objective with linear constraints. The isotonic step is Euclidean projection onto a closed convex cone intersected with a box, and quantile inversion over a finite grid is an order operation. Thus the recommended implementation uses only closed-form or convex subproblems. This proposition does not assert convexity or rate validity for arbitrary flexible learners outside Definition \ref{def:sieve_learners}.
\end{proof}

\begin{proof}[Proof of Theorem \ref{thm:fixed_al}]
Fix $a,y$ and suppress them from notation. Let
\[
B(O)=\frac{\omega(X)\ind(A=a)}{e_a(X)}\{m(y,X,S)-g(y,X)\}
+\frac{\omega(X)\ind(A=a)M}{e_a(X)\rho_a(X,S)}\{Z_y-m(y,X,S)\},
\]
so that $E_1B=0$. Let $\widehat B$ be the same expression with estimated nuisances. The estimator is $\widehat\psi=\Pnr{0}\widehat g+\Pnr{1}\widehat B$. Add and subtract oracle terms:
\[
\widehat\psi-\psi=(\Pnr{0}g-E_0g)+\Pnr{1}B+D_n+S_n,
\]
where $D_n=E_0\widehat g+E_1\widehat B-\psi$ and
\[
S_n=(\Pnr{0}-E_0)(\widehat g-g)+(\Pnr{1}-E_1)(\widehat B-B).
\]
Proposition \ref{prop:drift} and \eqref{eq:productrate} imply $D_n=o_p(n^{-1/2})$. By cross-fitting, conditional on training folds, the evaluation-fold empirical processes are independent of nuisance estimates. Conditional Chebyshev's inequality and the $L_2$ signal convergence in Assumption \ref{ass:rates_fixed} give $S_n=o_p(n^{-1/2})$, with Lemma \ref{lem:ratio} handling random denominators. Therefore
\[
\widehat\psi-\psi=(\Pnr{0}g-E_0g)+\Pnr{1}B+o_p(n^{-1/2}).
\]
Applying Lemma \ref{lem:ratio} to the first term and to the source term, noting $E_1B=0$, gives
\[
\widehat\psi-\psi
=\Pn\left[\frac{\ind(R=0)}{\pi_0}\{g(X)-\psi\}
+\frac{\ind(R=1)}{\pi_1}B(O)\right]+o_p(n^{-1/2}).
\]
The bracketed term is exactly $\phi_a^y(O)$. Multiplication by $\sqrt n$ gives the expansion. The central limit theorem applies because the influence function is square-integrable under positivity and boundedness. Efficiency follows from Theorem \ref{thm:eif}.
\end{proof}

\begin{proof}[Proof of Corollary \ref{cor:pointwise}]
Fix $a$ and write $q=q_a(\tau)$, $\widehat q=\widehat q_a(\tau)$, and $f=f_a(q)$. The local uniform consistency in Assumption \ref{ass:local_quantile} and the positivity of $f$ imply $\widehat q-q=o_p(1)$ by the usual bracketing argument: for each fixed $\varepsilon>0$ small enough that $q\pm\varepsilon\in\mathcal N_a$, differentiability gives $\psi_a(q-\varepsilon)<\tau<\psi_a(q+\varepsilon)$, and the same inequalities hold with $\psi_a$ replaced by $\widehat F_a$ with probability tending to one.

It remains to identify the first-order term. Since $\Pn\phi_a^q=O_p(n^{-1/2})$, for fixed $t>0$ define
\[
y_{n,\pm}=q-\frac{\Pn\phi_a^q}{f}\pm \frac{t}{\sqrt n}.
\]
These points lie in $\mathcal N_a$ with probability tending to one. By differentiability of $\psi_a$ at $q$ and by the local empirical-process equicontinuity in Assumption \ref{ass:local_quantile},
\begin{align*}
\widehat F_a(y_{n,\pm})-\tau
&=\psi_a(y_{n,\pm})-\psi_a(q)+\Pn\phi_a^{y_{n,\pm}}+o_p(n^{-1/2})\\
&=f(y_{n,\pm}-q)+\Pn\phi_a^q+o_p(n^{-1/2})\\
&=\pm \frac{ft}{\sqrt n}+o_p(n^{-1/2}).
\end{align*}
Hence, with probability tending to one, $\widehat F_a(y_{n,-})\le\tau\le\widehat F_a(y_{n,+})$ for all sufficiently large $t$ after an arbitrarily small enlargement of $t$. Monotonicity of $\widehat F_a$ and the generalized-inverse definition therefore give
\[
\widehat q=q-\frac{\Pn\phi_a^q}{f}+o_p(n^{-1/2}).
\]
Multiplying by $\sqrt n$ gives the first assertion. Subtracting the representations for $a=1$ and $a=0$ gives the QTE representation. The central limit theorem applies to the square-integrable influence function $\phi_{\Delta,\tau}$.
\end{proof}

\begin{proof}[Proof of Lemma \ref{lem:oracle_donsker}]
Assumption \ref{ass:uniform_oracle}(iii) states that $\Phi_a$ is pointwise measurable and $P$-Donsker with a bounded envelope. The functional central limit theorem for Donsker classes therefore gives the displayed convergence. The covariance formula is the covariance of the limiting isonormal process applied to the class $\Phi_a$. Tightness and sample-path continuity with respect to the covariance semimetric follow from the Donsker property and the assumed covariance continuity.
\end{proof}

\begin{proof}[Proof of Lemma \ref{lem:small_ep}]
Condition on the training folds. The class is then fixed relative to the evaluation fold. A standard maximal inequality for classes with entropy $\log N(\epsilon\|G\|_{Q,2},\mathcal G,L_2(Q))\le V_n\log(A_n/\epsilon)$ and $L_2(P)$ radius $\delta_n$, as in empirical-process treatments of VC-type classes and growing-complexity suprema \citep{vdVW1996,ChernozhukovChetverikovKato2014}, gives
\[
E\left[\sup_{f\in\widehat{\mathcal G}_k}\left|\sqrt n(\mathbb P_{n,k}-P)f\right|\mid\text{training}\right]
\lesssim
\delta_n\sqrt{V_n\log(A_n/\delta_n)}+
\frac{V_n\log(A_n/\delta_n)}{\sqrt n},
\]
up to constants depending only on the envelope bound. The displayed entropy-growth condition makes the right-hand side $o(1)$. Markov's inequality yields
\[
\sup_{f\in\widehat{\mathcal G}_k}\left|\sqrt n(\mathbb P_{n,k}-P)f\right|=o_p(1),
\]
and division by $\sqrt n$ proves the claim. Averaging over a fixed number of folds preserves the order.
\end{proof}

\begin{proof}[Proof of Proposition \ref{prop:uniform_linearization}]
Fix $a$. Write the estimator as a fold average. For a generic fold, decompose uniformly in $y$:
\[
\widehat\psi_a(y)-\psi_a(y)
=(\Pn-P)\phi_a^y+D_n(y)+E_n(y)+o_p(n^{-1/2}),
\]
where the random-denominator remainder is $o_p(n^{-1/2})$ uniformly by Lemma \ref{lem:uniform_ratio},
\[
D_n(y)=E\{\widehat\Gamma_{a,k}^y(O)\}-E\{\Gamma_a^y(O)\}
\]
is the drift, and $E_n(y)=(\Pn-P)\{\widehat\Gamma_{a,k}^y-\Gamma_a^y\}$ is the empirical-process perturbation. Proposition \ref{prop:drift} gives, uniformly over $y\in\cY$,
\[
|D_n(y)|\le C r_{w,a,n}r_{g,a,n}+C r_{\rho\mathrm{rat},a,n}r_{m,a,n}=o_p(n^{-1/2}).
\]
The class of perturbations satisfies Lemma \ref{lem:small_ep} by Assumption \ref{ass:uniform_nuisance}, so $\sup_{y\in\cY}|E_n(y)|=o_p(n^{-1/2})$. Combining these bounds over the fixed number of folds proves the uniform expansion.
\end{proof}

\begin{proof}[Proof of Theorem \ref{thm:uniform}]
By Proposition \ref{prop:uniform_linearization},
\[
\sqrt n\{\widehat\psi_a-\psi_a\}=\sqrt n(\Pn-P)\phi_a^{\cdot}+o_p(1)
\]
in $\ell^\infty(\cY)$. Lemma \ref{lem:oracle_donsker} gives weak convergence of the leading empirical process to $\mathbb G_a$. The distributional treatment effect statement follows by applying the same argument jointly to the vector class $\{(\phi_1^y,\phi_0^{y'}):y,y'\in\cY\}$ and using the continuous mapping theorem for subtraction.
\end{proof}

\begin{proof}[Proof of Corollary \ref{cor:uniform_quantile}]
By Theorem \ref{thm:uniform} and the assumed $o_p(n^{-1/2})$ distance between $\widehat F_a$ and $\widehat\psi_a$,
\[
\sqrt n(\widehat F_a-\psi_a)\dto \mathbb G_a
\quad\text{in }\ell^\infty(\cY).
\]
Assumption \ref{ass:quantile} implies that the inverse-CDF map is Hadamard differentiable uniformly on the compact set $\cT$ at $\psi_a$, tangentially to continuous functions, with derivative
\[
h\mapsto -h\{q_a(\cdot)\}/f_a\{q_a(\cdot)\}.
\]
The functional delta method gives the quantile-process limit. The QTE limit follows by subtracting the two treatment-specific quantile limits. The final statement follows from Proposition \ref{prop:grid_projection}, which supplies the required $o_p(n^{-1/2})$ replacement of the continuum estimator by the growing-grid interpolated and, when permitted, projected estimator.
\end{proof}

\begin{proof}[Proof of Corollary \ref{cor:multiplier}]
This is a high-level estimated-process result. Let $\Phi_\Delta=\{\phi_{\Delta,\tau}:\tau\in\cT\}$ be the oracle QTE influence-function class. Corollary \ref{cor:uniform_quantile} identifies the weak limit of $n^{-1/2}\sum_i\phi_{\Delta,\tau}(O_i)$. By Assumption \ref{ass:uniform_oracle} and the continuity and bounded-away-from-zero density condition, $\Phi_\Delta$ is $P$-Donsker with a square-integrable envelope. The multiplier theorem is applied to the empirically centered versions of these classes; centering by $\Pn$ only adds data-dependent constants and preserves the stated entropy and envelope orders.

The stated conditions on $\widehat\Phi_\Delta$ give two ingredients. First,
\[
\sup_{\tau\in\cT}\|\widehat\phi_{\Delta,\tau}-\phi_{\Delta,\tau}\|_{P,2}=o_p(1),
\]
including the effect of replacing $q_a,f_a,$ and nuisance functions by estimates. Second, conditional on the data, $\widehat\Phi_\Delta$ has a VC-type entropy bound and bounded envelope of the same order as the oracle class. The conditional multiplier maximal inequality applied to the centered difference class therefore yields
\[
\sup_{\tau\in\cT}\left|\frac1{\sqrt n}\sum_{i=1}^n\xi_i\Big[\{\widehat\phi_{\Delta,\tau}(O_i)-\Pn\widehat\phi_{\Delta,\tau}\}-\{\phi_{\Delta,\tau}(O_i)-\Pn\phi_{\Delta,\tau}\}\Big]\right|=o_p(1)
\]
conditionally in probability. The multiplier central limit theorem for the centered oracle Donsker class gives conditional weak convergence of
\[
\frac1{\sqrt n}\sum_{i=1}^n\xi_i\{\phi_{\Delta,\tau}(O_i)-\Pn\phi_{\Delta,\tau}\}
\]
to the same Gaussian limit as the original empirical process. Combining the two displays gives conditional weak convergence of $\mathbb Z_n^*$ in $\ell^\infty(\cT)$. The continuous mapping theorem transfers convergence to the supremum norm, and continuity of the limiting supremum distribution at its $(1-\alpha)$ quantile gives the stated simultaneous band validity.
\end{proof}

\begin{proof}[Proof of Proposition \ref{prop:grid_projection}]
Let $I_n\psi_a$ be the linear interpolant of the true grid values $\psi_a(y_{j,n})$. Since $\psi_a$ is continuously differentiable with bounded derivative, $\sup_{y\in\cY}|I_n\psi_a(y)-\psi_a(y)|=O(h_n)=o(n^{-1/2})$. Let $I_n(\Pn\phi_a)$ be the linear interpolant of the empirical-process grid values. The assumed local empirical modulus implies
\[
\sup_{y\in\cY}|I_n(\Pn\phi_a)(y)-\Pn\phi_a^y|=o_p(n^{-1/2}).
\]
For any $y$ between adjacent grid points, linear interpolation expresses $\widetilde\psi_{a,n}(y)-I_n\psi_a(y)-I_n(\Pn\phi_a)(y)$ as the same convex combination of the two endpoint remainders; hence its supremum is bounded by the maximum grid remainder. Combining these three bounds gives the displayed uniform expansion on $\cY$.

If $\widehat F_{a,n}$ is a nondecreasing version satisfying $\sup_{y\in\cY_q}|\widehat F_{a,n}(y)-\widetilde\psi_{a,n}(y)|=o_p(n^{-1/2})$, the same expansion holds for $\widehat F_{a,n}$ on $\cY_q$. The inverse-map limits then follow from the pointwise and uniform Bahadur arguments used in Corollaries \ref{cor:pointwise} and \ref{cor:uniform_quantile}, since $f_a$ is bounded away from zero on the quantile region.

On a fixed grid with strictly increasing true grid values, let $c=\min_j\{\psi_a(y_{j+1})-\psi_a(y_j)\}>0$. If $\max_j|\widehat\psi_a(y_j)-\psi_a(y_j)|<c/2$, then the unprojected grid estimator is already increasing and the isotonic projection is inactive. On a growing grid, adjacent true increments can be smaller than the stochastic grid error, so monotonicity of the raw interpolant need not follow from the first display alone. For any grid, isotonic regression is Euclidean projection onto a closed convex cone intersected with a box, so it is nonexpansive in Euclidean norm. If the projection distance is $o_p(n^{-1/2})$ in sup norm on the quantile region, adding the projection changes neither the CDF expansion nor the inverse-CDF expansion at first order.
\end{proof}

\section{Proofs for Lower-Level Nuisance-Rate Verification}

\begin{proof}[Proof of Lemma \ref{lem:uniform_mrate}]
For each $y$, write the empirical objective as $\mathbb M_{n,y}(\theta)+\lambda_n\|\theta\|_2^2/2$ and the population objective as $\mathbb M_y(\theta)$. The argument is a finite-dimensional sieve M-estimation calculation of the same type used for nonparametric and series estimators \citep{Stone1980,Stone1985,Newey1997}. The first-order condition and a Taylor expansion around $\theta_y^\star$ give
\[
0=\nabla\mathbb M_{n,y}(\theta_y^\star)+\widehat H_{n,y}(\widehat\theta_y-\theta_y^\star)+\lambda_n\widehat\theta_y,
\]
where $\widehat H_{n,y}$ is an empirical Hessian along the segment from $\theta_y^\star$ to $\widehat\theta_y$. Uniform positive definiteness implies $\lambda_{\min}(\widehat H_{n,y})>c/2$ with probability tending to one, uniformly in $y$, because the empirical Hessians are uniformly consistent. For a fixed $y$, the score vector is an average of bounded mean-zero $d_n$-vectors, so its Euclidean norm is $O_p(\sqrt{d_n/n})$. A union bound over $J_n$ grid points and Bernstein's inequality add the maximal term $O_p(\sqrt{\log(eJ_n)/n})$. Thus
\[
\sup_{y\in\cY_n}\|\nabla\mathbb M_{n,y}(\theta_y^\star)-\nabla\mathbb M_y(\theta_y^\star)\|_2
=O_p\left(\sqrt{\frac{d_n+\log(eJ_n)}{n}}\right).
\]
Since $\nabla\mathbb M_y(\theta_y^\star)=0$ and $\|\theta_y^\star\|_2\le B_n$, the penalty contributes $O(\lambda_n B_n)$. Inverting the Hessian gives the coefficient rate. The uniform upper eigenvalue bound on the population Gram matrix converts coefficient error into $L_2$ fitted-function error with only a constant-factor loss. The sieve approximation error is added by the triangle inequality, and Lipschitz interpolation adds $O(h_n)$ for continuum $\cY$.
\end{proof}

\begin{proof}[Proof of Theorem \ref{thm:sieve_rates}]
Apply Lemma \ref{lem:uniform_mrate} to each nuisance regression. For $m_a$, the relevant dimension is $d_{U,n}$ and the index set is $\cY_n$, giving $r_{m,n}$. Under positivity, the validation subsample size for $R=1,A=a,M=1$ is proportional to $n$ with probability tending to one, so the denominator $n$ may be used in the rate up to constants. For $e_a$, there is no $y$ index and the relevant dimension is $d_{X,n}$, giving $r_{e,n}$. For $\rho_a$, the dimension is $d_{U,n}$ and there is no $y$ index, giving $r_{\rho,n}$.

For $\omega$, the calibrated ridge logistic source-target classifier is a convex M-estimation problem in dimension $d_{X,n}$; the calibration formulation follows the same moment-balancing logic as survey calibration and empirical likelihood \citep{DevilleSarndal1992,QinLawless1994,ImaiRatkovic2014}; the transformation from $\widehat p_0(x)$ to $\widehat\omega(x)$ is Lipschitz on the positivity region after truncation and normalization, so the same finite-dimensional M-estimation argument gives $r_{\omega,n}$ when the classifier model has approximation error $a_{\omega,n}$. Under entropy balancing, Assumption \ref{ass:sieve_regular} restricts the density ratio to the stated exponential-tilt sieve. The convex dual has moment equation
\[
\Pnr{1}\{\omega_\theta(X)b_{X,n}(X)\}-\Pnr{0}b_{X,n}(X)=0
\]
with normalization $\Pnr{1}\omega_\theta=1$. The population moment vanishes at the true or pseudo-true tilt parameter up to approximation error $a_{\omega,n}$, the empirical moment is $O_p(\sqrt{d_{X,n}/n})$, and the Jacobian is the covariance matrix of $b_{X,n}$ under the tilted source law, nonsingular by Assumption \ref{ass:sieve_regular}. Taylor expansion and boundedness of the exponential-tilt map on the coefficient neighborhood give the same rate for the tilt parameter and hence for $\widehat\omega$ in $L_2(P_{X,1})$ when evaluated at new $x$ values.

For $g_a$, nested cross-fitting makes the generated response independent of the second-stage evaluation fold conditional on the first-stage training data. Decompose
\[
\widehat g_a-g_a=(\widehat g_a-g_a^{\widehat m})+(g_a^{\widehat m}-g_a),
\]
where $g_a^{\widehat m}(y,x)=E\{\widehat m_a(y,X,S)\mid R=1,A=a,X=x\}$. The first term is the second-stage ridge regression error and has rate $a_{g,n}+\sqrt{(d_{X,n}+\log(eJ_n))/n}+h_n$. The second term is bounded by the first-stage error because conditional expectation is an $L_2$ contraction:
\[
\sup_y\|g_a^{\widehat m}(y)-g_a(y)\|_{P_{X,1},2}\le \sup_y\|\widehat m_a(y)-m_a(y)\|_{1a,2}=O_p(r_{m,n}).
\]
This proves the stated $r_{g,n}$ rate. Truncation to the positivity interval is nonexpansive up to constants, and the maps $(\omega,e)\mapsto \omega e_a/e$ and $\rho\mapsto\rho/\widehat\rho$ are Lipschitz on the positivity region, so $r_{w,a,n}=O_p(r_{\omega,n}+r_{e,n})$ and $r_{\rho\mathrm{rat},a,n}=O_p(r_{\rho,n})$.
\end{proof}

\begin{proof}[Proof of Corollary \ref{cor:sieve_products}]
Theorem \ref{thm:sieve_rates} gives the nuisance rates entering the drift bound in Proposition \ref{prop:drift}; substituting the displayed product condition gives an $o_p(n^{-1/2})$ drift uniformly in $y$. The one-step signal is a finite algebraic combination of bounded links applied to the sieve fits and the threshold class $\{\ind(Y\le y):y\in\cY_n\}$, followed by interpolation. Its local perturbation class has entropy dimension of order $D_n=d_{X,n}+d_{U,n}+\log(eJ_n)$ and $L_2(P)$ radius of order $r_{\Gamma,n}$. Hence the entropy-growth condition in Assumption \ref{ass:uniform_nuisance} follows from
\[
r_{\Gamma,n}\sqrt{D_n\log n}+D_n\log n/\sqrt n=o(1).
\]
The special cases follow by substituting the corresponding rates. When $D_n$ is bounded this reduces to the usual fixed-entropy DML requirement; when $D_n$ grows, the additional entropy-growth condition is necessary and is not implied by product rates alone.
\end{proof}

\begin{proof}[Proof of Theorem \ref{thm:rkhs_rates}]
For a fixed $y$, standard KRR theory under a source condition and effective-dimension bound \citep{CaponnettoDeVito2007} yields
\[
\|\widehat f_{y,\lambda}-f_y\|_{P,2}
=O_p\left(\lambda^r+\sqrt{\frac{\mathcal N(\lambda)}{N_{\mathrm{eff}}}}\right).
\]
The first term is the Tikhonov regularization bias and the second is the stochastic term. Applying this bound on the grid $\cY_n$ and taking a union bound over $J_n$ grid points gives the additional factor $\sqrt{\log(eJ_n)}$ in the stochastic term. With $\lambda_n\asymp N_{\mathrm{eff}}^{-1/(2r+\alpha)}$ and $\mathcal N(\lambda)\le C\lambda^{-\alpha}$, the bias and stochastic terms balance at order $N_{\mathrm{eff}}^{-r/(2r+\alpha)}\sqrt{\log(eJ_n)}$, up to constants. Since $N_{\mathrm{eff}}/n$ is bounded away from zero with probability tending to one, this is equivalent to the displayed rate. The Lipschitz condition in $y$ and linear interpolation add $h_n$. Clipping probability regressions is nonexpansive in $L_2$ and hence preserves rates for regression-type probability targets.

The product-rate statement follows by substituting the KRR rates for $m$ and $g$ into Proposition \ref{prop:drift} and using the separately verified rates for $(e,\rho,\omega)$. The theorem deliberately does not assert that an arbitrary KRR regression automatically estimates the density ratio $\omega$; a classifier, calibration, or density-ratio loss must separately justify $s_{\omega,n}$. The remaining assertions are algebraic consequences of the product-rate display and the entropy-growth requirement stated in Corollary \ref{cor:sieve_products}.
\end{proof}

\section{Proofs for Efficiency-Gain Results}

\begin{proof}[Proof of Theorem \ref{thm:eff_gain_cdf}]
Suppress $a,y$ and write $\rho(X)=\rho_a^0(X)$ under Assumption \ref{ass:rho_x}. Conditional on $(R=1,A=a,X)$, define
\[
D=m_a(y,X,S)-g_a(y,X),\qquad \varepsilon=Z_y-m_a(y,X,S).
\]
Then $E(D\mid R=1,A=a,X)=0$, $E(\varepsilon\mid R=1,A=a,X,S,M=1)=0$, and $E(D\varepsilon\mid R=1,A=a,X)=0$.

The source component of the surrogate-assisted efficient influence function is
\[
C_S=\frac{\omega(X)\ind(A=a)}{\pi_1 e_a(X)}\left\{D+\frac{M}{\rho(X)}\varepsilon\right\}.
\]
The source component of the no-surrogate benchmark efficient influence function is
\[
C_0=\frac{\omega(X)\ind(A=a)}{\pi_1 e_a(X)}\frac{M}{\rho(X)}(D+\varepsilon).
\]
The target-covariate components of $\phi_a^y$ and $\phi_{a,0}^y$ are identical, and the source components have conditional mean zero given $X$. Hence the variance difference is the integrated difference between the conditional variances of $C_0$ and $C_S$.

Conditional on $(R=1,A=a,X)$,
\begin{align*}
E\left[\left(D+\frac{M}{\rho}\varepsilon\right)^2\middle|X,A=a,R=1\right]
&=\V(D\mid X,A=a,R=1)\\
&\quad+\frac1\rho E\{\V(Z_y\mid X,S,A=a,R=1)\mid X,A=a,R=1\},\\
E\left[\left\{\frac{M}{\rho}(D+\varepsilon)\right\}^2\middle|X,A=a,R=1\right]
&=\frac1\rho\V(D\mid X,A=a,R=1)\\
&\quad+\frac1\rho E\{\V(Z_y\mid X,S,A=a,R=1)\mid X,A=a,R=1\}.
\end{align*}
Their difference is $(\rho^{-1}-1)\V(D\mid X,A=a,R=1)$. Multiplying by the squared outer weight and integrating over $(X,A)$ among source units gives
\[
V_{a,0}(y)-V_a(y)
=\frac1{\pi_1}E_1\left[\frac{\omega^2(X)}{e_a(X)}\frac{1-\rho_a^0(X)}{\rho_a^0(X)}\V\{m_a(y,X,S)\mid R=1,A=a,X\}\right].
\]
Nonnegativity and strictness follow immediately.
\end{proof}

\begin{proof}[Proof of Corollary \ref{cor:eff_gain_quantile}]
By Corollary \ref{cor:eif_quantile}, the efficiency bound for $q_a(\tau)$ is the CDF efficiency bound at $q_a(\tau)$ divided by $f_a^2(q_a(\tau))$. Applying Theorem \ref{thm:eff_gain_cdf} at $y=q_a(\tau)$ proves the first claim. For the QTE, the target-covariate component is unchanged by observing $S$, while the treatment-specific source components for $a=0$ and $a=1$ have disjoint treatment support and conditional mean zero. Thus the variance reduction for the QTE equals the sum of the two treatment-specific quantile variance reductions.
\end{proof}

\section{More Results of Simulation Studies}

\subsection{Experiment 1 with \texorpdfstring{$n=4000$}{n=4000}}\label{app:exp1_n4000}

Table \ref{tab:exp1_appendix_n4000} repeats Experiment 1 with $n=4000$.  The larger sample reduces MSE for the one-step estimators and preserves the same qualitative conclusion as the main table: SA remains the strongest feasible point estimator across all reported quantiles, while IPW remains more variable and Plugin and Source remain biased.

\begin{table}[!htbp]
\centering
\begin{threeparttable}
\caption{Supplementary finite-sample performance for transported QTE estimation, $n=4000$}
\label{tab:exp1_appendix_n4000}
\small
\setlength{\tabcolsep}{5.5pt}
\renewcommand{\arraystretch}{1.12}
\begin{tabular}{cc l r c r r r}
\toprule
$\tau$ & $\Delta_0(\tau)$ & Method & Bias & MSE & RMSE & Coverage & CI length \\
\midrule
0.25 & -3.884 & IPW        &  0.121 & \msegain{1.645}{52.8} & 1.283 & 0.794 & 3.706 \\
0.25 & -3.884 & Oracle     & -0.003 & \mseplain{0.676}      & 0.822 & 0.822 & 2.473 \\
0.25 & -3.884 & Plugin     &  1.914 & \msegain{4.020}{80.7} & 2.005 & \dash & \dash \\
\SArow
0.25 & -3.884 & \textbf{SA} & \textbf{0.141} & \msebest{0.776} & \textbf{0.881} & \textbf{0.800} & \textbf{2.508} \\
0.25 & -3.884 & Source     &  1.761 & \msegain{3.191}{75.7} & 1.786 & \dash & \dash \\
\addlinespace[0.35em]
0.50 & -0.940 & IPW        &  0.119 & \msegain{2.428}{67.4} & 1.558 & 0.822 & 5.144 \\
0.50 & -0.940 & Oracle     & -0.066 & \mseplain{0.505}      & 0.710 & 0.808 & 2.130 \\
0.50 & -0.940 & Plugin     &  2.897 & \msegain{8.736}{90.9} & 2.956 & \dash & \dash \\
\SArow
0.50 & -0.940 & \textbf{SA} & \textbf{0.185} & \msebest{0.792} & \textbf{0.890} & \textbf{0.808} & \textbf{2.647} \\
0.50 & -0.940 & Source     &  2.870 & \msegain{8.335}{90.5} & 2.887 & \dash & \dash \\
\addlinespace[0.35em]
0.75 &  3.150 & IPW        &  0.105 & \msegain{6.503}{83.9} & 2.550 & 0.858 & 7.897 \\
0.75 &  3.150 & Oracle     & -0.028 & \mseplain{0.416}      & 0.645 & 0.814 & 1.709 \\
0.75 &  3.150 & Plugin     &  2.785 & \msegain{7.992}{86.9} & 2.827 & \dash & \dash \\
\SArow
0.75 &  3.150 & \textbf{SA} & \textbf{0.241} & \msebest{1.045} & \textbf{1.022} & \textbf{0.782} & \textbf{2.650} \\
0.75 &  3.150 & Source     &  2.827 & \msegain{8.077}{87.1} & 2.842 & \dash & \dash \\
\bottomrule
\end{tabular}
\begin{tablenotes}[flushleft]
\footnotesize
\item \textit{Notes.} See Table \ref{tab:exp1_main_n2000}. The superscript next to MSE again reports the percentage MSE reduction achieved by SA relative to the corresponding feasible baseline.
\end{tablenotes}
\end{threeparttable}
\end{table}

\FloatBarrier
\subsection{Experiment 2 efficiency diagnostics}\label{app:exp2_efficiency}

This subsection provides diagnostic summaries for Experiment 2 beyond the efficiency ratios reported in Table~\ref{tab:exp2_efficiency_tau50} and Appendix Table~\ref{tab:exp2_appendix_ratios}. 
The ratio tables establish the main efficiency pattern: the advantage of SA over NoS increases with surrogate predictiveness and is largest when the primary outcome is sparsely validated. 
The figures below address a different question: whether those gains are accompanied by acceptable finite-sample calibration and whether they reflect lower dispersion rather than systematic bias. 
Each point in the boxplots corresponds to one design cell, indexed by a quantile level, a validation rate, and a surrogate strength.

Figures~\ref{fig:exp2_a3_coverage_length} and~\ref{fig:exp2_a3_bias_mcsd} complement the efficiency-ratio results rather than duplicating them.  Oracle uses the true nuisance functions under the same observed-data structure as SA.
Full oracle additionally treats all source primary outcomes as observed and is included
only as a full-data lower-bound benchmark.

The coverage plot shows that SA has reasonable finite-sample calibration across the design grid; the slight undercoverage in the most difficult low-validation settings is also visible for the oracle-type one-step estimators and is consistent with finite-sample quantile inversion and density estimation effects. 
The interval-length and MC-SD ratios show the main practical implication of using the surrogate: relative to SA, NoS and IPW require substantially wider intervals and exhibit larger empirical dispersion. 
At the same time, the bias plot indicates that the gain is not produced by shifting the estimator toward the truth in a design-specific way. 
Thus, the diagnostics support the interpretation of Experiment 2 as an efficiency gain from observing and using the surrogate process, rather than as an artifact of bias, model misspecification, or overly narrow standard errors.

\begin{figure}
    \centering
    \includegraphics[width=1\linewidth]{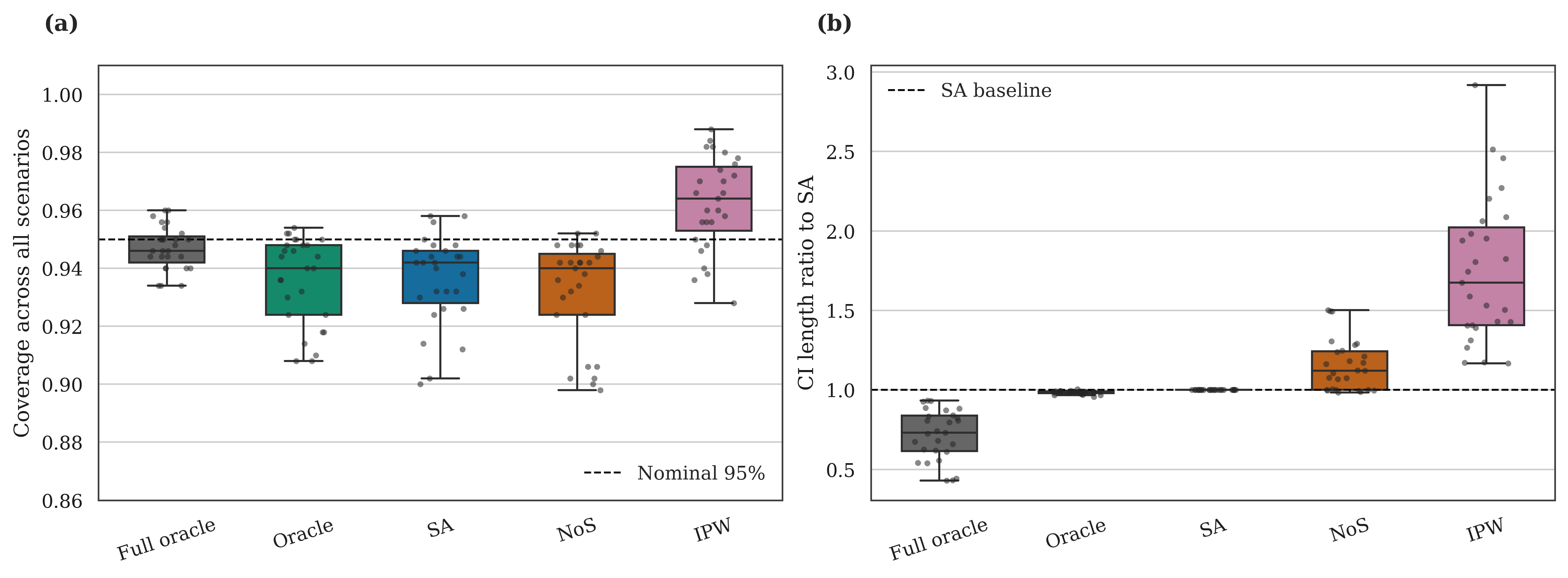}
    \caption{Experiment 2 finite-sample diagnostics: coverage calibration and interval efficiency. 
    Panel (a) reports empirical coverage of nominal 95\% Wald intervals across all Experiment 2 design cells. 
    The horizontal dashed line marks the nominal 95\% level. 
    Panel (b) reports the ratio of average confidence interval length to that of SA in the same design cell. 
    Values above one indicate intervals longer than SA. 
    SA attains reasonable finite-sample coverage while producing shorter intervals than the no-surrogate and IPW estimators; IPW is more conservative largely because its intervals are substantially longer.}
    \label{fig:exp2_a3_coverage_length}
\end{figure}

\begin{figure}
    \centering
    \includegraphics[width=1\linewidth]{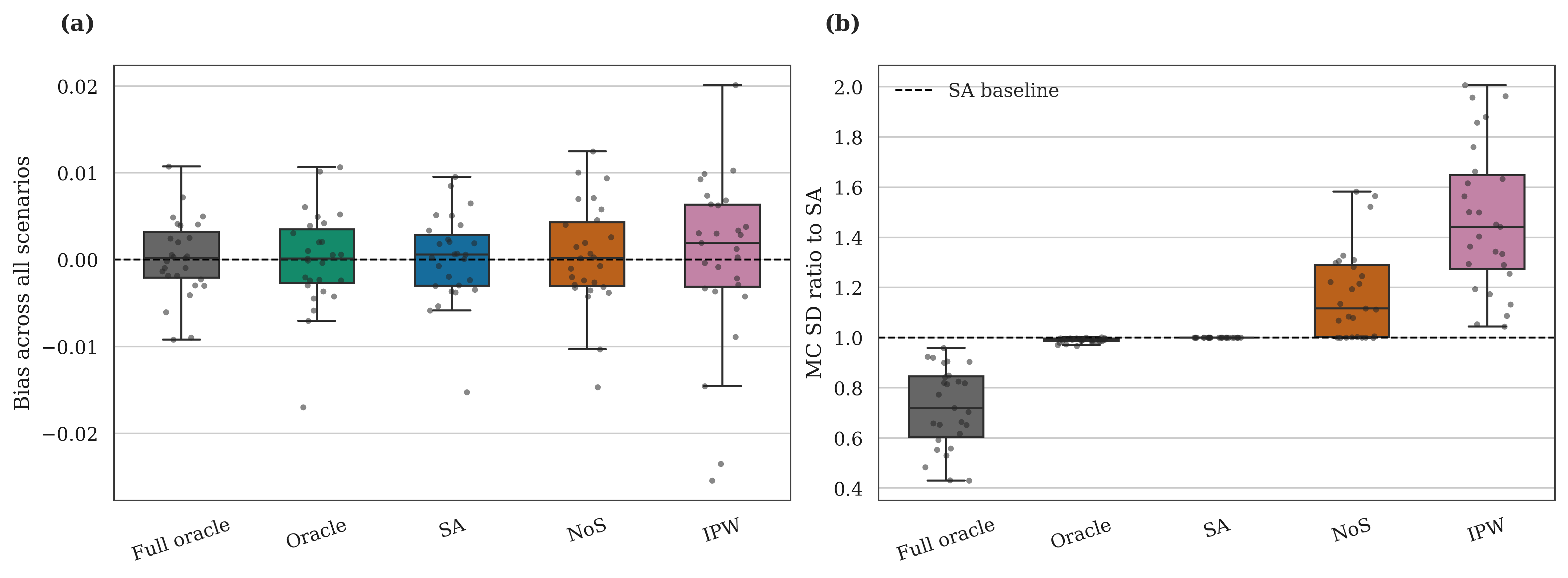}
    \caption{Experiment 2 finite-sample diagnostics: bias and empirical dispersion. 
    Panel (a) reports Monte Carlo bias across all Experiment 2 design cells. 
    Biases are centered near zero for all one-step estimators, indicating that the MSE gains in Table~\ref{tab:exp2_efficiency_tau50} and Appendix Table~\ref{tab:exp2_appendix_ratios} are not driven by systematic bias. 
    Panel (b) reports the ratio of Monte Carlo standard deviation to that of SA in the same design cell. 
    Values above one indicate larger empirical dispersion than SA. 
    The no-surrogate and IPW estimators are systematically more variable than SA, while the infeasible oracle benchmarks provide lower-bound references.}
    \label{fig:exp2_a3_bias_mcsd}
\end{figure}

\begin{table}[!htbp]
\centering
\begin{adjustbox}{max width=\linewidth}
\begin{threeparttable}
\caption{Experiment 2 appendix: efficiency ratios across all quantiles}
\label{tab:exp2_appendix_ratios}

\scriptsize
\setlength{\tabcolsep}{3.0pt}
\renewcommand{\arraystretch}{1.08}

\begin{tabular}{C{0.055\linewidth}C{0.070\linewidth}C{0.075\linewidth}R{0.085\linewidth}R{0.085\linewidth}R{0.095\linewidth}R{0.075\linewidth}R{0.105\linewidth}R{0.105\linewidth}}
\toprule
$\tau$ & $\bar\rho$ & $\lambda_S$ & MSE(SA) & MSE(NoS) & NoS/SA MSE & SA gain & NoS/SA length & Theory NoS/SA \\
\midrule
0.25 & 0.20 & 0.00 & 0.012 & 0.012 & 1.01 & 0.7\% & 0.99 & 1.00 \\
0.25 & 0.20 & 1.00 & 0.023 & 0.040 & 1.71 & 41.5\% & 1.21 & 1.52 \\
0.25 & 0.20 & 2.00 & 0.050 & 0.125 & 2.50 & 60.0\% & 1.49 & 2.26 \\
0.25 & 0.40 & 0.00 & 0.006 & 0.006 & 1.00 & 0.2\% & 1.00 & 1.00 \\
0.25 & 0.40 & 1.00 & 0.012 & 0.017 & 1.43 & 29.9\% & 1.18 & 1.36 \\
0.25 & 0.40 & 2.00 & 0.034 & 0.053 & 1.55 & 35.5\% & 1.29 & 1.68 \\
0.25 & 0.70 & 0.00 & 0.003 & 0.003 & 1.00 & -0.0\% & 1.00 & 1.00 \\
0.25 & 0.70 & 1.00 & 0.009 & 0.011 & 1.24 & 19.3\% & 1.08 & 1.15 \\
0.25 & 0.70 & 2.00 & 0.027 & 0.032 & 1.18 & 15.1\% & 1.12 & 1.25 \\
0.50 & 0.20 & 0.00 & 0.010 & 0.010 & 1.00 & 0.2\% & 0.98 & 1.00 \\
0.50 & 0.20 & 1.00 & 0.023 & 0.037 & 1.64 & 39.2\% & 1.25 & 1.59 \\
0.50 & 0.20 & 2.00 & 0.044 & 0.101 & 2.32 & 56.8\% & 1.50 & 2.19 \\
0.50 & 0.40 & 0.00 & 0.005 & 0.005 & 1.00 & 0.1\% & 1.00 & 1.00 \\
0.50 & 0.40 & 1.00 & 0.013 & 0.019 & 1.48 & 32.3\% & 1.17 & 1.39 \\
0.50 & 0.40 & 2.00 & 0.029 & 0.050 & 1.68 & 40.6\% & 1.31 & 1.71 \\
0.50 & 0.70 & 0.00 & 0.003 & 0.003 & 1.00 & -0.1\% & 1.00 & 1.00 \\
0.50 & 0.70 & 1.00 & 0.008 & 0.009 & 1.16 & 14.1\% & 1.07 & 1.15 \\
0.50 & 0.70 & 2.00 & 0.020 & 0.025 & 1.29 & 22.5\% & 1.12 & 1.27 \\
0.75 & 0.20 & 0.00 & 0.012 & 0.012 & 1.01 & 1.0\% & 1.00 & 1.00 \\
0.75 & 0.20 & 1.00 & 0.026 & 0.044 & 1.72 & 41.8\% & 1.24 & 1.57 \\
0.75 & 0.20 & 2.00 & 0.052 & 0.129 & 2.45 & 59.2\% & 1.50 & 2.24 \\
0.75 & 0.40 & 0.00 & 0.006 & 0.006 & 1.00 & -0.2\% & 0.99 & 1.00 \\
0.75 & 0.40 & 1.00 & 0.014 & 0.021 & 1.49 & 32.8\% & 1.16 & 1.38 \\
0.75 & 0.40 & 2.00 & 0.032 & 0.057 & 1.76 & 43.3\% & 1.28 & 1.68 \\
0.75 & 0.70 & 0.00 & 0.004 & 0.004 & 1.00 & -0.1\% & 1.00 & 1.00 \\
0.75 & 0.70 & 1.00 & 0.010 & 0.012 & 1.15 & 12.8\% & 1.07 & 1.15 \\
0.75 & 0.70 & 2.00 & 0.028 & 0.034 & 1.25 & 19.7\% & 1.11 & 1.26 \\
\bottomrule
\end{tabular}

\begin{tablenotes}[flushleft]
\footnotesize
\item \textit{Notes.} This table repeats the main efficiency comparison at all reported quantiles. Ratios larger than one favor SA over the no-surrogate benchmark.
\end{tablenotes}

\end{threeparttable}
\end{adjustbox}
\end{table}

\FloatBarrier
\subsection{Experiment 3 source-target covariate shift and density-ratio adjustment}\label{app:exp3_omega}

Appendix Figure~\ref{fig:exp3_omega_diagnostics} separates two sources of difficulty in Experiment 3. 
The first is the intrinsic overlap problem induced by stronger source-target covariate shift; the second is the statistical error from estimating the density ratio. As expected, Panel (a) shows that the fitted density ratio becomes less accurate as \(c\) increases. However, Panel (b) shows that replacing the true \(\omega\) by \(\widehat\omega\) has little effect on the MSE of the proposed SA estimator: the estimated-\(\omega\)/true-\(\omega\) MSE ratio remains close to one under both randomized and observational assignment. This supports the interpretation of the main Experiment 3 results: the source-population estimator fails because it targets the wrong covariate law, whereas the transported SA estimator remains stable; the modest deterioration at larger \(c\) reflects reduced overlap rather than a breakdown of the density-ratio estimator.

\begin{figure}[!t]
    \centering
    \includegraphics[width=0.96\linewidth]{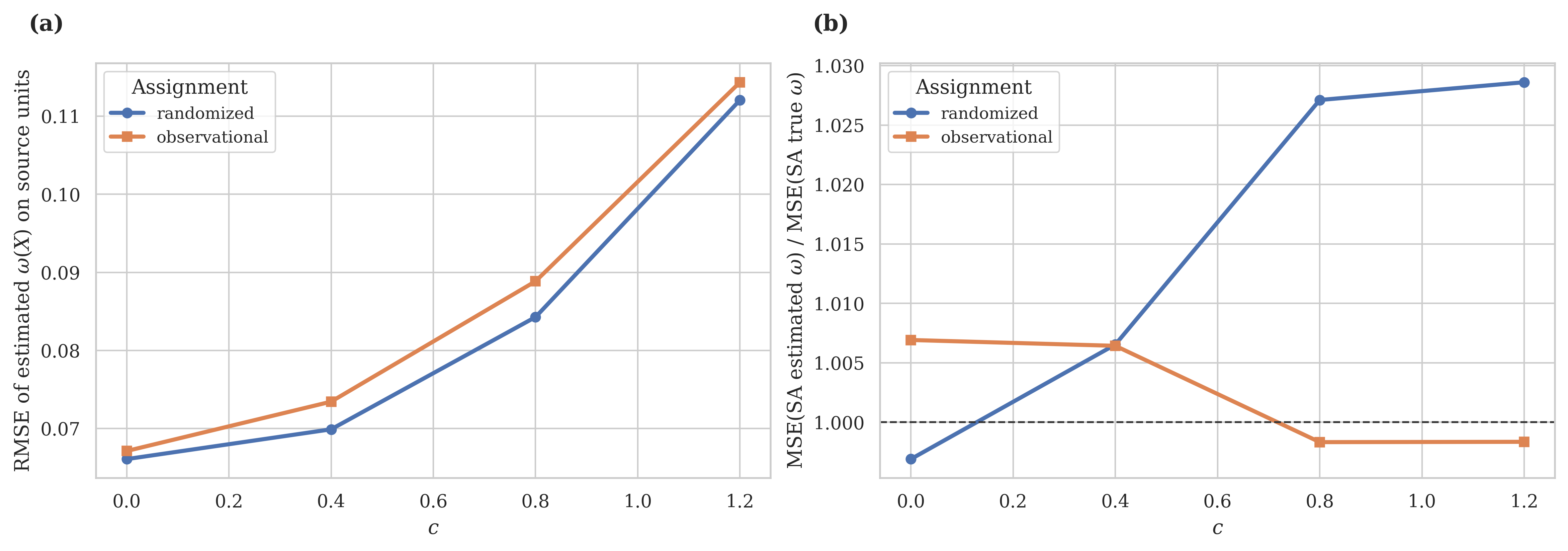}
    \caption{
    Experiment 3 density-ratio diagnostics under source-target covariate shift.
    Panel (a) reports the root mean squared error of the fitted density ratio \(\widehat\omega(X)\) on source units as the transport-shift parameter \(c\) increases.
    The increasing trend confirms that density-ratio estimation becomes harder as the target and source covariate laws move farther apart.
    Panel (b) reports the MSE ratio of the feasible SA estimator using estimated \(\widehat\omega\) to the same SA estimator using the true density ratio \(\omega\).
    Ratios close to one indicate that, over the shift range considered, the feasible density-ratio learner does not materially inflate the MSE of the transported SA estimator.
    Thus the performance degradation observed under larger \(c\) is better interpreted as finite-sample overlap loss rather than density-ratio estimation failure.
    }
    \label{fig:exp3_omega_diagnostics}
\end{figure}

\FloatBarrier
\subsection{Experiment 4 simultaneous band width and isotonic-projection diagnostics}\label{app:exp4_simultaneous}

Appendix Figure~\ref{fig:exp4_width_iso} provides the implementation diagnostics associated with the uniform-band results in Table~\ref{tab:exp4_uniform_grid}. Panel (a) shows the expected decrease in average simultaneous-band width as the sample size grows. The growing grid produces wider bands in finite samples, which is a conservative numerical consequence of evaluating the CDF and quantile process on a finer grid rather than evidence against the growing-grid condition. Panel (b) examines the isotonic projection step. Although the raw one-step CDF estimates are projected to enforce monotonicity before inversion, the scaled distance \(\sqrt n d_{\rm iso}\) decreases with \(n\) for all grid choices. This supports the interpretation that isotonic projection stabilizes the finite-sample CDF estimates without becoming a first-order source of error in the QTE expansion.

\begin{figure}[!t]
    \centering
    \includegraphics[width=0.96\linewidth]{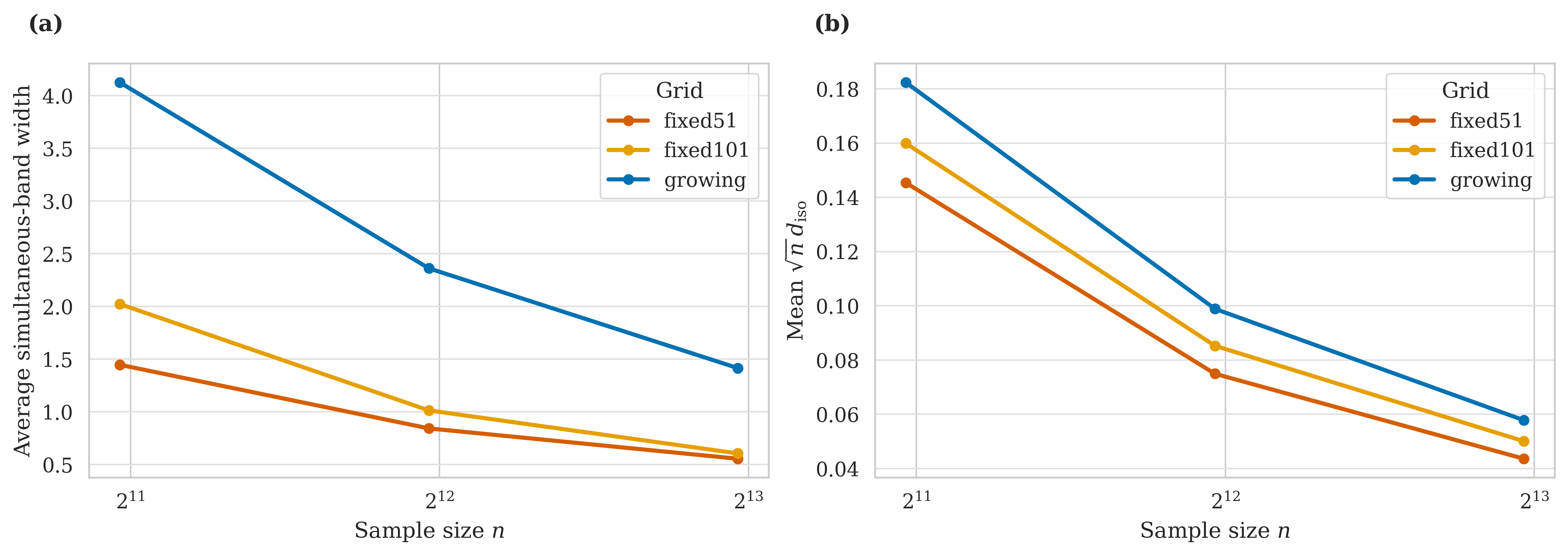}
    \caption{
    Experiment 4 appendix: simultaneous band width and isotonic-projection diagnostics. 
    Panel (a) reports the average width of the simultaneous QTE bands over \(\mathcal T=\{0.10,0.11,\ldots,0.90\}\). 
    Band widths decrease with sample size for all grid choices, while the growing grid is more conservative in finite samples because it resolves the CDF and quantile process on a substantially finer numerical grid. 
    Panel (b) reports the mean scaled isotonic-projection distance \(\sqrt n d_{\rm iso}\), where \(d_{\rm iso}\) is the maximum distance between the raw one-step CDF estimate and its isotonic projection over treatment arms and grid points. 
    The decreasing trend indicates that the monotonicity correction does not dominate the first-order quantile expansion.
    }
    \label{fig:exp4_width_iso}
\end{figure}

\FloatBarrier
\section{ACTG 175 implementation details}\label{app:actg175_details}

This appendix provides implementation details for the ACTG 175 empirical illustration in Section~\ref{sec:realdata}. 
The analysis uses the public ACTG 175 data set \citep{hammer1996trial,actg175data}. 
The primary outcome is 96-week CD4 count, \(Y=\texttt{cd496}\), and \(M=\texttt{r}\) indicates whether \(Y\) is observed. 
The surrogate vector is \(S=(\texttt{cd420},\texttt{cd820})\), the 20-week CD4 and CD8 counts. 
The baseline covariate vector \(X\) contains age, weight, hemophilia status, homosexual activity, history of intravenous drug use, Karnofsky score, prior antiretroviral therapy indicators, antiretroviral history, race, gender, symptomatic status, and baseline CD4/CD8 counts. 
The treatment indicator \(A\) is the binary variable \texttt{treat}, coded as non-zidovudine-only therapy versus zidovudine-only therapy.

\paragraph{Internal transport construction.}
Because ACTG 175 is a randomized trial and does not include a separate external target-only sample, we construct an internal target covariate distribution using baseline variables only. 
Specifically, a fixed baseline-only score is formed from standardized baseline CD4 count, previous antiretroviral therapy duration, age, symptomatic status, and antiretroviral treatment history. 
The intercept is calibrated so that approximately 35\% of subjects are assigned to the target sample. 
Let \(R=0\) denote the constructed target sample and \(R=1\) the source sample. 
The split is stochastic rather than deterministic, preserving overlap between the source and target covariate distributions. 
After the split is fixed, the estimators use only \(X\) from target units. 
Target treatment, surrogate, validation, and outcome variables are not used in estimation.

\paragraph{Estimators.}
The feasible estimators are the same as in the simulation studies. 
SA is the proposed transported surrogate-assisted one-step estimator. 
NoS is the transported no-surrogate estimator that targets the same transported QTE but does not use \(S\). 
IPW is a validation-only transported weighting estimator. 
Source is a negative-control estimator of the source-population QTE and is included to show how the transported target estimand differs from the source-population estimand. 
Oracle estimators are not reported in this real-data analysis because the nuisance functions and the target QTE truth are unknown. 
Likewise, the source-population negative control is reported only as a point-estimation comparison.

\paragraph{Design diagnostics.}
Table~\ref{tab:actg175_design} summarizes the constructed analysis sample. 
The source validation rate is 0.627, so the real data contain a nontrivial missing-primary-outcome component. 
The density-ratio effective sample size is 682.9, corresponding to 48.5\% of the source sample. 
The comparison of predictive \(R^2\) values shows that adding the 20-week CD4/CD8 surrogate vector to the baseline covariates increases the prediction of 96-week CD4 in the source validation sample from 0.313 to 0.504. 
This supports the empirical relevance of the surrogate-assisted component.

\begin{table}[!htbp]
\centering
\begin{threeparttable}
\caption{ACTG 175 implementation diagnostics}
\label{tab:actg175_design}
\small
\renewcommand{\arraystretch}{1.05}
\begin{tabular*}{\textwidth}{@{\extracolsep{\fill}}lrrrr}
\toprule
Diagnostic & Value \\
\midrule
Total sample size & 2139 \\
Source sample size & 1408 \\
Target sample size & 731 \\
Source validation rate & 0.627 \\
Source treatment rate & 0.737 \\
Target fraction & 0.342 \\
Density-ratio ESS & 682.9 \\
Density-ratio ESS / source size & 0.485 \\
\(R^2\), baseline-only prediction of 96-week CD4 & 0.313 \\
\(R^2\), baseline plus 20-week CD4/CD8 prediction & 0.504 \\
Incremental \(R^2\) from surrogate & 0.191 \\
\bottomrule
\end{tabular*}
\begin{tablenotes}[flushleft]
\footnotesize
\item \textit{Notes.}
The \(R^2\) diagnostics are computed in the source validation sample, where 96-week CD4 is observed. 
ESS denotes the effective source sample size induced by the fitted density-ratio weights. 
The target sample contributes only baseline covariates to the transported analysis.
\end{tablenotes}
\end{threeparttable}
\end{table}

\begin{figure}
    \centering
    \includegraphics[width=0.9\linewidth]{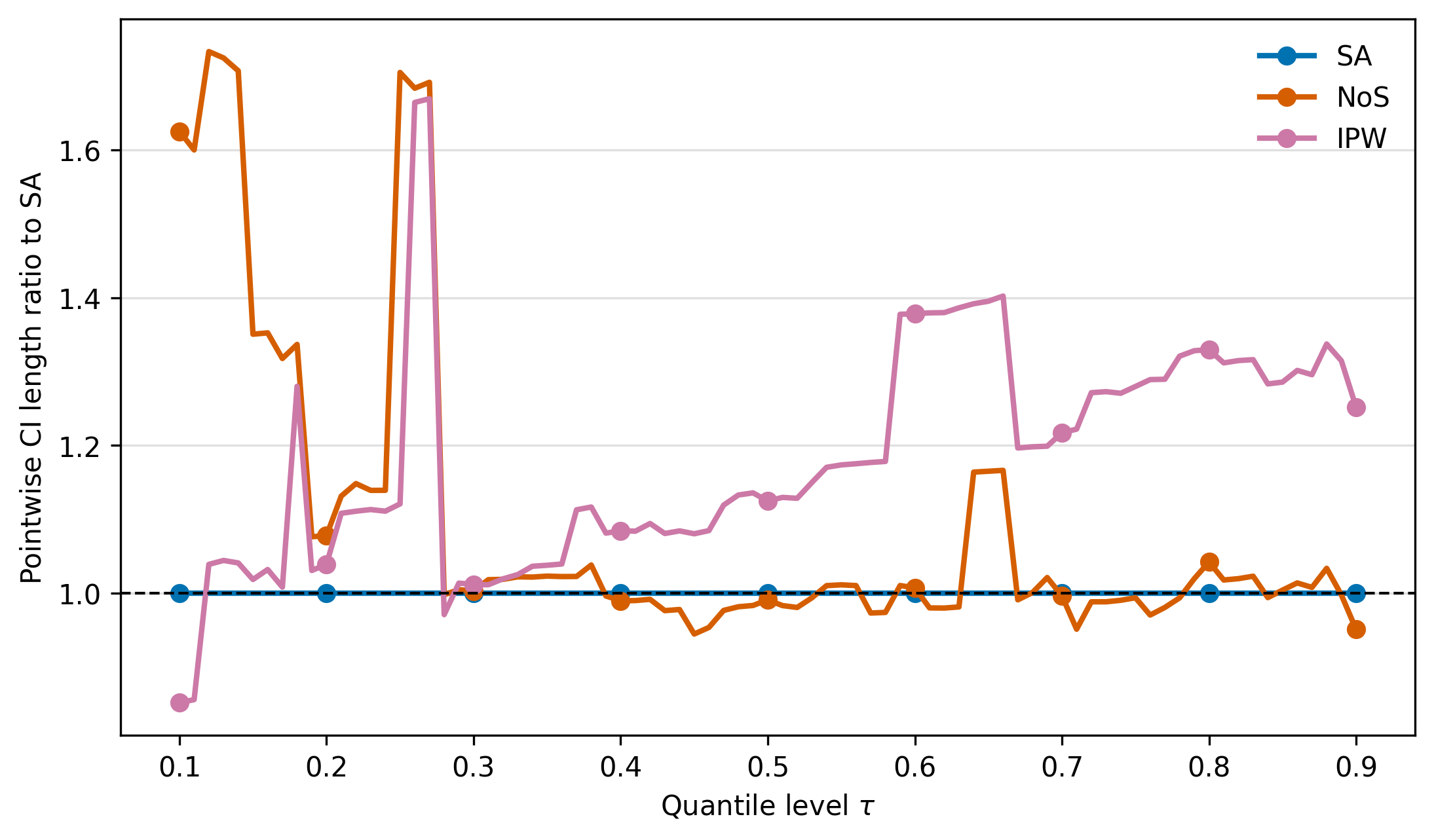}
    \caption{
    ACTG 175 precision diagnostics. 
    The figure reports the pointwise confidence-interval length ratio relative to SA across quantile levels. 
    The SA ratio is normalized to one. 
    Values above one indicate wider intervals than SA. 
    The validation-only IPW estimator is generally less stable, especially in the upper half of the distribution. 
    NoS is comparable to SA over much of the distribution but is substantially less precise at lower quantiles, consistent with the point estimates in Table~\ref{tab:actg175_qte}.
    }
    \label{fig:actg175_precision_ratios}
\end{figure}

\begin{figure}
    \centering
    \includegraphics[width=0.9\linewidth]{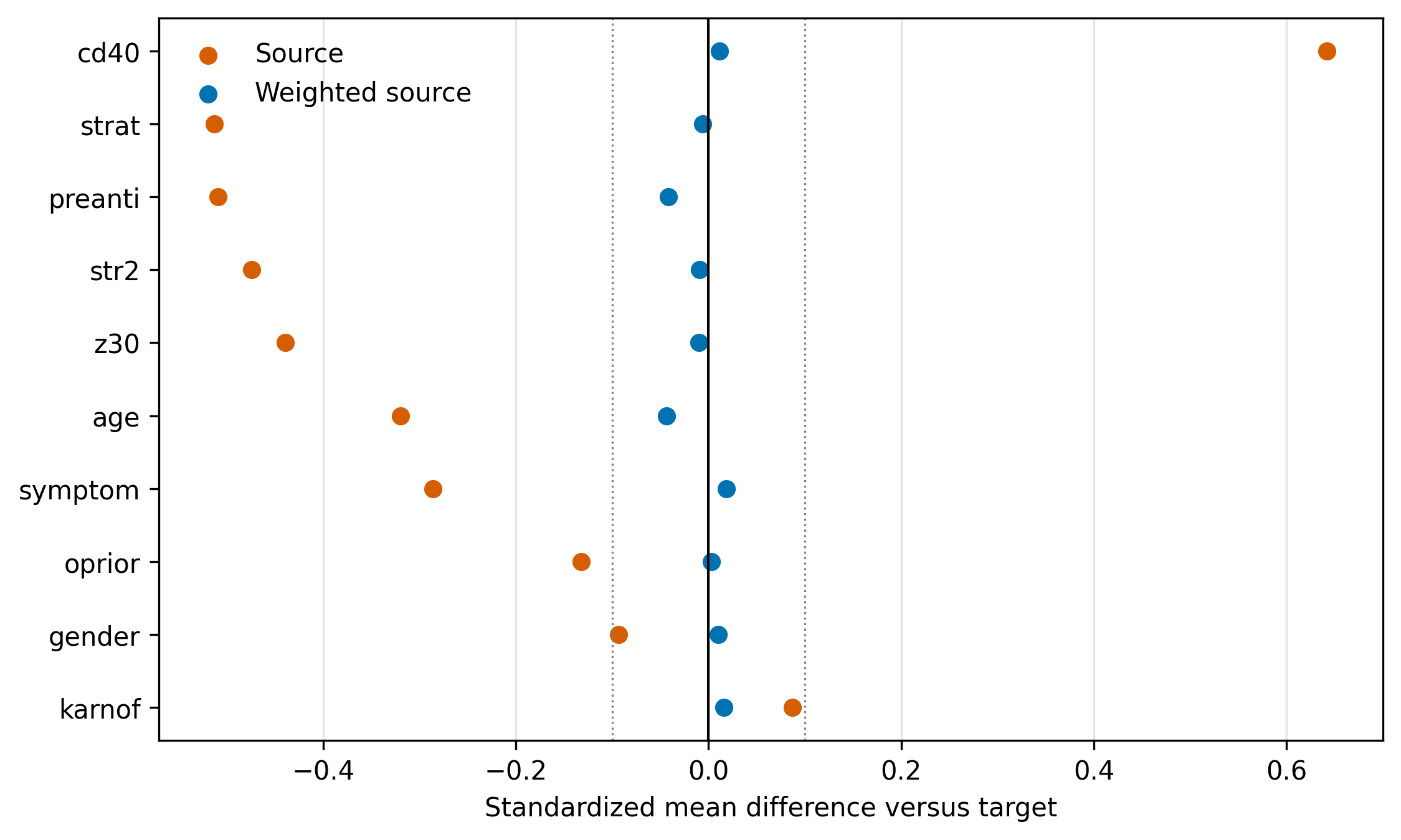}
    \caption{
    ACTG 175 source-target balance diagnostics. 
    The figure reports standardized mean differences between the source and target samples before and after density-ratio weighting. 
    The unweighted source sample differs substantially from the constructed target covariate distribution on several baseline variables, including baseline CD4 and antiretroviral treatment-history variables. 
    Weighting by \(\widehat\omega(X)\) reduces these imbalances, supporting the interpretation of the analysis as a transported target-population QTE rather than a source-population comparison.
    }
    \label{fig:actg175_transport_balance}
\end{figure}

Figures~\ref{fig:actg175_precision_ratios} and~\ref{fig:actg175_transport_balance} provide additional diagnostics for interpreting the empirical illustration. 
The precision-ratio plot shows where the surrogate-assisted estimator yields shorter intervals than the no-surrogate and validation-only alternatives. 
The balance plot shows that the density-ratio component materially changes the source covariate distribution toward the target distribution. 
Together, these diagnostics support the use of the full transported surrogate-assisted estimator in this application: the surrogate is predictive, the primary outcome is incompletely observed, and the source and target covariate laws are not interchangeable.

\clearpage
\bibliographystyle{apalike}

\end{document}